\newcommand{\href}[1]{#1} 
\newcommand{\eq}[1]{(\ref{eq:#1})}
\newcommand{\scn}[1]{Sec.~(\ref{sec:#1})}
\newcommand{\order}[1]{\ensuremath{\mathcal{O}(#1)}}
\newcommand{\diby}[2]{\ensuremath{\frac{\partial #1}{\partial #2}}}
\newcommand{\ddiby}[2]{\ensuremath{\frac{\delta #1}{\delta #2}}}
\newcommand{\equa}[1]{\begin{equation} #1 \end{equation}}
\newcommand{\pb}[2]{\ensuremath{\lf\{#1,#2 \rt\}}}
\newcommand{\myexp}[1]{\ensuremath{\exp\lf\{ #1 \rt\}}}
\newcommand{\mean}[1]{\ensuremath{\lf\langle #1 \rt\rangle }}
\def\lf {\ensuremath{\left}}
\def\rt {\ensuremath{\right}}
\def\ra {\ensuremath{\rightarrow}}
\def\qand {\ensuremath{\quad\text{and}}}
\newcommand{\ta}[2]{\ensuremath{\lf.t_\alpha\rt.^{#1}_{#2}}}
\newcommand{\hn}[1]{\ensuremath{\mathcal H_{(#1)}}}
\newcommand{\wn}[1]{\ensuremath{\omega_{(#1)}}}
\newcommand{\sn}[1]{\ensuremath{S_{(#1)}}}
\newcommand{\ct}[1]{\ensuremath{\cosh^{#1} (t/\alpha)}}
\newcommand{\tht}[1]{\ensuremath{\tanh^{#1} (t/\alpha)}}
\newcommand{\ordn}[1]{\ensuremath{\mathcal O(\epsilon^{#1})}}
\def\pia {\ensuremath{\pi_\alpha}}
\def\wa {\ensuremath{\phi^\alpha}}
\def\lin {\ensuremath{\mathcal{H}_\alpha}}
\def\diff {\ensuremath{\mathcal{H}^a}}
\def\diffadm {\ensuremath{\mathcal{H}^a_\text{ADM}}}
\def\ham {\ensuremath{\mathcal{H}}}
\def\hamadm {\ensuremath{\mathcal{H_\text{ADM}}}}
\def\Na {\ensuremath{N^\alpha}}
\def\rcs {\ensuremath{\mathcal{R}}}
\def\acs {\ensuremath{\mathcal{A}}}
\def\cG {\ensuremath{\mathcal{G}}}
\def\covd {\ensuremath{\mathcal{D}_\phi}}
\def\dw {\ensuremath{\dot{\phi}^\alpha}}
\def\htot {\ensuremath{H_\text{T}}}
\def\dq {\ensuremath{\dot {q}}}
\def\dxi {\ensuremath{\mathcal{D}_{\xi,\phi}}}
\def\absk {\ensuremath{\lf| k \rt|}}
\def\wts {\ensuremath{\widetilde{TS}}}
\def\hg {\ensuremath{\mathcal H_\text{gl}}}
\def\ts {\ensuremath{T_\phi \frac{S}{\sqrt g}}}
\def\hd {\ensuremath{\mathcal H_\text{dual}}}
\newtheorem{proposition}{Proposition}
\newtheorem{principle}{Principle}
\let\origdoublepage\cleardoublepage
\newcommand{\clearemptydoublepage}{%
  \clearpage{\pagestyle{empty}\origdoublepage}}
\let\cleardoublepage\clearemptydoublepage
\begin{document}

\pagestyle{empty}
\pagenumbering{roman}

\begin{titlepage}
        \begin{center}
        \vspace*{1.0cm}

        \LARGE
        {\bf Shape dynamics and Mach's principles: \\ Gravity from conformal geometrodynamics}

        \vspace*{1.0cm}

        \normalsize
        by \\

        \vspace*{1.0cm}

        \Large
        Sean Gryb \\

        \vspace*{3.0cm}

        \normalsize
        A thesis \\
        presented to the University of Waterloo \\ 
        in fulfillment of the \\
        thesis requirement for the degree of \\
        Doctor of Philosophy \\
        in \\
        Physics \\

        \vspace*{2.0cm}

        Waterloo, Ontario, Canada, 2011 \\

        \vspace*{1.0cm}

        \copyright\ Sean Gryb 2011 \\
        \end{center}
\end{titlepage}

\pagestyle{plain}
\setcounter{page}{2}

\cleardoublepage 

  \noindent
I hereby declare that I am the sole author of this thesis. This is a true copy of the thesis, including any required final revisions, as accepted by my examiners.

  \bigskip
  
  \noindent
I understand that my thesis may be made electronically available to the public.

\cleardoublepage


\begin{center}\textbf{Abstract}\end{center}

We develop a new approach to classical gravity starting from Mach's principles and the idea that the local shape of spatial configurations is fundamental. This new theory, \emph{shape dynamics}, is equivalent to general relativity but differs in an important respect: shape dynamics is a theory of dynamic conformal 3--geometry, not a theory of spacetime. Equivalence is achieved by \emph{trading} foliation invariance for local conformal invariance (up to a global scale). After the trading, what is left is a gauge theory invariant under 3d diffeomorphisms and conformal transformations that preserve the volume of space. There is one non--local global Hamiltonian that generates the dynamics. Thus, shape dynamics is a formulation of gravity that is free of the local problem of time. In addition, the symmetry principle is simpler than that of general relativity because the local canonical constraints are linear and the constraint algebra closes with structure constants. Therefore, shape dynamics provides a novel new starting point for quantum gravity. Furthermore, the conformal invariance provides an ideal setting for studying the relationship between gravity and boundary conformal field theories.

The procedure for the trading of symmetries was inspired by a technique called \emph{best matching}. We explain best matching and its relation to Mach's principles. The key features of best matching are illustrated through finite dimensional toy models. A general picture is then established where relational theories are treated as gauge theories on configuration space. Shape dynamics is then constructed by applying best matching to conformal geometry. We then study shape dynamics in more detail by computing its Hamiltonian perturbatively and establishing a connection with conformal field theory.

\cleardoublepage


\begin{center}\textbf{Acknowledgements}\end{center}

I would like to thank Henrique Gomes and Tim Koslowski. It has been both a pleasure and a great learning experience to work with you on developing shape dynamics.

We would be lost without teachers. In my case, this is particularly true. I have been lucky enough to have worked with two individuals that I consider to be mentors and friends: Julian Barbour and Lee Smolin. Thank you for your leadership, knowledge, and guidance.

There are many people that I have collaborated with during the development of this thesis. Timothy Budd, Flavio Mercati, and Karim Thebault have particularly contributed to my understanding of this work. I am indebted to their expertise. Many thanks to Ed Anderson, Laurent Freidel, Louis Leblanc, Fotini Markopoulou, Sarah Shandera, Rafael Sorkin, Rob Spekkens, Tom Zlosnik, and countless others that I have missed for valuable discussions during the course of my PhD. I would like to particularly thank Hans Westman for introducing me to best matching and Niayesh Afshordi for his gracious support and interest in the relational program. I would also like to extend my gratitude to the administrative staff at the Perimeter Institute, including the Outreach department and the Bistro workers, for helping to create and maintain a unique and inspiring working environment.

Without frequent visits to College Farm, nestled in the rolling hills of North Oxfordshire, or the many scribblings on chalk boards throughout the Perimeter Institute, none of this would have been possible. I am grateful to FQXi for funding my trips to College Farm and NSERC for its support of Perimeter. Research at the Perimeter Institute is supported in part by the Government of Canada through NSERC and by the Province of Ontario through MEDT

Finally, I would like to thank my friends and family. Thanks to Damian Pope and Yaacov Iland for their support and understanding. Thanks to my family: Neal, Laurie, and Dad, for all your love. Thanks to Mom, who will live forever in our hearts. You are my strength.
\cleardoublepage


\begin{center}\textbf{Dedication}\end{center}

To Mom, for showing me what is important and for giving me the strength to be free. You lived more than most could hope and are loved more than any could dream.
\cleardoublepage

\tableofcontents
\cleardoublepage
\phantomsection

\addcontentsline{toc}{chapter}{List of Tables}
\listoftables
\cleardoublepage
\phantomsection		

\addcontentsline{toc}{chapter}{List of Figures}
\listoffigures
\cleardoublepage
\phantomsection		


\pagenumbering{arabic}

%
%
\chapter{Introduction}

I recently spoke to a group of grade 9 students at a local high school about the wonders of math and physics. Their teacher, who is a good friend of mine, had just assigned them their Problem of the Day, which was to identify the most ``square'' object out of a collection of differently shaped rectangles. Their first task was to rank the objects in order of ``squareness'', then to find a mathematical criterion for determining the ``squareness'' of an object. This thesis is a general solution to that problem.

What is truly remarkable is that the solution to such a simple problem leads to a new theory of gravitation that is equivalent to general relativity (GR) but has different symmetries that treat local shapes as the irreducible physical degrees of freedom. This new theory represents a fresh starting point for quantum gravity, free of the problem of time. It provides a conformal framework for understanding gravity that is ideally suited for understanding gauge/gravity dualities and a new computational framework for doing cosmology.

Indeed, to achieve such a theory, we will need to solve a slightly more general problem than that posed by my friend to his students: how to quantify the ``difference'' between the \emph{local} shapes formed by configurations of matter in the universe. By ``local'' shapes we mean the shape of objects, treated individually, that are finitely separated in space. The extent of these separations and the definition of the local neighborhood is an issue we will address shortly. Because we want our construction to be as general as possible, we want our definition to work no matter what kind of matter is being considered and what kind of shapes are being formed. This can be achieved by manipulating the space that the imagined shapes live in and by identifying those manipulations that can actually \emph{change} the local shapes. A moment's reflection reveals what manipulations do \emph{not} change the local shapes: coordinate transformations and local rescalings of the spatial metric, or \emph{conformal transformations}. This is because local shape does not depend on position and orientation, which are equivalent to local infinitesimal coordinate transformations, or changes of the local scale. Thus, to solve the general problem that my friend set to his students, we need to find a way to quantify the ``difference'', or ``distance'', between two conformal geometries. Since, mathematically, a metric is what gives a notion of distance, our task is to define a \emph{metric} on the space of conformal geometries, also known as \emph{conformal superspace}, or simply \emph{shape space}.

One may ask what purpose this metric could serve. To address this question, consider the nature of time in classical physics. In our classical experience of the world, time is undoubtedly what flows when genuine change occurs. \emph{Dynamics} is a way of predicting what will change when time flows. Therefore, the key to defining dynamics is identifying a way of quantifying how much genuine change has occurred. We will make a choice, which we will motivate with Mach's principles,\footnote{The plural in ``principles'' is used because we will distinguish between \emph{two} key physically distinct ideas of Mach to motivate our choice.} that places local shapes as the fundamental empirically meaningful quantities in Nature. With this choice, genuine change is given by the change of local shapes. We can then use our metric on shape space to define a dynamics for conformal geometry. Remarkably, it is possible to define a theory of \emph{shape dynamics} in this way that is dynamically equivalent to GR.

\subsubsection{What shape dynamics is}

Shape dynamics is a theory of dynamical conformal geometry that reproduces the known physical solutions of GR. It was discovered by requiring that local shapes represent the physical degrees of freedom of the gravitational field, a requirement directly inspired by Mach's principles. The simplest way to understand the connection between shape dynamics and GR is to think of it as a duality whose mechanism is similar to the mechanism behind $T$--duality in string theory. There exists a kind of \emph{parent} theory, which we feel is more appropriately called a \emph{linking} theory in this context, that is defined on a larger phase space. Shape dynamics and GR represent different gauge fixings of this linking theory and, for that reason, make the same physical predictions.

An equivalent way of understanding the move from GR to shape dynamics is as a dualization procedure that trades one symmetry for another. To understand this trading, we must first understand the symmetry in GR that is traded. GR is a spacetime theory invariant under 4--dimensional coordinate transformations, or diffeomorphisms. However, it is possible to express GR as a theory of dynamic 3--geometry by restricting the spacetime manifold to have a topology $\Sigma \times \mathbb R$, where $\Sigma$ is an arbitrary 3--dimensional manifold. With this topology, the spacetime can be sliced by spacelike hypersurfaces that foliate it. Of course, because of 4d diffeomorphism invariance, there are many choices of foliation leading to the same 4d geometry. In the Hamiltonian formulation of GR, this invariance under refoliations appears as a local gauge symmetry of the theory generated by the Hamiltonian constraint. But the symmetries generated by the Hamiltonian constraint have a split personality: on one hand, they represent local deformations of the spacelike hypersurfaces while, on the other hand, they represent global reparametrizations of the parameter labeling the hypersurfaces. There is, thus, a qualitative difference between the local part of the Hamiltonian constraint, generating refoliations, and the global part, generating reparametrizations. Unfortunately, these two different roles cannot be untangled in general because each choice of foliation requires a different split of the Hamiltonian constraint. This dual nature of the Hamiltonian constraint is the origin of the problem of time.

Shape dynamics can be constructed by trading the refoliation invariance of GR for conformal invariance. The dual nature of the Hamiltonian constraint is resolved by fixing a particular foliation in GR where the split between refoliations and reparametrizations is made. The split personality is resolved by trading \emph{all but} the part of the Hamiltonian constraint that generates global reparametrizations. This means that we must keep the particular linear combination of the Hamiltonian constraint of GR that generates reparametrizations in the foliation we have singled out. In turn, this implies that shape dynamics will be missing one particular linear combination of conformal transformations that corresponds to part of the Hamiltonian constraint we are keeping. This turns out to be the global scale. Since the invariance of GR under 3d diffeomorphisms is untouched (and, as it turns out, unaffected by the trading procedure), we are led to the following picture for shape dynamics: it is a theory with a global Hamiltonian that generates the evolution of the 3--metric on spacelike hypersurfaces. This evolution is invariant under 3d diffeomorphisms and conformal transformations that preserve the global scale. In the case where $\Sigma$ is a compact manifold without boundary, the conformal transformations must preserve the total volume of $\Sigma$. 

The symmetry principle in shape dynamics is considerably cleaner than that of GR. Conceptually, this is clear because refoliation invariance leads, for instance, to relativity of simultaneity, which is more challenging to conceptualize than local scale invariance. More generally, there is no \emph{many--fingered} time in shape dynamics. Time is simply a \emph{global} parameter that labels the spacelike hypersurfaces. Thus, there is no local problem of time. There is still a global problem of time associated with the reparametrization invariance but this problem is considerably easier to deal with. There are also technical simplifications. As we will see, the conformal constraints are \emph{linear} in the momenta in contrast to the Hamiltonian constraints of GR, which are \emph{quadratic} in the momenta. Aside from avoiding operator ordering ambiguities in quantum theory, linear constraints can form Lie algebras. This implies that group representations can be formed simply by exponentiating the local algebra, a drastic improvement over GR. There is a price to pay for these simplifications. The global Hamiltonian of shape dynamics is a non--local functional of phase space. From the point of view of the linking theory, this non--locality is the result of the phase space reduction required to obtain shape dynamics. However, the non--locality is simply a technical challenge and not a conceptual one. In this thesis, we will give some examples where this technical challenge can be overcome.

It cannot be overemphasized that shape dynamics is a gauge theory in its own right and \emph{not} just a gauge fixing of GR. From this perspective, shape dynamics is not a solution to the problem of time of GR but rather a formulation of gravity that is itself free of the problem of time. Although it is true that the first step of the dualization procedure leading to shape dynamics involves fixing a particular spacetime foliation, shape dynamics has a conformal gauge symmetry that GR does not have. This means that there are gauges in shape dynamics that do \emph{not} correspond to the solutions of GR, although they are gauge equivalent. For example, it is always possible to fix a gauge in shape dynamics on compact manifolds without boundary where the spatial curvature is constant. This gauge provides a valuable computational tool that we will exploit to solve the local constraints of shape dynamics.

There are other important differences between shape dynamics and GR resulting from having to fix a foliation to use the dictionary. We will see that the particular foliation that needs to be fixed is such that the spacelike hypersurfaces have constant mean curvature (CMC) in the spacetime in which they are embedded. CMC foliations are used extensively in numerical relativity and are known to foliate many of the physical solutions of GR.\footnote{Precisely which solutions of GR are excluded in shape dynamics is an important but difficult question to answer and is beyond the scope of this thesis. We will, thus, leave precise statements for future investigations.} It is only in CMC gauge where a general procedure for solving the initial value constraints is known to exist and to be unique.\footnote{The same mechanism behind the existence and uniqueness proofs of the initial value problem (see \cite{Niall_73}) is used to prove the existence and uniqueness of the shape dynamics Hamiltonian.} However, not all solutions to GR are CMC foliable. Many of these, like those with closed timelike curves, are clearly unphysical. However, it is still possible that our universe is \emph{not} CMC foliable. Thus, CMC foliability of the universe is a \emph{prediction} of shape dynamics. By excluding potentially unphysical solutions of GR and by providing a cleaner symmetry principle, shape dynamics may have a simpler quantization than GR.

\subsubsection{What shape dynamics may be}

We have just described shape dynamics as a theory of dynamic conformal geometry. The form of the global Hamiltonian used to generate this dynamics is specifically chosen so that theory will make the same predictions as GR. The key new feature introduced by this global Hamiltonian is non--locality. Although, the entire causal structure of GR is encoded in this one global object, the precise interplay between the non--locality of shape dynamics and the causal structure of GR is still a mystery. I believe that unraveling this mystery could be the key to understanding how to quantize gravity.

What we seem to be missing is a further principle to help construct the shape dynamics Hamiltonian without having to rely on GR. It's not clear what such a principle could be but somehow it should impose on shape dynamics the information about the causal structure of the spacetime in the GR side of the duality. In addition, it is reasonable to hope that this new principle will also suggest a way to quantize shape dynamics without a notion of locality. This is a question of utmost importance because of the necessity of a locality principle in quantum and effective field theory.\footnote{It is possible to work in the linking theory which \emph{is} local. However, the linking theory has the same problem of time as general relativity because its constraint algebra contains the hypersurface deformation algebra as a subgroup.} Unfortunately, we do not yet have such a principle.

One possibility, which deserves further exploration, is to revisit the ambiguity in defining \emph{local} shapes mentioned early in this discussion. The original motivation for introducing conformal symmetry was that only local shapes are empirically meaningful. To be more precise, all measurements of length are local comparisons. However, in order to make sense of this observation we need to be precise about how we actually measure the shape degrees of freedom. Concretely, we can imagine that our universe is filled with point particles and that these particles are clumped into small groups that form local shapes. We can make our statement more precise by imagining that we have at our disposal a small system of two (or possibly more) particles that we can use a ruler. If the system we are trying to study is \emph{large} compared with the length of this ruler, then we can define the local shape degrees of freedom as the quantities that can be measured in the system by comparing them to the ruler. Using this definition, it is clear that no local measurement of length made with the ruler will change if we perform a local scale transformation. As we move the ruler from one clump of particles to another, the ruler gets rescaled along with the new clump. If the scale factor varies significantly over the extension of the ruler, then the infinitesimal segments of the ruler will simply get rescaled along with the infinitesimal segments of the system we are comparing to. However, if the system is \emph{small} compared with the size of the ruler, then there may be shape degrees of freedom that cannot be resolved by the ruler: what one ruler sees as two distinct particles a coarser ruler may only see as one. For example, on galactic scales, the solar system is but a point. It is only on smaller scales that one can resolve the planets or, smaller still, the moons, mountains, people, insects, etc... Thus, the shape of the universe changes as the size of the ruler changes. 

The fact that the local shapes resolved in experiments depend strongly on the resolution used to make measurements of these shapes suggests that renormalization group (RG) flow could play an important role in our understanding of shape dynamics. Indeed, it may be possible to exactly mimic the flow of time in shape dynamics by the change in shape resulting from RG flow. Concretely, Hamiltonian flow in shape dynamics could be represented as RG flow in a theory with no time. The conformal constraints of shape dynamics act, in the quantum theory, like the conformal Ward identities of a conformal field theory (CFT). This suggests that shape dynamics may be the ideal theory of gravity for formulating dualities between gravity and CFT. We will show that it is possible to construct a holographic RG flow equation for shape dynamics similar to what is done in standard approaches to the AdS/CFT correspondence. Exploring these connections further may both lead to a deeper understanding of the holographic principle and also may provide a way of \emph{defining} shape dynamics through holographic RG flow in a CFT.

There is one final potentially interesting connection worth noting. Shape dynamics has the same local symmetries as the high energy limit of Ho\v rava--Lifshitz gravity. Interestingly, it is this symmetry that leads to the power counting renormalizability arguments. This is because the conformal symmetry singles out the square of the Cotton tensor as the lowest dimensional term allowed in a quasi--local expansion of the action. However, this term has 6 spatial derivatives compared with the 2 time derivatives in the kinetic part of the action, leading to the $z=3$ anisotropic scaling of the theory. The stability problems of Ho\v rava's theory are avoided in shape dynamics because non--local terms are allowed in the Hamiltonian (this also allows for \emph{exact} equivalence with GR). These stability problems appear in the theory because of the appearance of an extra propagating degree of freedom. This degree of freedom does not appear in shape dynamics because the foliation invariance is simply \emph{traded} for the conformal symmetry. Thus, the local propagating degrees of freedom of shape dynamics are identical to those of GR. Unfortunately, the non--locality also forbids the use of the perturbative power counting arguments to argue that the theory is finite. Nevertheless, it may still be true that the conformal symmetry protects shape dynamics in the UV. Although the non--perturbative renormalizability of GR remains an open question, shape dynamics has a different symmetry. Thus, the question of finiteness of quantum gravity may be more easily addressed in the shape dynamics framework.

\section{Basics}

In Section~(\ref{sec:sd derivation}), we derive shape dynamics using a dualization procedure that we apply to GR. Then, we devote the entire \ref{chap:shape_dynamics}$^\text{th}$ chapter to studying shape dynamics in detail. Nevertheless, it is useful to give an intuitive summary of our results here without attempting to prove anything rigorously.

There are two helpful pictures to keep in mind when trying to understand how shape dynamics is defined. The first is to think of shape dynamics and GR as different theories living on different intersecting surfaces in phase space. The second is to picture them as being different gauge fixings of a larger linking theory. The first is often convenient for conceptualizing while the second is essential for proving things rigorously.

\subsection{Intersecting surfaces}

As has been discussed, shape dynamics is a theory of evolving conformal geometry. The evolution is generated by a global Hamiltonian that has a flow on the constraint surface in phase space generated by 3d diffeomorphism and conformal constraints. The conformal constraints have one global restriction corresponding to the volume preserving condition. This nearly specifies the constraint surface. The remaining task is to find a global Hamiltonian that leads to a dynamics equivalent to that of GR.

In the Hamiltonian formulation, GR is a theory of dynamic geometry whose flow is generated by the diffeomorphism constraints and the usual local Hamiltonian constraints. What can be shown is that the Hamiltonian constraints can be \emph{partially} gauge fixed by the volume preserving conformal constraints. This turns out to be a gauge where the spacelike foliations are CMC. Because of the volume preserving condition, there is still one degree of freedom of the Hamiltonian constraints that is \emph{not} gauge fixed. This is the CMC Hamiltonian.

To be precise, let us call $D\approx 0$ the constraint surface defined by the volume preserving conformal constraints, $\tilde S \approx 0$ the part of the Hamiltonian constraint that is gauge fixed by the CMC condition ($S\approx 0$ would be the full Hamiltonian constraint), $\hg$ the global Hamiltonian of shape dynamics, and $\mathcal H_\text{CMC}$ the CMC Hamiltonian. Geometrically, the fact that $D\approx 0$ is a gauge fixing of $\tilde S \approx 0$ means that the two surfaces have a common intersection that selects a single member of the gauge orbits of $\tilde S \approx 0$. Since the diffeomorphism constraints are common to both theories, they can be trivially taken into account when comparing them. Figure~(\ref{fig:inter surfaces}) represents shape dynamics as living on the surface $D\approx 0$ with flow generated by $\hg$ and GR as living on the surface $\tilde S \approx 0$ with flow generated by $\mathcal H_\text{CMC}$. The common intersection represents GR in CMC gauge.
\begin{figure}
     \begin{center}
	\includegraphics[width=0.55\textwidth]{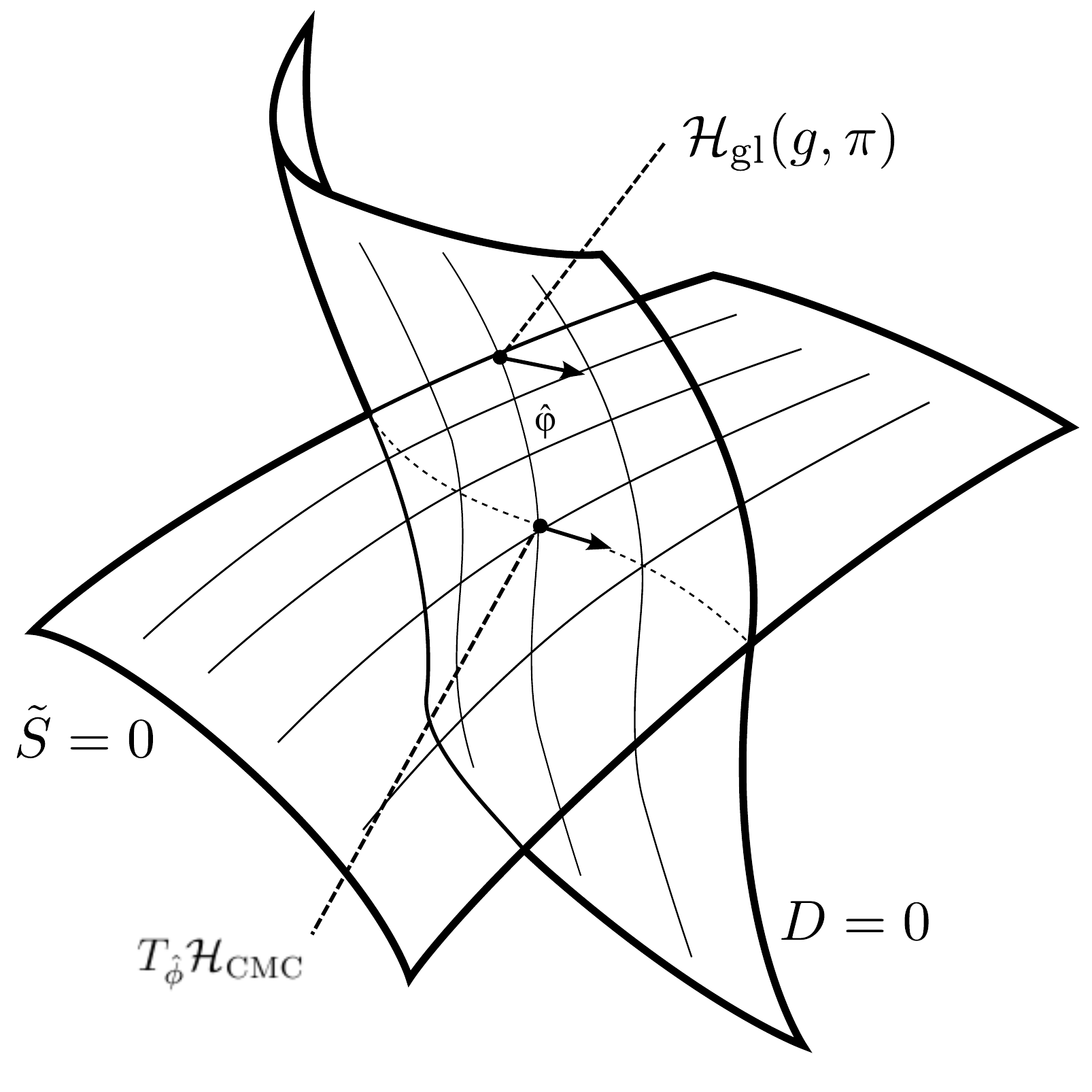}
	\caption{Shape dynamics, represented by $D\approx 0$, intersects GR, represented by $\tilde S \approx 0$. The gauge orbits of both surfaces are shown.}\label{fig:inter surfaces}
     \end{center}
\end{figure}

It is now possible to understand how $\hg$ is defined. Because the constraint surface $D\approx 0$ is integrable, it is possible to move orthogonally to the intersection by moving along the gauge orbits generated by $D$. These are volume preserving conformal transformations in phase space. Thus, for any point on $D\approx 0$, it is possible to find a unique volume preserving conformal transformation $\hat\phi$ that will bring you to the intersection. Then, one can guarantee that $\hg$ is both first class with respect to the $D$'s and generates the same flow as GR by defining it everywhere on $D\approx 0$ to be equal to the value of $\mathcal H_\text{CMC}$ at the intersection. In other words,
\begin{equation}
    \hg = T_{\hat\phi} \mathcal H_\text{CMC},
\end{equation}
where $T_{\hat\phi}$ signifies a volume preserving conformal transformation on phase space. This definition is illustrated in Figure~(\ref{fig:inter surfaces}). This picture is a very useful way to think of the relationship between shape dynamics and GR. We will show it again in Section~(\ref{sec:sd basic}) where we will be much more careful and complete with our definitions.

To understand the dictionary between shape dynamics and GR, note that different gauge fixings of GR are simply different sections of the surface $\tilde S \approx 0$ while different gauge fixings of shape dynamics are different sections of $D\approx 0$. Thus, it is always possible to take a solution of a given theory and use a pair of gauge transformations to express it as an arbitrary equivalent solution of the other theory.

\subsection{Linking theory}

The linking theory provides a powerful tool both for conceptualizing and for rigorously proving many of the statements made in the previous section. The idea is to treat GR and shape dynamics as different gauge fixings of a theory on an enlarged phase space. The dictionary can further be established by choosing the appropriate gauge fixing that brings the solutions to the intersection. Figure~(\ref{fig:link sketch}) shows a diagram of these relations.
\begin{figure}
    \begin{center}
	\includegraphics[width= 0.6\textwidth]{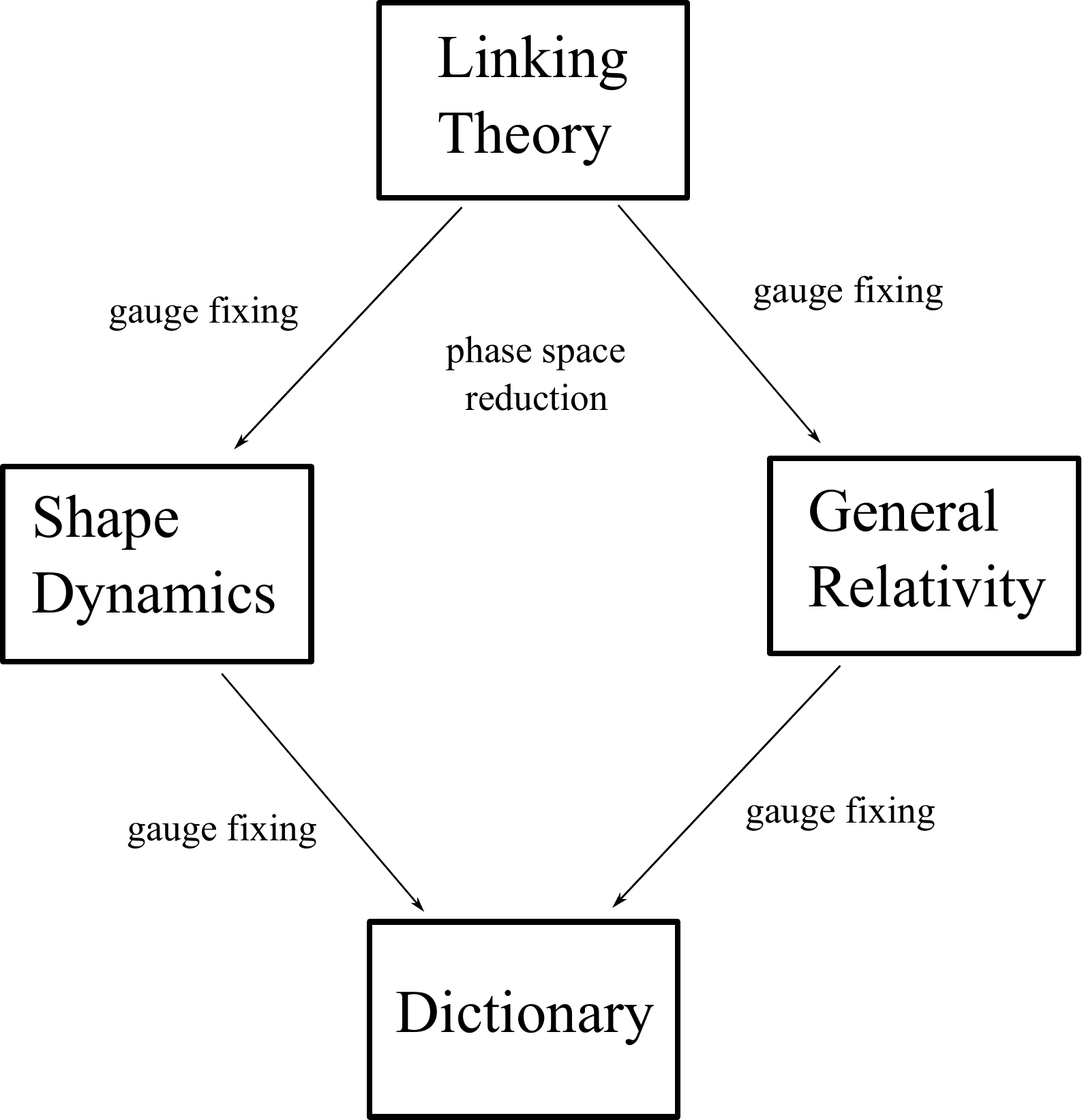}
	\caption{Gauge fixings lead from the linking theory to GR, shape dynamics, and, ultimately, the dictionary relating them.} \label{fig:link sketch}
    \end{center}
\end{figure}
 
The definition of the linking theory can be understood in the following way. Consider the usual diffeomorphism and Hamiltonian constraints of GR. Now consider trivially extending the phase space by including the variable $\phi$ and its conjugate momentum $\pi_\phi$ without changing the original constraints. Adding the first class constraint $\pi_\phi\approx 0$ does nothing to the dynamics of the original variables but ensures that $\phi$ and $\pi_\phi$ are auxiliary. This is the linking theory. It is convenient to perform a canonical transformation on this theory that puts it in a more useful form. This canonical transformation is a volume preserving conformal transformation on phase space (the precise definition is given in Section~(\ref{eq:first link theory})).

We will see that, under this canonical transformation,
\begin{equation}
    \pi_\phi \to \pi_\phi - D.
\end{equation}
 Thus, if we fix a gauge for the transformed Hamiltonian constraint using the gauge fixing condition $\pi_\phi \approx 0$, the first class constraint
\begin{equation}
\pi_\phi - D \approx 0 ~\to ~ D \approx 0,
\end{equation}
and the $\tilde S$'s have been traded for the $D$'s. Performing a phase space reduction leads immediately to shape dynamics. It is easy to see that GR can be obtained from the linking theory by imposing $\phi \approx 0$ as a gauge fixing condition for $\pi_\phi - D \approx 0$, then performing a phase space reduction.

\section{Outline}

One of most promising aspects of shape dynamics is that it is based on simple foundational principles. The road to shape dynamics starts with a careful formulation of Mach's principles. In Chapter~(\ref{chap:bm_foundations}), we start by clearly stating what we believe to be an accurate formulation of Mach's principles that captures the essence of Mach's ideas. Using this, we develop a procedure, called \emph{best matching}, that provides a principle of dynamics that can be used to define relational theories. We first illustrate the procedure by showing how it can be used to construct a simple relational particle model; then, we develop the general procedure. What best matching suggests is that relational theories should be thought of as gauge theories on configuration space and that the dynamics should be given by a geodesic principle on configuration space. We use a historically relevant example, Newton's bucket, to illustrate how best matching eliminates absolute space and time and rigorously implements Mach's ideas.

Chapters (\ref{chap:equiv bm}) and (\ref{chap:non_equivariant_bm}) are devoted to exploring the detailed structure of best matching. In particular, we develop the canonical formulation of best matching. This allows us to clearly identify how best matching removes absolute structures by introducing special constraints. In some cases, these constraints lead to standard gauge symmetries. In other cases, these constraints provide gauge fixings that lead to a dualization procedure. We will separate these distinct cases by exploring the first in Chapter~(\ref{chap:equiv bm}) and by developing the second in Chapter~(\ref{chap:non_equivariant_bm}).

In both cases, we will study general finite dimensional models. These models are important for several reasons: 1) they provide toy models for the geometrodynamic theories we will study later, 2) they can be worked out explicitly, 3) they provide limiting cases of the geometrodynamic models. In some cases, they provide interesting models in their own right. For example, the general finite dimensional models developed in Section~(\ref{sec:finite dim models}) can be used to study mini--superspace cosmologies. However, one of the most important reasons for studying the toy models is to build intuition for the technically more challenging geometrodynamic models we will present later. The simplicity of the toy models brings to light the key aspects of best matching, free of other technical distractions. This will build an arena for conceptualizing that will be useful for the more subtle field theories.

In Chapter~(\ref{chap:equiv bm}), we exploit the new understanding of relational theories provided by best matching to motivate a specific definition of background independence. This definition is then used to study a toy model with a global problem of time. In Chapter~(\ref{chap:non_equivariant_bm}), we develop a dualization procedure for trading symmetries. This provides a good model for the dualization procedure used in geometrodynamics to derive shape dynamics and introduces the idea of a \emph{linking theory}.

In Chapter~(\ref{chap:bm_geo}), we use best matching to construct relational theories where the metric of space is dynamic. We consider three different classes of theories. The first is a na\"ive generalization of best matching as it applies to particle models. We will show that the notion of locality in these models is not restrictive enough to lead to a sensible theory. We will then introduce a modification to the na\"ive best matching principle that leads to a local action. Using this principle it is possible derive GR. Indeed, it is even possible to use our previous definition of background independence to solve the global problem of time by introducing a background global time. Our proposal naturally leads to unimodular gravity. Finally, we will use best matching to construct a conformally invariant geometrodynamic theory that is equivalent to GR. This procedure implements local scale invariance by following the dualization procedure studied in the toy models. The result is shape dynamics, which allows for non--locality but is nevertheless restrictive enough to produce a well defined theory.

We conclude our discussion in Chapter~(\ref{chap:shape_dynamics}) by examining in more detail the structure of shape dynamics. The goal is to understand shape dynamics better as a theory in its own right. We start by describing the global Hamiltonian, then compute it using two different perturbative expansions. The first is an expansion in large volume. This expansion is useful for understanding the connection between shape dynamics and CFT. The second is an expansion of fluctuations about a fixed background. This expansion is useful for doing cosmology using shape dynamics. We end with a calculation of the Hamilton--Jacobi functional in the large volume limit. This result is used to construct the semi--classical wavefunction of shape dynamics and establish a correspondence between shape dynamics and a timeless CFT. Further explorations of this correspondence may provide a deeper understanding of the AdS/CFT correspondence and the meaning of shape dynamics.

\section{History}

Shape dynamics began with the development of best matching in Barbour and Bertotti's original 1982 paper \cite{barbourbertotti:mach} and has taken more than one unanticipated turn since that time. The most important champion of this approach has undoubtedly been Julian Barbour, who has enthusiastically encouraged the development of this idea from its inception to its current form. The best--matching procedure was developed for particle models in \cite{barbour:scale_inv_particles,gergely:geometry_BB1,gergely:geometry_BB2} and many papers by Ed Anderson including \cite{anderson:found_part_dyn,anderson:rel_part_mech_1,anderson:rel_part_mech_2,anderson:triangleland_new,anderson:triangleland_new_2,anderson:triangleland_old}. In geometrodynamics, best matching was used to construct GR in \cite{barbourbertotti:mach,barbour_el_al:rel_wo_rel,anderson:rel_wo_rel_vec,Anderson:cyclic_ADM}. The last two papers present a powerful construction principle for GR.

To the best of my knowledge, the first paper proposing to look for a 3d conformally invariant geometrodynamic theory using best matching was \cite{Barbour_Niall:first_cspv}. In this paper, Niall O'Murchadha proposed non--equivariant best matching and applied it to the full group of conformal transformations in GR. These ideas were elaborated on in \cite{barbour:scale_inv_particles,barbour_el_al:scale_inv_gravity}, then refined in \cite{barbour_el_al:physical_dof} where the volume preserving condition was introduced. Finally, in \cite{Barbour:new_cspv}, the observation was made that the global scale could be replaced by a ratio of volumes. In these papers, the special variation used in best matching (which will be introduced in Section~(\ref{sec:bm variation})) was treated as a kind of gauge fixing condition for the lapse. However, the canonical analysis was incomplete and there were no clues that a dual theory could be constructed from a phase space reduction. The main observation was that a geodesic principle could be defined on conformal superspace that reproduced the predictions of GR in CMC gauge. However, this connection was restricted to a gauge fixing of GR, which we now understand as the intersection of shape dynamics and GR. A summary of these approaches with an excellent description of the conceptual motivations from best matching is given in the short review \cite{JuliansReview}. For an interesting alternative approach to 3d conformal invariance in geometrodynamics, see \cite{Westman:3d_weyl}.

The ideas presented in these papers were inspired by York's solution to the initial value problem \cite{York:york_method_prl,York:cotton_tensor,York:york_method_long}, which used conformal transformations and the CMC gauge of GR to find initial data that solve the Hamiltonian and diffeomorphism constraints of GR. Indeed, the existence and uniqueness theorems developed in \cite{Niall_73} for the solutions of the initial value problem using this approach were a vital inspiration for the uniqueness and existence theorems used to develop shape dynamics.

The current form of shape dynamics was discovered by Henrique Gomes, Tim Koslowski, and myself when we realized that a phase space reduction of what we now call the linking theory would leave a theory invariant under volume preserving conformal transformations. We published our results in \cite{gryb:shape_dyn}. Shortly after, Gomes and Koslowski discovered the linking theory \cite{Gomes:linking_paper}, which significantly helped to clarify the presentation of the dualization procedure. Since then, with input from Flavio Mercati, we have published a calculation of the Hamilton--Jacobi functional in the large volume limit \cite{gryb:gravity_cft}, which proposed a new approach to the AdS/CFT correspondence \cite{LargeReview,Maldacena:ads_cft,Witten,Freidel} and the holographic RG flow equation \cite{Verlinde,Skenderis:holo_RG,Skenderis:holoRG_main,Skenderis:holo_weyl}. There has been much unpublished work on matter coupling, perturbation theory, and Ashtekar variables that has relied on valuable input from Timothy Budd and James Reid (on top of the authors already mentioned). This work should be appearing in the literature shortly. The quantization of $(2+1)$ shape dynamics in metric variables can be found in \cite{Budd:2_plus_1_sd}.

The material of this thesis was based on the work presented in \cite{sg:ym_bm,sg:dirac_algebra,gryb:shape_dyn,gryb:gravity_cft}. However, I have adapted some of the presentation of the results of \cite{gryb:shape_dyn} to include the insights of \cite{Gomes:linking_paper}. In regards to best matching, for pedagogical reasons I have often summarized my own understanding of the procedure based on my reading of the papers above and discussions with Julian Barbour. I hope that this provides a useful new perspective. However, there are new contributions worth noting. The possibility of treating best matching as a gauge theory on configuration space was noticed in \cite{sg:ym_bm} and \cite{Gomes:gauge_theory_riem}. However, a first principles derivation of the best--matching connection and its explicit calculation in toy models is part of a work in preparation by Barbour, Gomes, and myself. I have included these concepts in this thesis. Finally, the complete canonical formulation of equivariant best matching was first presented in \cite{sg:dirac_algebra} and is the main subject of Chapter~(\ref{chap:equiv bm}).

%
%
\chapter{Foundations of best matching}\label{chap:bm_foundations}

During my first visit to Julian Barbour's historic home, College Farm, in Northern Oxfordshire, I asked the inventor of best matching to explain to me the basic idea behind the procedure. He started by drawing two different triangles on his white board and announced: ``the idea behind best matching is to find the \emph{difference} between two different shapes.'' Remarkably, this simple idea is all that is needed to construct a general framework for producing relational models based on Mach's principles that is capable of deriving general relativity and of revealing its conformal dual: shape dynamics.

To see how this is possible, we must proceed step by step. First, we will try to understand the problem that best matching claims to solve. To do this, we will look carefully at a well known example: Newton's bucket. This famous example illustrates the differences between absolute and relative motion and how best matching responds to Newton's arguments for absolute space. After studying the problem, we will show how best matching manages to ``find the \emph{difference} between two shapes.'' We will illustrate this with a simple example of a system of particles moving in 2 dimensions. This example illustrates many of the key features of best matching. We will use it to motivate a general formulation of best matching for finite dimensional systems. Our analysis will suggest that relational theories are best thought of as gauge theories on configuration space. We will see precisely how this beautiful geometric picture emerges.

It will be convenient to distinguish between two different kinds of relational theories: those whose metric on configuration space is constant as we compare different physically equivalent configurations (for reasons that will become clear later, this will correspond to the \emph{equivariant} case) and those whose metric is not constant (this will correspond to the \emph{non--equivariant} case). Equivariant theories are always consistent in the finite dimensional case but non--equivariant theories need extra conditions in order to be consistent. Because of these extra complications, which are crucial to understanding the relation between shape dynamics and general relativity, we will treat non--equivariant theories in a separate section.

\section{Newton's Bucket}

To illustrate the key features of best matching, it is instructive to review an historically relevant example that illustrates the difference between absolute and relative space. The example is commonly know as \emph{Newton's bucket} and was first introduced in Newton's \emph{Principia}. Newton's bucket is a bucket half--filled with water suspended by a rope that can be wound tightly by spinning the bucket around an azimuthal axis. A modern version can be can be crafted from a piece of string and a pickle jar with two holes punched into the lid (see Figure~(\ref{fig:pickel_jar})).
\begin{figure}
    \begin{center}
	\includegraphics[width = 0.3\textwidth]{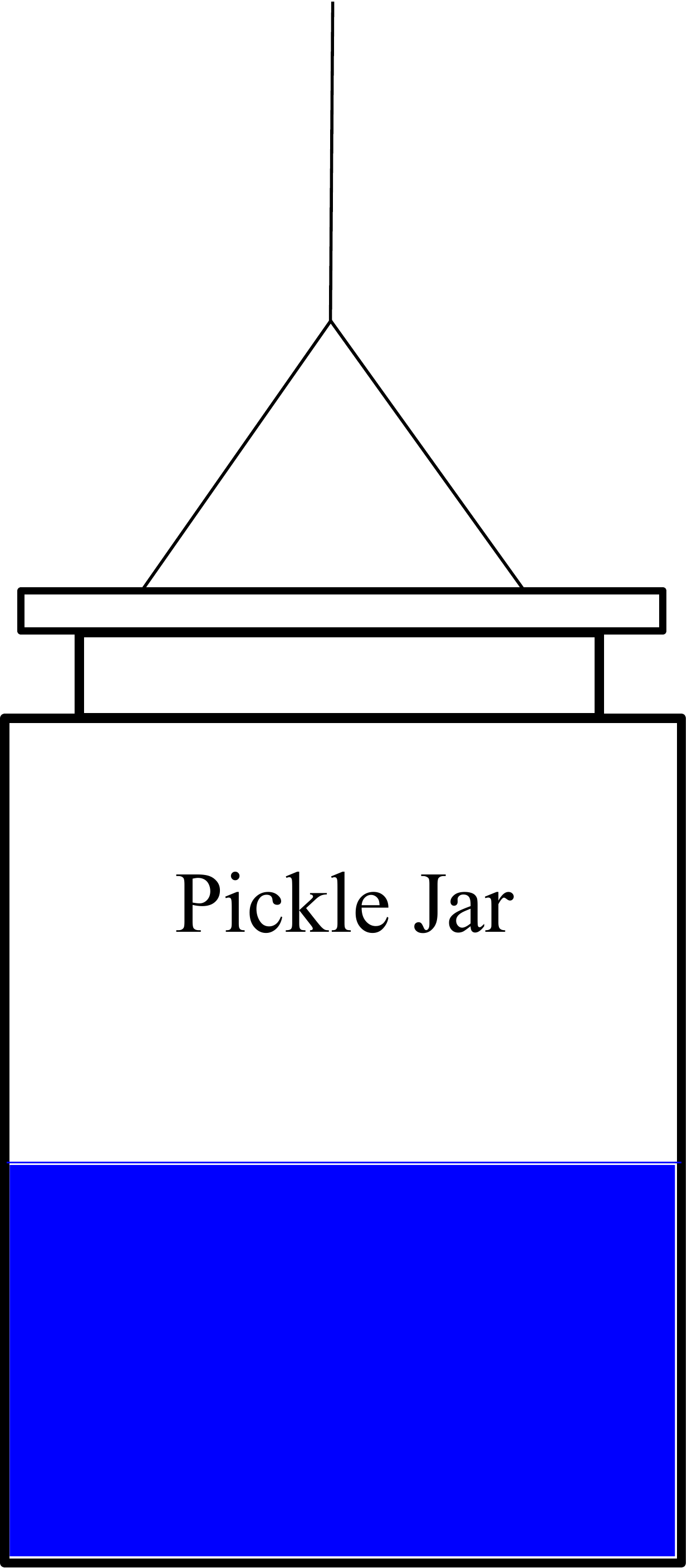}
	\caption{Newton's bucket made with a pickle jar, string, and water.} \label{fig:pickel_jar}
    \end{center}
\end{figure}

The experiment compares the motion of 3 different reference frames: the lab frame, the bucket, and the water. There are 4 simple steps:
\begin{enumerate}
    \item No motion between the lab, bucket, and water.
    \item The bucket is quickly spun so that the water remains static with respect to the lab.
    \item Over time, the bucket and the water spin together.
    \item The bucket is quickly stopped from spinning while the water continues to spin.
\end{enumerate}
The observable phenomenon is the shape of the surface of the water in the bucket, which can be either flat or curved up the walls of the bucket.

It is easy enough to imagine the outcome of this experiment. In the first step, the water will be flat because nothing is happening. In the second step, the water stays flat because it hasn't started to spin yet even though the bucket is. In the third step, the water begins to spin and creeps up the surface of the bucket. In the last step, the water is still curved because it is spinning even thought the bucket has been stopped. Newton uses this to argue that the relative motion between the water and bucket clearly has no impact on the physically observable phenomenon, which is the shape of the surface of the water. It is clear from Table (\ref{tab:newton bucket}), which summarizes the results, that there are always two possible outcomes for each type of relative motion.
\begin{table}\label{tab:newton bucket}
    \begin{center}
	\begin{tabular}{|l|l|l|l|l|}\hline
	    × & 1 & 2 & 3 & 4\\\hline
	    Relative Motion & no & yes & no & yes\\\hline
	    Surface of Water & flat & flat & curved & curved\\\hline
	\end{tabular}
    \end{center}
    \caption{The results of Newton's bucket experiment.}
\end{table}
This implies that the relative motion of the bucket and water does not explain the observed phenomena. Instead it is the only the motion of the water with respect to the lab frame that determines the shape of the water. Newton concludes that the lab is at rest with respect to \emph{absolute space} and that only motions with respect to absolute space are meaningful.

In \emph{The Mechanics} \cite{mach:mechanics}, Mach provides an objection to this argument. He notices that the lab is effectively at rest with respect to the ``fixed stars'' (which we now know to be galaxies). He points out that what Newton's bucket experiment shows is that it is only the \emph{relative} motion of the water with respect to the fixed stars that determines the shape of its surface. But, the observable phenomena should not depend on the relative motion of \emph{only} the bucket and the water but on the relative motion of the bucket and \emph{everything} else in the universe. This obviously includes the fixed the stars that are considerably more massive than the bucket. This extra mass should make the relative motion of the water and the fixed stars more significant than that of the water and bucket, leading to the observed results of the experiment.

Mach, unfortunately, did not provide a precise framework for testing this hypothesis nor did he provide a specific theory that would explain how the massive stars have more impact on the behavior of the water. Nevertheless, the intuitive argument is clear: \emph{all} relative motions between bodies must be considered and those bodies with greater mass have a more significant impact on the overall behavior of the system. This reasoning greatly influenced Einstein and played an important role in the development of GR. Indeed, one could say that Mach predicted the \emph{frame dragging} effects that occur in GR.

In the next sections, we will develop best matching. As we do, it will become clear that best matching provides a precise framework for implementing Mach's explanation of Newton's bucket experiment. The procedure produces a theory that explains exactly how the stars provide the illusion of absolute space. The starting point for this framework is Mach's principles, which we will now state.

\section{Mach's Principles}

It is difficult to find agreement on the exact manner in which to state Mach's principles. There are many different versions that exist in the literature, all based on different interpretations of Mach's writings. Since Mach did not clearly state what his principles are, we have some liberty in how we define them. In this work, we will adapt a definition based on the one carefully outlined in \cite{Barbour:DefMach}. We will distinguish between two different principles: \emph{spatial relationalism} and \emph{temporal relationalism}. These principles originate from one simple idea that I believe to be the core of Mach's principles:
\begin{quote}
    The dynamics of observable quantities should depend only on other observable quantities and no other external structures.
\end{quote}
From this general observation, we can identify two distinct physical principles that realize this idea:
\begin{principle}
    According to Mach, only the spatial relations between bodies matter.
    \begin{quote}
    ``When we say that a body $K$ alters its direction and velocity solely through the influence of another body $K^\prime$, we have inserted a conception that is impossible to come at unless other bodies $A$, $B$, $C$... are present with reference to which the motion of the body $K$ has been estimated.'' \cite{mach:mechanics}
    \end{quote}
    There is no absolute space -- only the spatial relations between these bodies. We will take this to be the principle of {\bf spatial relationalism}.
\end{principle}
\begin{principle}
    For Mach, the flow of time is perceived only through the \emph{changes} of spatial relations. 
    \begin{quote}
    ``It is utterly beyond our power to measure the changes of things by time. Quite the contrary, time is an abstraction, at which we arrive by means of the changes of things... '' \cite{mach:mechanics}
    \end{quote}
    We will refer to the statement that the flow of time should be a measure of change as the principle of {\bf temporal relationalism}.
\end{principle}

The above terminology has been adapted from \cite{Mittelstaedt:machs_2nd} since it clearly distinguishes two very different concepts. One is an ontological statement about what should be observed and how a physical theory should depend on these observables while the other is a definition of time valid for classical systems. Both principles derive from the simple statement that the dynamics of physical quantities should not depend on external structures. As we will see, these concepts of different physical origins manifest themselves differently in the technical description of relational theories such as GR. This distinction will be, thus, important to keep in mind as we develop best matching.

\section{A best--matched toy model}

In this section, we will illustrate the key features of best matching through a simple example. This will motivate the formal constructions of best matching that rigorously implement the principles stated above.

\subsection{Kinematics}

Consider a system of $3$ particles in 2 dimensions. The most general configuration possible in such a system is an arbitrary triangle. Following Mach's first principle, only the spatial relations between these particles are observable. Thus, there are two independent observables in this system and they can be parametrized many different ways. A way to see that there are only two physically meaningful observables is to note that there are only three lengths that can be measured in this system: the length of each side of the triangle. However, since lengths should not be compared to an absolute scale one must use one length as a reference length against which we measure the other two. This reduces the total degrees of freedom to 2. Another way to parametrized the physical degrees of freedom would be to choose the two largest angles. Because all angles must add up to $\pi$, these two angles are sufficient to completely determine the shape of the triangle up to an unobservables scale.

Historically, no satisfactory attempt to construct a dynamical principle in terms of the physical degrees of freedom of a system of particles has been successful (see \cite{barbour:newton_2_mach} for a summary of known attempts, which suffer from anisotropic effective mass). This poses an interesting philosophical question regarding the ubiquity of gauge theories. It is not my intention to address such a philosophical question in this work. Instead, I will simply point out that the only known dynamical principles that lead to sensible particle theories are \emph{not} written in terms of the physically observable quantities but, rather, \emph{redundant} variables. Best matching is a theory of this kind. As an immediate consequence of this redundancy, the implementation of Mach's first principle will require a quotienting of the redundant configuration space in order to indirectly isolate the physical, or \emph{relational}, degrees of freedom. In accordance with standard terminology, we will refer to the quotiented degrees of freedom as the \emph{gauge} degrees of freedom.

We now return to the main question that started this chapter: How can we determine the \emph{difference} between two shapes or, in this case, triangles? As noted above, our strategy will be to make use of Newton's absolute Euclidean coordinates only to quotient these by the unphysical gauge degrees of freedom. First, we will do a counting of (configuration) degrees of freedom to give us an idea of what our gauge group is. There are $3\times 2 = 6$ configuration degrees of freedom describing the motion of 3 particles in 2 dimensions. However, the origin and orientation of the coordinate system used to label the particle positions is completely unphysical. This corresponds to two translational degrees of freedom and one rotation. Also, the size of the triangle is unobservable. This adds up to 4 gauge degrees of freedom leaving 2 physical ones, in agreement with our previous analysis. The gauge group is, thus, the 2d Euclidean group, consisting of rotations and translations, crossed with the group of dilations. The tensor product of these two groups is also a group called the \emph{similarity} group.

\subsection{Matching triangles}

Best matching provides a dynamical procedure for quotienting the Euclidean positions of particles by the similarity group. Consider two snapshots of the 3 particle system, represented by two different triangles in a 2d Euclidean plane. An example of two such triangles is shown in Figure (\ref{fig:two triangles}).
\begin{figure}
    \begin{center} \includegraphics[width=.75\linewidth]{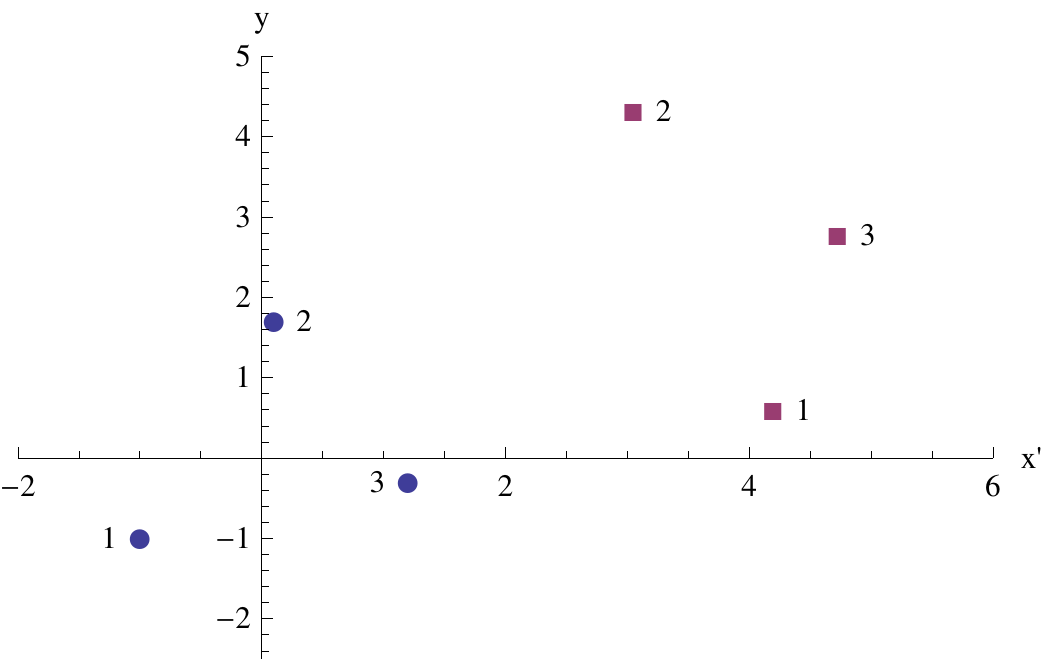} \end{center}
  \caption{Two snapshots of a three particle system at different times.}
  \label{fig:two triangles}
\end{figure}
We need to be able to compare these two triangles without making reference to their origin, orientation, and size. This is achieved by using one triangle as the reference shape then shifting the second triangle with arbitrary translations, rotations, and dilatations. Since we assume that each particle has an identity, we can calculate the ``distance'' between the two shapes by summing the Euclidean distances between each vertex of the triangle from one snapshot to the next. The \emph{best--matched configuration} is the one achieved by minimizing this ``distance'' using only translations, rotations, and dilatations of the second triangle. Figure (\ref{fig:bm triangles}) shows the second triangle as it is shifted into its best--matched position.
\begin{figure}
    \begin{center}\includegraphics[width=0.75\linewidth]{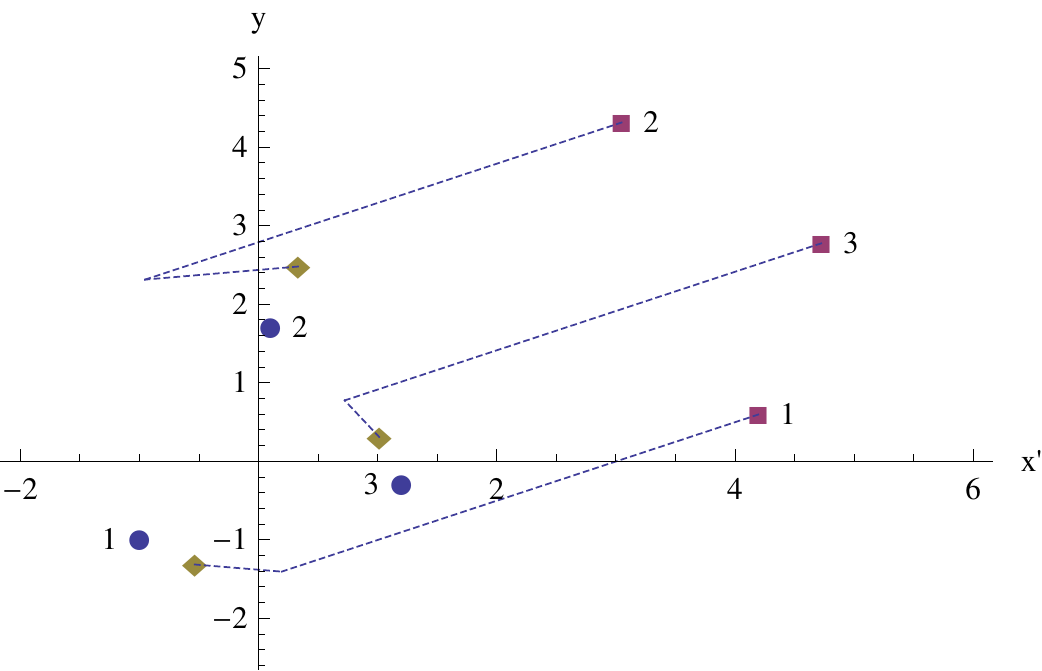} \end{center}
  \caption{The second triangle is to be shifted into its \emph{best--matched} position.}
  \label{fig:bm triangles}
\end{figure}

It is straightforward to express this procedure mathematically. Let $q^i_I(t)$ represent the $i^{\text{th}}$ Euclidean coordinate of the $I^{\text{th}}$ particle of the system at some time $t$. Since we will shortly be considering models where duration will emerge out the framework, it will be convenient to think of $t$ simply as some arbitrary parameter labeling the snapshots. To highlight this, we will call $t \to \lambda$ and think of $\lambda$ as a arbitrary time label which does nothing but order events. We can represent the ``shifting'' as a group action on the $q$'s
\begin{equation}
    q^i_I(\lambda) \to G(\phi^\alpha(\lambda))^i_j q^j_I(\lambda),
\end{equation}
where
\begin{equation}
    G(\phi^\alpha(\lambda))^i_j = \exp{\lf\{ \phi^\alpha(\lambda) t^i_{j,\alpha}\rt\}}
\end{equation}
and $\alpha$ ranges from 1 to 4 (the dimension of the similarity group in 2d). The $\phi^\alpha(\lambda)$ are the group parameters representing the amounts of rotation, translation, and dilatation to be performed and the $t^i_{j,\alpha}$ are the generators of the similarity group listed in Table (\ref{tbl:sim group gen}).
\begin{table}
    \centering
\begin{tabular}{| c | c | c |}
    \hline
    Symmetry     &     number of generators          & $t^i_{j,\alpha}$ \\
    \hline
    translations &    2 $(\alpha = k = 1\hdots 2)$   & $\delta^i_j \partial_k $ \\
    \hline
    rotations    &    1 $(\alpha = 3)$               & $\epsilon^{ik} q_j \partial_k$  \\
    \hline
    dilatations  &    1 $(\alpha = 4)$               & $\delta^i_j q^m \partial_m$ \\
    \hline
\end{tabular}
    \caption{The generators of the similarity group.} \label{tbl:sim group gen}
\end{table}

We imagine that the two snapshots represent the configuration of the system at two infinitesimally separated moments in time. If we define the quantity
\begin{equation}\label{eq:dqi}
    \delta q_I = G(\lambda + \delta\lambda) q_I(\lambda+\delta\lambda) - G(\lambda) q_I(\lambda),
\end{equation}
where spatial indices have been suppressed so that $G$ should be thought of a matrix and $\delta q$ as a column vector, then the condition that the triangles are best matched reduces to
\begin{equation}\label{eq:bm min}
    \text{min}_\phi \lf\{\delta^{IJ} \delta q_I^\intercal\, \eta\, \delta q_J \rt\}
\end{equation}
where $\eta$ is the diagonal unit matrix. The $^\intercal$ takes the transpose and the subscript $\phi$ indicates that a value of $\phi(\lambda)$ must be found that minimizes this quantity at all times $\lambda$. This procedure is reminiscent of a $\chi^2$ minimization of the distance between the vertices of the triangle.

\subsection{Action principle and the best--matching variation}\label{sec:bm variation}

The best--matching procedure is naturally expressed in terms of an action principle on configuration space. For infinitesimal $\delta \lambda$, we can expand \eq{dqi} and keep only the lowest order terms in $\delta\lambda$. We can then rewrite $\delta q_I$ and define the $\covd$ operator and its action on $q_I$ as
\begin{equation}
    \covd q_I \equiv G^{-1} \frac{\delta q_I}{\delta\lambda} = \dot q_I + G^{-1}\dot G q_I,
\end{equation}
where the $\dot{ }$ represents a derivative with respect to $\lambda$. Then, the minimization procedure \eq{bm min} is equivalent to the condition $\frac{\delta S}{\delta \phi} = 0$, where
\begin{equation}
    S = \int d\lambda \sqrt{ \delta^{IJ} (G \covd q_I)^\intercal \eta G \covd q_J }.
\end{equation}
This is clear because the integrand is the square root of the quantity to be minimized in \eq{bm min} under variations of $\phi_\alpha$. The square root is minimized so that the entire procedure is invariant under the choice of $\lambda$. This can be seen by noting that the action $S$ is invariant under reparametrizations of $\lambda$ of the form $\lambda \to f(\lambda)$, where $f$ is an arbitrary smooth function.

It is important to acknowledge that the variation with respect to $\phi_\alpha$ must be performed according to rules that implement the best matching procedure described above. These rules are \emph{not} the ones usually used in action principles because $\phi_\alpha$ is not a physically meaningful variable. Thus, its value at the endpoints of \emph{any} infinitesimal interval along the variation must remain arbitrary. This means that we cannot use the vanishing of $\delta \phi_\alpha$ on any interval of the variation. To see why this must be the case, recall the basic rules of the best--matching procedure. We have two triangles that we want to compare. To do this, we must be able to shift arbitrarily the triangles until they reach the best--matched position. But, this means that we certainly cannot fix the value of $\phi_\alpha$ at one of the endpoints. This would precisely defeat the purpose of the procedure because it would fix a particular origin, orientation, and scale for the system. Instead, we must be able to vary $\phi_\alpha$ \emph{freely} along any interval of the variation.

The mathematical realization of this variation can be stated in the following way. After an integration by parts, the variation of $S$ with respect to $\phi_\alpha$ takes the form
\begin{equation}
    \delta S_\phi = \int d\lambda \lf[ \diby{L}{\phi} - \frac d{d\lambda} \lf( \diby{L}{\dot\phi} \rt) \rt]\delta\phi + \lf. \delta\phi \diby{L}{\dot\phi} \rt|_{\lambda_i}^{\lambda_f},
\end{equation}
where $L$ is the Lagrange density
\begin{equation}
    L = \sqrt{ \delta^{IJ} (G \covd q_I)^\intercal \eta G \covd q_J }
\end{equation}
and $(\lambda_i,\lambda_f)$ are the endpoint values of $\lambda$. The local terms of $\delta S_\phi$ lead to the usual Euler--Lagrange equations for $\phi$. However, to get the boundary term to vanish, we must impose the additional condition
\begin{equation}\label{eq:fep cond}
    \lf. \diby{L}{\dot\phi}\rt|_{\lambda_i}^{\lambda_f} = 0.
\end{equation}
This condition must hold along \emph{any} infinitesimal interval one could chose to do the variation. This is because the best--matching procedure should be independent of which interval one chooses to perform the variation. If we impose the condition \eq{fep cond} for all values of $(\lambda_i,\lambda_f)$, we obtain the \emph{best--matching condition}
\begin{equation}\label{eq:bm cond}
    \diby{L(\lambda)}{\dot\phi} = 0.
\end{equation}
Thus, the best--matching variation of $\phi_\alpha$ is equivalent to a standard variation of $\phi_\alpha$ with the additional condition \eq{bm cond}.

To obtain an interacting theory, it is necessary to slightly generalize the action $S$. To motivate this generalization, note that $\delta^{IJ} \eta_{ij}$ is a flat metric on configuration space. The variational principle for $S$ is then a geodesic principle on configuration space since $S$ is just the length of path on configuration space using this metric.\footnote{In fact, this is not quite true. Only if one treats the $\covd$'s as standard $\lambda$ derivatives would this be true. We will address this difference in Section~(\ref{sec:formal constructions}).} To get a non--trivial theory, we must simply curve the metric on configuration space. The simplest way to do that is to multiply it by a conformal factor. This simple generalization is sufficient to reproduce Newtonian particle mechanics, as we will see. If we call the conformal factor $2(E-V(Gq))$ (the factor of 2 is conventional and it is most natural to think of $V$ as a function of the best matched coordinates, $Gq$) then the Lagrange density becomes
\begin{equation}
    L \to \sqrt{ 2 (E-V(Gq)) \delta^{IJ} (G \covd q_I)^\intercal \eta G \covd q_J }.
\end{equation}
The quantity $\delta^{IJ} (G \covd q_I)^\intercal \eta G \covd q_J$ is just twice the kinetic energy $T$ of the system once it has been best matched (in units where the particle masses have been set to 1) and the total energy, $E$, of the system is a constant determined experimentally. Using this definition, we have
\begin{equation}
    L = 2 \sqrt{ (E-V) T }.
\end{equation}
This action is commonly known as Jacobi's action and is known to reproduce Newtonian particle dynamics when $V$ is interpreted as the usual potential for the system. For an introductory treatment of Jacobi's theory see chapter V.6-7 of Lanczos's book \cite{lanczos:mechanics}.

Best matching as applied to 3 particles in 2d can be stated as follows
\begin{itemize}
    \item First, start with a geodesic principle on configuration space (i.e. Jacobi's action). This sets up the $\chi^2$ type minimization required for best matching.
    \item Make the substitutions $q \to G(\phi) q$. This allows for the appropriate shifting of the coordinates.
    \item Perform a best--matching variation of $\phi$ by imposing the Euler--Lagrange equation and the addition best-matching condition \eq{bm cond}.
\end{itemize}
It is important to point out that this procedure, particularly the best--matching condition, was derived from the simple requirement of finding the ``difference'' between shapes by minimizing the incongruence between them. Later, we will see that the best--matching condition is key to the discovery of shape dynamics. The point to emphasize here is that this condition is not ad--hoc in any way but results from a simple idea motivated by Mach's principles.

\subsection{Linear constraints and Newton's bucket}

The best--matching condition, \eq{bm cond}, can be computed for our system. The result leads to valuable physical insight into the meaning of best matching and how best matching resolves Newton's bucket problem.

Taking partial derivatives and dropping overall factors, it is a short calculation to show that
\begin{equation}\label{eq:toy lin const}
    \diby{L(\lambda)}{\dot\phi} = 0 \Rightarrow \delta^{IJ} (\covd q_I)^\intercal \eta t_\alpha q_J = 0.
\end{equation}
Inserting the values of the generators $t_\alpha$ of the similarity group from Table~(\ref{tbl:sim group gen}), we find that these constraints reduce to
\begin{align}
    \sum_I p^i_I &=0 \label{eq:lin const}\\
    \sum_I \epsilon_{ij} p_I^i q_I^j &= 0 \label{eq:rot const}\\
    \sum_I p^i_I q^i_I &= 0\label{eq:dil const},
\end{align}
where $p^i_I = \dot q^i_I + \dot \phi^i$ is the best--matched linear momentum of the $I^\text{th}$ particle and $\epsilon_{ij}$ is the completely anti--symmetric tensor in 2d.

To understand this result, consider what the linear constraint \eq{lin const}, generated from best matching the translations, accomplishes. The best--matched momentum $p^i_I$ represents the momentum of the vertices of the triangle when the system has been shifted to the best--matched position. Thus, the condition \eq{lin const} says that the total linear momentum of system, when best--matched, is zero. This is precisely the Noether charge associated to translational invariance. In other words, the best--matching procedure requires that the system be shifted translationally such that the total momentum of the system is zero. Unsurprisingly, the analogous thing holds for the rotations and dilatations. The constraint \eq{rot const} says that the total angular momentum (in 2d) of the system is zero when the system has been best--matched. Similarly, \eq{dil const} requires the \emph{dilatational} momentum to vanish. Indeed, we will see that it is a general result: the best matching condition is a constraint, linear in the momentum, that requires the vanishing of the appropriate Noether charge. The meaning of this is clear. In standard mechanics, the value of the Noether charge is set by the initial conditions. In best matching, the initial conditions that set the value of this charge have no physical significance since they correspond to the gauge coordinates of the $q$'s. As a result, the actual value of this charge is physically meaningless. The best--matching condition is a choice where the value of this unphysical charge is set to zero.

We can now return to the example of Newton's bucket and see how best matching provides a concrete model for framing Mach's argument. We can think of our system as a being composed of the water, bucket, and fixed stars. Best matching with respect to the rotations tells us that the total angular momentum of the system must be zero. Since the fixed stars are very massive and distant, their contribution to the angular momentum is significantly greater than that of the bucket or the water. Because of this, the bucket and water can have virtually any realistic amount of rotation without impacting the angular momentum of the whole system and, thus, the dynamics of the rest of the system. This leads to the \emph{illusion} of absolute space because the fixed stars effectively behave like a fixed absolute background for rotation. However, if they were not part of the system and only the water and bucket existed in the universe, then the best--matching condition would imply that the angular momentum of the bucket \emph{must} cancel that of the water. This would lead to only two possibilities: either the bucket and water are not rotating at all and the surface of the water is flat or the bucket and water are rotating in opposite directions and the surface of the water is curved up the walls of the bucket. Unlike the results of Newton's experiment, these two possibilities are completely consistent with a relational theory since there are only two outcomes correlated directly with the relative motion of the bucket and water. Unfortunately, such an experiment is not feasible because the real universe consists of much more than some water in a bucket. Nevertheless, we see that best matching provides an explanation for how the fixed stars actually create the illusion of a fixed background for small subsystems of the universe such as Newton's bucket.

\subsection{Mach's second Principle}

Surprisingly, the best matching procedure we have just developed for implementing Mach's first principle also implements Mach's second principle. As we pointed out, taking the square root  of the minimum distance between shapes, \eq{bm min}, makes $S$ invariant under reparametrizations of the time label $\lambda$. In fact, there is a preferred choice of parametrization where the equations of motion manifestly take the form of Newton's equations for non--relativistic particles in terms of the best--matched coordinates. Performing a variation of $S$ with respect to the shifted quantities $\bar q = G q$, we obtain
\begin{equation}
    \sqrt{\frac V T}  \frac{d}{d\lambda}\lf(\sqrt{\frac V T} \frac{ d\bar q}{d\lambda} \rt) = -\diby{V}{\bar q},
\end{equation}
where, as before, $V(\bar q)$ is the potential and $T(\dot{\bar q})$ is the kinetic energy but both are in terms $\bar q$. If we identify
\begin{equation}\label{eq:eph time1}
    \dot t_N = \sqrt{\frac T {E-V}}
\end{equation}
This becomes
\begin{equation}
    \frac{ d^2 \bar q}{dt_N^2} = -\diby{V}{\bar q}
\end{equation}
which is precisely Newton's law in terms of the best--matched coordinates.

The choice \eq{eph time1} is a particular parametrization that is completely equivalent to Newton's time. This choice, however, is relational since its definition depends purely on the configuration space variables and their relative changes. To be more precise, we can rewrite \eq{eph time1} in the following way by taking out the $\lambda$ dependence (since $t_N$ is reparametrization invariant):
\begin{equation}
    \delta t_N = \sqrt{ \frac 1 2 \frac{ \delta^{IJ} (\delta q_I)^\intercal \eta \delta q_J}{E - V(\bar q)}}.
\end{equation}
This is proportional to the total \emph{change} of the shape of the system. Thus, the best--matching procedure leads directly to a notion of time that is both equivalent to Newton's and that treats time as a measure of the change in the configurations of the universe. This is precisely in accordance with our statement of Mach's second principle.

\section{Formal constructions}\label{sec:formal constructions}

The toy model can now be used to point out the key features of best matching and motivate the general geometric features of the procedure.

\subsection{Mach's first principle and Principal Fiber Bundles}

The first step in best matching is also the most subjective. It involves identifying the ontology of the theory. The basic assumption behind best matching is that the most convenient variables presented to us to study physical theories contain significant redundancies. These redundancies can be eliminated by best matching if they originate from a continuous symmetry generated by a Lie algebra. In the case of our toy model, the most convenient variables to use to study the theory are the Euclidean coordinates. However, the physically meaningful quantities, measurable by observers in the system, are the ratios of the particle separations. This suggests that the redundancy of the variables is parametrized by the similarity group. In general, the first step is to identify the redundant, or \emph{absolute}, configuration space $\acs$ then the symmetry group $\cG$ parametrizing the redundancy. These identifications imply the quotient space $\rcs = \acs / \cG$ representing the reduced, or \emph{relational}, configuration space on which live the physical degrees of freedom.

The redundant configuration space $\acs$ admits a Principal Fiber Bundle (PFB) structure. In the finite dimensional particle models, $\acs$ has a simple tensor product structure $\acs = \rcs \times \cG$ so that the fibers are given by $\cG$ and the base space is $\rcs$. Vertical flow along the fibers is generated by the group action on $\acs$. In our toy model, this is represented by translations, rotations, and dilatations of the triangles. Genuine changes of shape are represented by horizontal flow in the PFB $\acs$. Figure~(\ref{fig:bm shapes}) shows how changes of orientation and scale of the triangles represents motion along the fibers of $\acs$.
\begin{figure}
    \begin{center}
	\includegraphics[width = 0.7\textwidth]{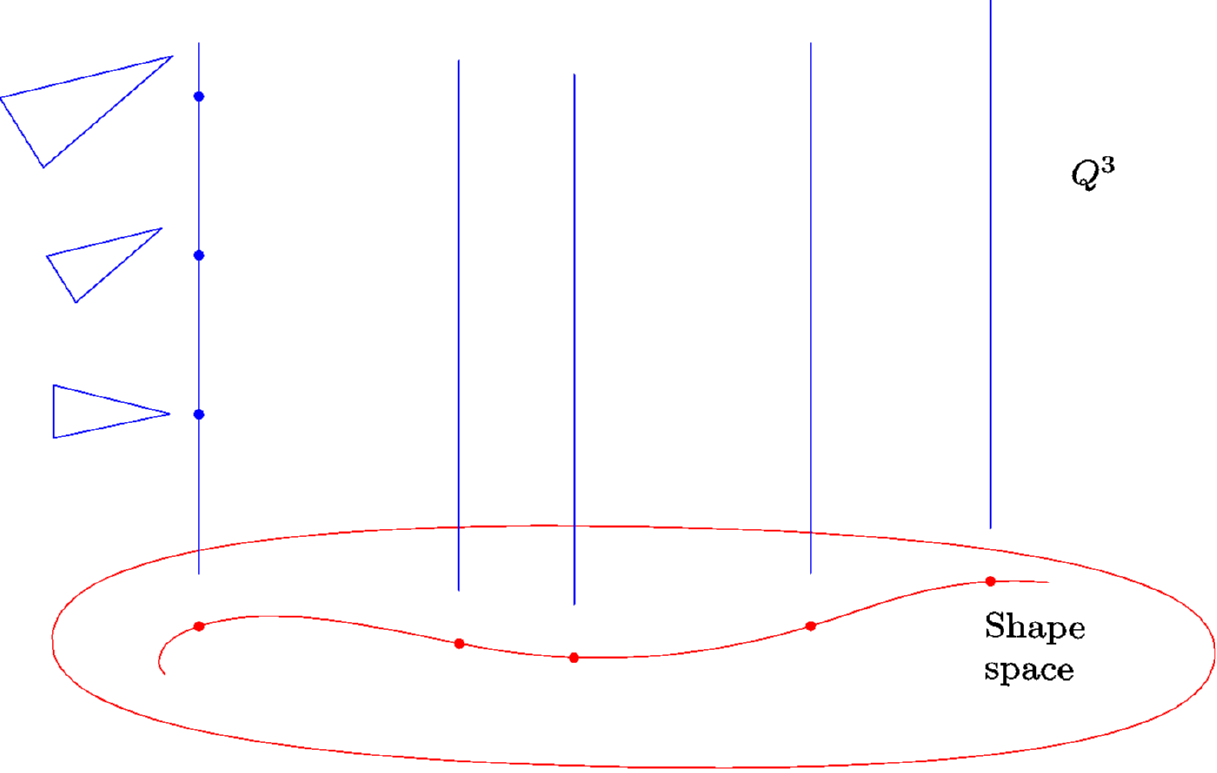}
	\caption{Changes of orientation and scale represent motion along the vertical directions of a fiber bundle over shape space. (Note: $\acs = Q^3$)} \label{fig:bm shapes}
    \end{center}
\end{figure}
Best matching is a procedure that aims to find the ``difference'' between two infinitesimally different shapes. Geometrically, this requires a notion of derivative between two neighboring fibers. In other words, it requires a connection on $\acs$. Indeed, best matching is precisely a method for choosing a connection on $\acs$. Just as a choice of section on a PFB selects one member out of an equivalence class represented by the fibers, best matching selects one member -- the best--matched configuration -- out of an equivalence class generated by the symmetry group $\cG$. In fact, the best--matching condition
\begin{equation}
    \diby{L(\lambda)}{\dot\phi} = 0
\end{equation}
corresponds to the condition for the \emph{horizontal lift} above a particular curve in $\rcs$. Figure~(\ref{fig:fiber bundle}) illustrates how the fiber bundle structure of best matching projects curves onto shape space (many thanks go to Boris and Julian Barbour for providing Figures~(\ref{fig:bm shapes}) and (\ref{fig:fiber bundle})).
\begin{figure}
     \begin{center}
	\includegraphics[width = 0.8\textwidth]{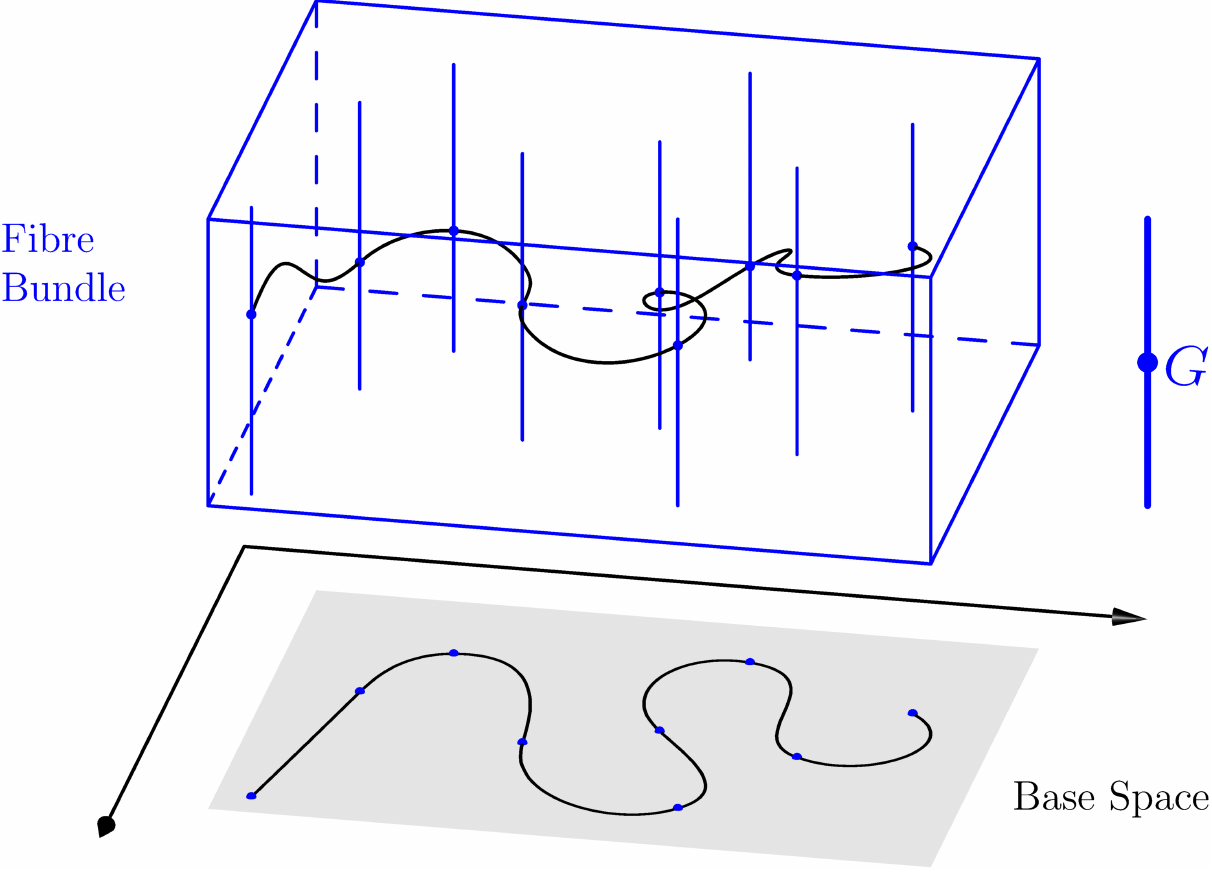}
	\caption{The fiber bundle structure of best matching projects curves down to shape space.} \label{fig:fiber bundle}
    \end{center}   
\end{figure}

In practice, it is usually necessary to extend configuration space by introducing the auxiliary fields $\phi_\alpha$. They can be thought of as the vertical components of the redundant configuration variables. It is now clear that the operator $\covd$, whose action on the $q$'s in our toy model was
\begin{equation}
    \covd q = \dot q + \dot \phi^\alpha t_\alpha q,
\end{equation}
is actually defining the covariant derivative along a trial curve in $\acs$. $\dot\phi_\alpha$ is then the pullback of the connection onto this trial curve. From our knowledge of $\phi_\alpha$, it is possible to compute the full best--matching connection $A$ over $\acs$. This is done in Section~(\ref{sec:bm connection}).

To summarize, Mach's first principle is implemented in best matching by: first, identifying the configuration variables, $\acs$, and their symmetries, $\cG$, then, by using the horizontal lift, or best--matching connection, to compare neighboring configurations in $\acs$. This ensures that only physically meaningful quantities enter the the description of the dynamics because the PFB structure projects the dynamics onto the relational configuration space $\rcs$. This has the effect of making irrelevant any initial conditions specified along the fiber directions since this information is physically meaningless.

\subsection{Mach's second principle and geodesics on configuration space}

In the toy model, $2V(q) \delta^{IJ}\eta_{ij}$ was a conformally flat metric on configuration space. The dynamics implied by best matching produced trajectories that are geodesics of this metric. The ability to define metrics on configuration space suggests a natural way to implement Mach's second principle. According to our definition, time, or more specifically duration, should be a measure of the total change undergone by the configurations. The principal fiber bundle structure obtained from implementing Mach's first principle allows us to project the dynamics onto the reduced configuration space $\rcs$. Then, to satisfy Mach's second principle, duration should be given by a length on $\rcs$. Indeed the Newtonian time, $t_N$, is precisely that. It should be cautioned that the metric used to compute the Newtonian time is \emph{not} the same metric used in the action. Nevertheless, the basic idea is clear: the presence of natural metrics on configuration space allows for both a way to define the dynamics through a geodesic principle and a way to define duration in a Machian way. Thus, the final picture that emerges from best matching is a geodesic principle on $\rcs$.

The fact that we have a geodesic principle on $\rcs$ puts an additional restriction on the number of freely specifiable initial data in the theory. To specify a geodesic, one requires a point and a \emph{direction}. This is one less piece of information than is typically required to specify dynamics on configuration space since, in the standard case, one must specify a point and a \emph{tangent vector}. The one additional piece of information, the length of the tangent vector, specifies the speed at which the system moves through the trajectory. Since geodesics are reparametrization invariant, the speed along the trajectory is physically meaningless. One can then summarize Mach's principles by stating them through the number of freely specifiable initial data required to uniquely specify a classical solution:
\begin{quote}
    The freely specifiable initial data required to uniquely specify a classical solution of a relational theory is a \emph{point} and \emph{direction} on the reduced configuration space $\rcs$.
\end{quote}
This requirement was first stated by Poincar\'e \cite{poincare:principle} as a generalized relativity principle as been coined \emph{Poincar\'e's principle} \cite{barbourbertotti:mach}.

\subsection{The best--matching connection}\label{sec:bm connection}

It is possible to compute the best--matching connection, $A$, over the whole configuration space. Since the $\phi$ fields represent $A$ pulled back onto a trial curve $\gamma$, we can find an expression for $A$ by generalizing \eq{toy lin const} over the whole configuration space. The covariant derivative along a path $\covd$ generalizes to
\begin{equation}
    \covd q^a_I \to \mathcal D^J_b q^a_I = \diby{q^a_I}{q^b_J} + A^{\alpha,J}_b \ta{a}{c} q^c_I,
\end{equation}
where $A^{\alpha,J}_b$ are the components of the best--matching connection. This has both particle and spatial indices in accordance with the index structure we are using for the configuration space coordinates $q^a_I$. Using this, \eq{toy lin const} generalizes to
\equa{\label{eq:A_eom}
  \delta^{IJ} \mathcal D^K_a q^c_I \eta_{cd} \ta{d}{e} q^e_J = 0.
}
for A (we have reinserted spatial indices to avoid any ambiguity in notation).

It is instructive to compute $A$ for simple cases.

\subsubsection{Scale invariant model}

Here we consider the global connection associated with the dilatations. The advantage of this simple case is that an explicit expression can be obtained for $A$. This is enlightening because it tells us what the connection is doing over the entire PRB $\acs$.

For the dilations, there is only one generator
\equa{
  \ta a b \to \delta^a_b.
}
We will rescale the coordinates $q$ so that it is possible to consider particles with different masses, $m_I$, (previously, we used units where $m_I =1$)
\equa{
  q^a_I \to \sqrt{m_I} q^a_I.
}
In these units, it is a short calculation to work out \eq{A_eom} in terms of $A$. This gives:
\equa{
  A_a^I(q) = - \partial^I_a \ln \sqrt{I(q)}.
}
In the above, $I(q)$ is the off--shell generalization of the moment of inertia of a point in $\acs$
\equa{
  I(q) = m_I \delta^{IJ} q_I^\intercal \eta q_J.
}
In row vector notation, where rows label particle numbers, the connections take the simple form,
\equa{
  A^I_a(q) = -\frac{\eta_{ab}}{I}\lf(\begin{array}{c}
m_1 q^b_1\\
m_2 q^b_2\\
\cdots
                      \end{array}\rt).
}

Pulling back this result onto a trial curve $\gamma$, parametrized by $\lambda$, gives
\equa{\label{eq:si_omega}
  \dot\phi (\lambda) = - \frac{d}{d\lambda} \lf( \ln \sqrt{I(\lambda)} \rt),
}
which is identical to the result obtained by directly solving \eq{dil const}. Note that the potential, fortunately, drops out of the equations for $A$.

\subsubsection{Translationally Invariant Models}

As a second example, we will consider the pure translations. For this example, care must be taken to get the correct index structure for the generators. In this case, the $\alpha$ index splits into a spatial index \emph{and} a particle index. However, this notation is redundant since the generators are identical for each particle. In the end, to solve for $A$ we will need to sum over this redundant index. Explicitly, the generators take the form
\begin{equation}
    \ta a b \to \delta^a_b \diby{}{q^n_N},
\end{equation}
where the $\alpha$ index has split into $n$ and $N$. Similarly, $A$ has the components
\begin{equation}
    A^{\alpha, I}_a \to A^{n,I}_{N,a}.
\end{equation}

Using these conventions, one can reduce \eq{A_eom} to the following form
\equa{
  \sqrt{m_N m_A} \delta^{AN} \eta_{an} + \sqrt{m_N m_B} A^{b,A}_{a,B} \delta^{BN}\eta_{bn} = 0,
}
where we avoid summing over particle indices. Because of the redundancy in the notation, this expression contains many more equations then we need. Summing over $N$ allows us to eliminate the redundancy. Performing this sum and rearranging gives
\equa{
  A^{nA}_a = -\frac{m^A\delta^n_a}{\sum_N m_N}.
}
This is a very compact way of expressing the information contained in the best--matching connection for the translations. $A$ is both diagonal and depends only on the masses.

This result can be pulled back to a particular path $\gamma$ on $\acs$ parametrized by the parameter $\lambda$. This leads to
\equa{
  \dot\phi^n = - \frac{\sum_I m_I \dot q^n_I}{\sum_I m_I},
}
which is the center of mass velocity. Thus, the best--matching procedure is telling us explicitly that the velocity of the center of mass is the physically meaningless quantity associated with the translational invariance, completely in agreement with our intuition.

\subsection{Equivalent action}\label{sec:equiv S}

As already discussed, to define a geodesic principle on $\rcs$, it is sufficient to use the length of a path on $\rcs$ as the action. In general, a geodesic action takes the form
\begin{equation}
    S = \int d\lambda \sqrt { g_{ab} \dot q^a \dot q^b }.
\end{equation}
Since our action is actually defined on $\acs$, we must replace $\lambda$ derivatives by covariant derivatives and shift the metric $g_{ab}$ to its best matched position $ \bar g_{ab} = G_a^c g_{cd}(Gq) G^d_b$. Thus,
\begin{equation}
    \bar S = \int d\lambda \sqrt { \bar g_{ab} \covd q^a \covd q^b }
\end{equation}
leads to a proper geodesic principle on $\rcs$ when $\phi_\alpha$ are varied with a best--matching variation.

Because actions with square roots are difficult to deal with mathematically, it is convenient to introduce an auxiliary field $N(\lambda)$, which we will call the \emph{lapse} in analogy to the ADM action of GR, to write the action in a simpler form. Assuming that the metric decomposes conformally as
\begin{equation}
    g_{ab} = V g'_{ab}
\end{equation}
then it is easy to show that the action
\begin{equation}
    S' = \int d\lambda \lf[ \frac 1 N  \bar g'_{ab} \covd q^a \covd q^b - N V \rt]
\end{equation}
reproduces exactly the same equations of motion. We will often use this form of the action for expressing a geodesic principle since it is more manageable mathematically.

There is yet another way to simplify this action that gives it a structure similar to a best matched symmetry. The idea is to note that the lapse, $N$, has the units of velocity. Thus, it is better to think of $N$ as the derivative of some variable $\tau$. If set $N = \dot\tau$ then
\begin{equation}\label{eq:php general action}
    S' = \int d\lambda \lf[ \frac 1 {\dot\tau}  \bar g'_{ab} \covd q^a \covd q^b - \dot\tau V \rt].
\end{equation}
In this form, $S'$ is a reparametrization invariant version of Hamilton's well--known principle, where $L = T - V$. Thus, we will refer to the variation of this action as Parametrized Hamilton's Principle (PHP). However, because of the derivative on $\tau$, we must perform a best matching variation of $\tau$ to get the same equations of motion that we had with the Lagrange multiplier $N$. In this way, it appears that the theory defined by \eq{php general action} is a theory where the reparametrization invariance has been best matched.

\subsection{Equivariant and non--equivariant metrics}

We conclude this chapter by making an important distinction between two kinds of best--matching theories. The first occurs when the metric, $g_{ab}$, on configuration space is equivariant under the symmetry group. This means that the flow generated by $\cG$ is a Killing vector of $g$. An immediate result of this is that
\begin{equation}
    \bar g_{ab} = G_a^c g_{cd}(Gq) G^d_b = g_{ab}.
\end{equation}
Then, the best--matching procedure takes the original action
\begin{equation}
    S = \int d\lambda \sqrt { g_{ab} \dot q^a \dot q^b },
\end{equation}
which is invariant under \emph{global} (in $\lambda$) gauge transformations of the form $q \to Gq$, and sends it to the \emph{locally} gauge invariant action
\begin{equation}
    S' = \int d\lambda \sqrt { g_{ab} \covd q^a \covd q^b }
\end{equation}
by promoting $\lambda$ derivatives to gauge covariant derivatives. Equivariant best matching is, thus, equivalent to doing standard gauge theory on configuration space. However, it provides a powerful conceptual framework, based on Mach's principles, to motivate the local gauging of a symmetry. In this sense, best matching leads to a deeper understanding of gauge theory.

The second possibility is that the metric in \emph{not} equivariant under the action of $\cG$. In this case, we will see that the consistency of the equations of motion is no longer guaranteed. However, in special situations, consistency can be restored in a particular gauge so that a geodesic principle on $\rcs$ can still be defined. This allows for the possibility of constructing a truly equivariant metric by equivariantly lifting the metric on $\rcs$. This leads to a \emph{dual} theory that has the required symmetries. In GR, this procedure will lead to shape dynamics when 3d Weyl symmetry is best matched.

%
%
\chapter{Equivariant best matching}\label{chap:equiv bm}

In this chapter, we will give an in depth description of the mathematical structure of best matching in general finite dimensional systems. Undoubtedly, one cannot fully understand the structure of a theory until one has understood its Hamiltonian formulation. As Dirac put it: ``I feel that there will always be something missing from [alternative methods] which we can only get by working from a Hamiltonian.'' \cite{dirac:lectures} With this in mind, we perform a full canonical analysis of a general class of finite dimensional best--matched theories. In this chapter, we will only consider the case where the metric is equivariant, saving the non--equivariant case for next chapter. This will provide a detailed framework for understanding relational theories and will complement the key results of last chapter.

We will proceed as follows. First, we will formulate some general geometric constructions for the framework. Then, we perform the Legendre transform and compute the canonical equations of motion. We develop in detail the canonical version of the best--matching variation and point out some key differences compared with the Lagrangian approach that allow us to capture the full gauge invariance of the theory. This identification allows us, at the same time, to identify and then eliminate the gauge redundancies, providing us with an explicit formalism to formally compute the gauge independent observables. These new insights provide two valuable tools: 1) the matching procedure can be seen as a canonical transformation on the extended phase space, which will be useful for understanding the non--equivariant theories, and 2) the best--matching condition provides a specific criterion for defining background independence with respect to a given symmetry. We end the chapter by applying our definition of background independence to reparametrization invariant theories. This has interesting implications for the problem of time.

\section{Finite Dimensional Models}\label{sec:finite dim models}

\subsection{Mach's second principle}\label{sec:jacobi_principle}

We begin with some general geometric considerations that will deepen our understanding from last chapter and set up the transition to the Hamiltonian theory.

Mach's second principle is implemented by a geodesic principle on the configuration space, $\acs$ (we will consider the $\rcs$ and Mach's first principle in a moment). This can be achieved by extremizing an action, $S$, that gives the length of the trajectory on configuration space
\equa{\label{eq:jacobi_gen}
	S = \int_{q_\text{in}}^{q_\text{fin}} d\lambda\, \sqrt{g_{ab}(q) \dot{q}^a(\lambda)\dot{q}^b(\lambda)}.
}
$g_{ab}$ is a function only of $q$ and not of its $\lambda$--derivatives. $\lambda$ has been written explicitly in this reparametrization invariant action so that it can be used as an independent variable in the canonical analysis.

We will find it convenient fix a conformal class of the metric by selecting a positive definite function $\Omega(q)$ such that
\equa{\label{eq:conf_decomp}
	g_{ab} = \Omega \gamma_{ab}.
}
In many situations, $-\Omega$ can be interpreted as twice the potential energy of the system. As we have seen, in the dynamics of non--relativistic particles, the configuration space is just the space of particle positions $q^i$. The metric $g_{ab}$ leading to Newton's theory is conformally flat so that
\equa{ 
	\gamma_{ab} = \eta_{ab},
}
where $\eta$ is the flat metric with Euclidean signature.\footnote{The units can be chosen so that all of elements of $\eta$ are 1. Particles with different masses can be considered by replacing $\eta$ with the suitable mass matrix for the system.} In general, the metric $g_{ab}$ is a specified (ie, \emph{non}--dynamical) function on $\acs$. From now on, we will use the action (\ref{eq:jacobi_gen}), making use of the decomposition (\ref{eq:conf_decomp}) only when necessary. This allows us to work directly with geometric quantities on $\acs$.

The variation of $S$ leads to the geodesic equation
\equa{
    \ddot q^a + \Gamma^a_{bc} \dot q^b \dot q^c = \kappa(\lambda) \dot q^a,
}
where $\kappa \equiv d \ln \sqrt{g_{ab} \dot{q}^a\dot{q}^b}/ d\lambda$ and $\Gamma^a_{bc} = \frac{1}{2} g^{ad} (g_{db,c} + g_{dc,b} - g_{bc,d})$ is the Levi-Civita connection on $\acs$.

The choice of the parameter $\lambda$ is important. Normally, one would like to set $\kappa = 0$ with an affine parameter. However, for metrics of the form \eq{conf_decomp} with $\gamma = \text{const}$, there is another special choice of $\lambda$ that simplifies the geodesic equation. If we choose the parameter $\tau$ such that
\equa{
    \frac{d\tau}{d\lambda} = \frac{\sqrt{g_{ab} \dot{q}^a\dot{q}^b}}{\Omega},
}
the geodesic equation becomes
\equa{
    \gamma_{ab} \frac{d^2 q^b}{d\tau^2} = \frac{1}{2}\partial_a \Omega.
}
In the case of non--relativistic particles, $\Omega = -2V$ and $\gamma = \eta$ so that the geodesic equation is Newton's $2^{\text{nd}}$ law. With these choices, $\tau = \sqrt{-\frac{T}{V}}$ is just equivalent to the Newtonian time $t_N$.

\subsection{Mach's first principle}

We can now implement Mach's first principle by introducing the auxiliary fields $\phi_\alpha$ and performing best matching. This sends the $\lambda$ derivatives to covariant derivatives and shifts the metric $g_{ab}$ to its best--matched value $\bar g_{ab} = G_a^c g_{cd}(Gq) G^d_b$. This leads to the action
\equa{\label{eq:jacobi_bm}
    S = \int_\gamma d\lambda\, \sqrt{\bar g_{ab}\, \covd q^c \covd q^d},
}
which must be varied over all paths $\gamma$ on $\acs$ according to the rules of the best--matching variation.

The action \eq{jacobi_bm} can be written in an illuminating form using the fact that our metric is symmetric under $\mathcal G$. The equivariance criterion leads to the existence of global Killing vectors and is expressed by the fact that the Lie derivative in the direction of the symmetry generators $\mathcal L_{t_\alpha q} g = 0$ is zero. Explicitly,
\equa{\label{eq:killing_loc}
    \ta{c}{\lf( a\rt.} g_{\lf. b\rt) c} + \partial_c g_{ab} \ta{c}{d} q^d = 0.
}
where the rounded brackets indicate symmetrization of the indices. This expression can be exponentiated to prove the following relation
\equa{\label{eq:killing_field}
    \bar g_{ab} = g_{ab}(Gq) \, G^a_c G^b_d = g_{cd}(q).
}
Inserting this into \eq{jacobi_bm} gives
\equa{\label{eq:bm_jacobi_S}
    S = \int_{q_\text{in}}^{q_\text{fin}} d\lambda\, \sqrt{g_{ab}(q)\, \covd q^a \covd q^b}.
}
The relation \eq{killing_field} can be inserted into the original action $S$ to show that, with an equivariant metric, the original theory is invariant under $\lambda$ independent $\cG$--transformations. This shows in detail why equivariant best matching is equivalent to gauging a global symmetry.

We can now proceed with the canonical analysis of the best--matching action (\ref{eq:bm_jacobi_S}). The momenta $p_a$, conjugate to $q^a$, and $\pia$, conjugate to $\phi^\alpha$, are
\begin{align}
    p_a &\equiv \diby{L}{\dot{q}^a} = \frac{g_{ab}\, \covd q^b}{\sqrt{g_{cd}\, \covd q^c \covd q^d }}, \label{eq:pa_jac} \qand \\
    \pia &\equiv \diby{L}{\dw} = \frac{g_{ab}\, \covd q^b}{\sqrt{g_{cd}\, \covd q^c \covd q^d }} \ta{a}{e} q^e. \label{eq:pia_jac}
\end{align}
It is easy to verify that these momenta obey the following primary constraints
\begin{align}\label{eq:jacobi_constraint}
    \ham &= g^{ab}\, p_a p_b -1 = 0, \qand \\
    \lin &= \pia - p_a \ta{a}{b} q^b = 0, \label{eq:lin}
\end{align}
where $g^{ab}$ is the inverse of $g_{ab}$. The quadratic \emph{scalar constraint} $\ham$ arises from the fact that $p_a$, according to (\ref{eq:pa_jac}), is a unit vector on phase space. It corresponds to the fact that only a direction in $\acs$ can be specified. Thus, $\ham$ reflects the physical insignificance of the length of $\dot{q}$. The linear \emph{vector constraints} $\lin$ reflect the continuous symmetries of the configurations. They indicate that the phase space of the theory contains equivalence classes of states generated by $\lin$. Note that $\ham$ and $\lin$ arise in very different ways: $\ham$ is a direct consequence of using a geodesic principle on $\acs$ and can be attributed to Mach's second principle while $\lin$ is a consequence of the best matching and can be attributed to Mach's first principle. In the equivariant case, we will see that these constraints are first class. This leads precisely to the restrictions on the freely specifiable initial data required by Poincar\'e's principle. 

Using the fundamental Poisson Brackets (PBs)
\equa{
    \pb{q^a}{p_b} = \delta^a_b, \qand \quad \pb{\wa}{\pi_\beta} = \delta^\alpha_\beta,
}
we find that there are two sets of non--trivial PBs between the constraints. They are
\begin{align}
    \pb{\lin}{\ham_\beta} &= c_{\alpha\beta}^\gamma \ham_\gamma \label{eq:lie_algebra} \text{, and}\\
    \pb{\ham}{\lin} &= \partial_c g^{ab}\, p_a p_b \ta{c}{d} q^d - g^{ab} p_c p_{(a} \ta{c}{b)} \label{eq:ham_lin_pb},
\end{align}
where $c_{\alpha\beta}^\gamma$ are the structure constants of the group. From \eq{lie_algebra}, we see that the closure of the vector constraints on themselves is guaranteed provided $\mathcal G$ is a Lie algebra. The PB's \eq{ham_lin_pb} vanish provided \eq{killing_loc} is satisfied. Thus, the closure of the constraints is guaranteed by the \emph{global} gauge invariance of the action.

Because of the important role played by \eq{killing_loc}, it is illuminating to see the conditions under which \eq{killing_loc} is satisfied for particular models. In translationally invariant non--relativistic particle models the generators given in Table~(\ref{tbl:sim group gen}) are used in \eq{killing_loc}. Being careful about particle and spatial indices (particle indices are labeled by $I$ and spatial indices are indicated by arrows) leads to the following condition on the potential
\equa{
    \sum_I \vec\nabla_I V = 0,
}
where $\vec\nabla_I = \diby{}{\vec q_I}$. This requires that the potential be translationally invariant. It is satisfied by potentials that are functions of the differences between the coordinates. The same argument applied to the rotations leads to a similar result: the potential must be rotationally invariant. The dilatations are different. They imply the following condition on the potential
\equa{\label{eq:dilatation_cc}
    \partial_c V\, q^c = -2V.
}
By Euler's theorem, this implies that the potential should be homogeneous of order $-2$ in $q^c$.

While the gauge invariance of the action is guaranteed for the rotations and translations by the gauge invariance of the potential, it is not for the dilatations. This is because the kinetic term has conformal weight $+2$ under global scale transformations of the $q$'s. Thus, the potential must have conformal weight $-2$ if the action is to be scale invariant. This is just the requirement \eq{dilatation_cc} and is equivalent to the \emph{consistency conditions} obtained in \cite{barbour:scale_inv_particles} but derived from different motivations and in the canonical formalism. The key message to take from this result is that equivariance of the metric does not necessary imply gauge invariance of the potential. If the non--gauge invariance of the potential is compensated by the non--gauge invariance of the kinetic term, the entire action can still be gauge invariant. The dilatations provide an example of this.

It is possible to work out the gauge transformations generated by the linear constraints $\lin$. Computing the PBs $\pb{q}{\lin}$ and $\pb{\wa}{\ham_\beta}$ we find $q$ and $\phi$ transform as
\begin{align}
    q^a &\ra e^{-\zeta^\alpha \ta{a}{b}} q^b \notag \\
    \wa &\ra \wa + \zeta^\alpha \label{eq:gauge_trans}
 \end{align}
under large gauge transformations parameterized by $\zeta^\alpha$. This is what Barbour call \emph{banal invariance} in \cite{barbour:scale_inv_particles}. From the canonical analysis, this is a genuine gauge invariance of the theory. It is simply a statement that the $\phi$ field, because of how it was introduced into the theory, is purely auxiliary. In practice, the $\phi$ field is formally equivalent a St\"uckelberg field \cite{stueckelberg}, in which case the banal invariance is a \emph{splitting} symmetry.

\subsection{Best--matching variation} \label{sec:mach_variation}

Before computing the canonical equations of motion and solving the constraints, we will describe the canonical version of the best--matching variation. This is analogous to variation used in the Lagrangian formulation that we introduced in Section~(\ref{sec:bm variation}). However, there are important differences and new insights that should be highlighted.

As was discussed in detail, the $\phi$'s should be varied freely on the endpoints of \emph{any} interval along the trajectory. To see how this is carried out in the canonical formalism, consider the canonical action:
\equa{\label{eq:omega_pi_action}
    S[q,p,\phi, \pi] = \int d\lambda \lf[ p\cdot\dot{q} + \dot\phi\cdot\pi - h(q,p, \phi, \pi) \rt].
}
We are concerned only with variations of the $\phi$'s and $\pi$'s since the $p$'s and $q$'s are treated as standard phase space variables. We need to determine the conditions under which the action will vanish if the $\phi$'s and the $\pi$'s are varied freely at the endpoints. The variation with respect to the $\pi$'s vanishes provided $\dot\phi = \diby{h}{p} = \pb{q}{h}$ regardless of the conditions on the endpoints. Thus, Hamilton's first equation is unchanged by the free endpoint condition. However, the procedure leading to Hamilton's second equation is modified.

After integration by parts, the variation of \eq{omega_pi_action} with respect to $\phi$ is
\equa{
    \delta_\phi S[q,p,\phi,\pi] = -\int d\lambda \lf[ \diby{h}{\phi} + \dot \pi \rt] \, \delta\phi + \lf. \pi\, \delta \phi \rt|_{\lambda_\text{in}}^{\lambda_\text{fin}} = 0.
}
The first term implies Hamilton's second equation
\equa{
  \dot{\pi} = -\diby{h}{\phi} = \pb{\pi}{h}.
}
However, because $\delta \phi$ is \emph{not} equal to zero on the endpoints, the second term will only vanish if $\pi(\lambda_\text{in}) = \pi(\lambda_\text{fin}) = 0$. This single free endpoint condition, however, is not enough. In order for the $\phi$ fields to be completely arbitrary, the solutions should be independent of where the endpoints are taken along the trajectory. This implies the canonical \emph{best--matching condition}, $\pi(\lambda) = 0$, \emph{everywhere}. The best--matching condition is an additional equation of motion. In Dirac's language, it is a \emph{weak} equation to be applied only \emph{after} taking Poisson brackets.

For metrics satisfying \eq{killing_field}, $\phi$ is a \emph{cyclic} variable. This means that it enters the action only through its dependence on $\dot\phi$. In this case, $\pb{\pi}{h} = 0$ identically so that, by Hamilton's second equation, $\pi$ is a constant of motion. Normally, this constant of motion would be set by the initial and final data. The main effect of applying the best--matching condition is to set this constant equal to zero. When combined with the linear constraints $\lin\approx 0$, the best--matching condition implies
\begin{equation}
    p_a \ta{a}{b} q^b \approx 0
\end{equation}
which is the vanishing of the Noether charge associated to a symmetry of the action under $\cG$. In the Lagrangian language, this results immediately from applying the Lagrangian form of the best--matching condition. In the canonical formalism, this is a two step process. First, we must impose the linear constraints $\lin$, then we must apply the best--matching condition. This important difference highlights the advantages of the canonical framework. We can see that the best--matching condition imposes a new symmetry on the theory: invariance under $\phi \to \phi + \zeta$. Unlike the splitting symmetry (or banal invariance), this gauge invariance has a real physical significance: it generates the background independence of the theory because it projects the theory down to the relational phase space. This observation will be used to establish a precise definition of background independence in Section~(\ref{sec:BI_BD}).

\subsection{Cyclic variables and Lagrange multipliers}

We pause for a brief comment on the meaning of the best--matching variation. As we briefly mentioned, a \emph{cyclic} variable is one that enters a theory only through its time derivative. Conversely, a Lagrange multiplier appears in a theory without any time derivatives. In the equivariant case, $\wa$ drops out of the theory because of the equivariance property. This means that $\wa$ is a cyclic variable. Since
\begin{equation}
    \diby L {\wa} = 0
\end{equation}
identically for cyclic variables, the best--matching condition
\begin{equation}
    \diby L {\dot\phi^\alpha} = 0
\end{equation}
is the only non--trivial equation of motion needed to be satisfied by the theory. However, the best--matching condition on a cyclic variable is completely equivalent to a standard variation with respect to a Lagrange multiplier. This is because, for a Lagrange multiplier $\Lambda$, the Euler--Lagrange equations are
\begin{equation}
    \diby L \Lambda = 0.
\end{equation}
Thus, in the equivariant case, one can think of the best--matching variation as a variation where the connection is treated as a Lagrange multiplier. This is the way that the \emph{lapse} and \emph{shift} are varied in the ADM action. However, we see here that it is more appropriate to think of these as cyclic variables varied by a best--matching variation. This approach is mathematically equivalent but is based on a more solid conceptual picture.

\subsection{Classical equations of motion}\label{sec:part_jac_eom}

We are now in a position to compute the canonical equations of motion of our theory. The definitions of the momenta, (\ref{eq:pa_jac}) and (\ref{eq:pia_jac}), imply that the canonical Hamiltonian vanishes, as it must for a reparametrization invariant theory. Thus, the total Hamiltonian $\htot$ is proportional to the constraints
\equa{
    \htot = N\ham + \Na\lin
}
where the \emph{lapse}, $N$, and \emph{shift}, $\Na$, are just Lagrange multipliers enforcing the scalar and vector constraints respectively. We use this terminology to emphasize that these Lagrange multipliers play the same role as the lapse and shift in general relativity.

The best--matching variation implies
\begin{align}
    \dot{\phi}^\alpha &= \pb{\wa}{\htot} = \Na, \\
    \dot{\pi}_\alpha &= \pb{\pi}{\htot} = 0, \qand \\
    \pia &=0.
\end{align}
The $\wa$'s are seen to be genuinely arbitrary given that their derivatives are equal to the shift vectors. As expected, the $\pia$'s are found to be constants of motion set to zero by the best--matching condition. 

We now perform a standard variation of the $q$'s and $p$'s. A short calculation shows that Hamilton's first equation $\dot{q}^a = \pb{q^a}{\htot}$, can be re-written as
\equa{\label{eq:h1_pq}
    p_a = \frac{1}{2N} g_{ab}\, \lf.G^{-1}\rt.^b_c \diby{}{\lambda}\lf( G^c_d q^d \rt),
}
where we have made use of the definition $\lf.G^{-1}\rt.^a_b = \myexp{-\wa\ta{a}{b}}$. Note that $G$ can be rewritten in terms of the shift vectors using the equation of motion $\dw = \Na$. By (\ref{eq:killing_field}), we find that, in terms of barred quantities, (\ref{eq:h1_pq}) becomes
\equa{\label{eq:h1_pbar}
    \bar{p}_a = \frac{1}{2N} g_{ab}(\bar{q})\, \dot{\bar{q}}^b,
}
where $\bar{p}_a = \lf.G^{-1}\rt.^b_a p_b$.

Hamilton's second equation gives
\equa{
    \dot{p}_a = -N (\partial_a g^{bc}) p_b p_c + \Na p_b \ta{b}{a},
}
which, upon repeated use of (\ref{eq:killing_field}), leads to
\equa{
    \dot{\bar{p}}_a = -N (\bar{\partial}_a g^{bc}(\bar{q})) \bar{p}_b \bar{p}_c.
}
Thus, the equations of motion can now be written purely in terms of the best--matched quantities:
\equa{\label{eq:eom_int}
    \frac{1}{2N}\diby{}{\lambda}\lf( \frac{1}{2N} g_{ab}(\bar{q})\, \dot{\bar{q}}^b \rt) = -\frac{1}{2} \bar{p}_b \bar{p}_c \bar{\partial}_a g^{bc}(\bar{q}).
}

We can now use the conformal flatness of the metric $g_{ab} = \Omega \eta_{ab} = (-2V) \eta_{ab}$ and the scalar constraint $g^{ab}\, p_a p_b = 1 \ra \eta^{ab} p_a p_b = -2V$ to write (\ref{eq:eom_int}) in a more recognizable form. Identifying $\dot\tau \equiv -\frac{N}{V}$, (\ref{eq:eom_int}) reduces to
\equa{\label{eq:newton}
    \frac{\partial^2 \bar{q}^a}{\partial\tau^2} = -\bar{\partial}^a V(\bar{q}).
}
This, as expected, is Newton's $2^\text{nd}$ law with $\tau$ playing the role of Newtonian time and with the $q$'s replaced by their best--matched values. Note that we did not use the conformal flatness of the metric until the last step and then only to write our results in a more recognizable form. We note in passing that Newton's laws are just \eq{newton} written in the \emph{proper time} gauge, analogous to the similar gauge condition used in general relativity, where $N=1$ and $\Na = 0$. This special gauge also corresponds to Barbour's \emph{distinguished representation} \cite{barbour:scale_inv_particles}.

\subsection{Solving the constraints}

It is now possible to use Hamilton's first equation to invert the scalar and vector constraints and solve explicitly for the lapse and shift. This will allow us to write the equations of motion in terms of gauge invariant quantities having eliminated all auxiliary fields $\phi$. Solving for the lapse and shift allows us to write the equations of motion in a gauge invariant form. This will allow us to identify gauge independent quantities.

The shift can be solved for by inserting Hamilton's first equation
\equa{
    \dot q^a = 2N p_b g^{ab} - \Na \lf. t_\alpha \rt.^a_b q^b
}
into the vector constraint $\mathcal H_\alpha = \pia - p_a \ta{a}{b} q^b = 0$ after applying the best--matching condition $\pia = 0$. Inverting the result for $\Na$ gives
\equa{\label{eq:shift}
    \Na M_{\alpha\beta} = \eta_{ab} \dot{q}^a \lf. t_\beta \rt.^b_c q^c,
}
where
\equa{
    M_{\alpha\beta} = \eta_{ab} \ta{a}{c} \lf. t_\beta \rt.^b_d\, q^c q^d.
}
In the above, we have used $g_{ab} = \Omega \eta_{ab}$ and removed as factors $\Omega$ and $N$. The fact that $N$ drops out is what allows the scalar and vector constraints to decouple allowing the system to be easily solved. Theories of dynamic geometry are typically more sophisticated, and this separation is no longer possible. If $M_{\alpha\beta}$ is invertible, then $\Na$ is given formally in terms of its inverse $M^{\alpha\beta}$. In \scn{bid_observables} we shall give simple closed--form expressions for $\Na$ for non--relativistic particle models invariant under translations and dilitations. The inversion of $M_{\alpha\beta}$ for non--Abelian groups, such as the rotations in 3 dimensions, is formally possible but illuminating, closed--form expressions are difficult to produce.

The lapse can be solved for using (\ref{eq:killing_field}) and inserting Hamilton's first equation (\ref{eq:h1_pbar}) into the scalar constraint $\mathcal H = g^{ab} p_a p_b -1 = 0$. This gives
\equa{\label{eq:lapse}
    N = \frac{1}{2} \sqrt{g_{ab}(\bar{q})\, \dot{\bar{q}}^a\dot{\bar{q}}^b}.
}
Having already solved for the shift we can use it to compute $G^a_b(\wa)$ in the above expression using the equation of motion $\dw = \Na$. We can now express all equations of motion without reference to auxiliary quantities.

\section{Canonical best matching}\label{sec:can bm}

In this section, we will illustrate a powerful method for applying the best--matching procedure starting directly from phase space. Although the motivation behind the procedure, and in particular the canonical best--matching condition, comes from the Lagrangian framework where the configurations are fundamental and Mach's principles can be intuitively implemented, the canonical framework is mathematically more powerful and can bring additional insight into the gauge structure of the theory. The inspiration for this approach comes from \cite{Gomes:linking_paper}, which streamlined the dualization procedure presented in \cite{gryb:shape_dyn}, where shape dynamics was discovered. This approach to best matching is thus an important tool for understanding non--equivariant best matching and the structure of shape dynamics.

We start immediately from the phase space $\Gamma(q^a,p_a)$ coordinatized by the configuration variables $q^a$ and their conjugate momenta $p_a$. We equip $\Gamma$ with a symplectic 2--form that induces the Poisson brackets
\begin{equation}
    \pb{q^a}{p_b} = \delta^a_b.
\end{equation}
To define a geodesic principle on configuration space, we require that the initial speed of the system along the classical trajectory not affect the shape of the classical solutions. In the canonical language, this implies a Hamiltonian constraint $\ham$ that constrains $p_a$ to a unit vector on $\Gamma$. Thus,
\begin{equation}
    \ham = g^{ab} p_a p_b -1 \approx 0.
\end{equation}

The next step is to trivially extend the phase space $\Gamma(q,p) \to \Gamma_e(q,p,\wa,\pi )$ by introducing $\wa$ and its conjugate momenta $\pia$. The symplectic structure is extended such that the only new non--vanishing Poisson bracket is
\begin{equation}
    \pb{\wa}{\pi_\beta} = \delta^\alpha_\beta.
\end{equation}
To guarantee that $\wa$ is arbitrary, we can add the constraint
\begin{equation}
    \pia \approx 0,
\end{equation}
which does nothing but set $\dot\phi^\alpha$ to a Lagrange multiplier. Because $\ham$ is independent of $\wa$, $\pia\approx 0$ is trivially first class with respect to $\ham$. Note this is \emph{not} the best--matching condition since we have not yet done any matching. It is simply a constraint that does not affect the original geodesic principle.

The shifting is accomplished on phase space by performing a canonical transformation $ (q,\phi; p, \pi) \to (\bar q,\bar \phi; \bar p, \bar \pi)$ generated by the type--II generating functional $F(q,\phi; \bar p, \bar \pi)$ defined by
\begin{equation}
    F(q,\phi; \bar p, \bar \pi) = \bar p_b G(\phi)^b_a q^a + \wa \bar \pi_\alpha.
\end{equation}
The transformed $q$'s and $\phi$'s can easily be computed
\begin{align}
    \bar q^a &=\diby{F}{\bar p_a} = G^a_b(\phi) q^b \\
    \bar \phi^\alpha &=\diby{F}{\bar \pia} = \wa.
\end{align}
The $q$'s are just shifted by the action of $\cG$ and $\phi$ is unchanged. The transformed momenta are more interesting. Using
\begin{align}
 p_a &=\diby{F}{q^a} = G^b_a(\phi) \bar p_b \\
    \pia &=\diby{F}{\bar\phi^\alpha} = \bar\pi_\alpha + p_a \ta{a}{b} q^b,
\end{align}
we find that
\begin{align}
    \bar p_a &= (G^{-1})^b_a p_b  \\
    \bar \pi_\alpha &= \pia - p_a \ta a b q^b.
\end{align}
Clearly, $\bar p_a$ is the same $\bar p_a$ we found convenient to define when working out the classical equations of motion.

The Hamiltonian constraint transforms to
\begin{equation}
    \ham \to \bar p_a \bar p_b \bar g^{ab} - 1.
\end{equation}
Because of the equivariance property, this reduces to
\begin{equation}
    \ham \to p_a p_b g^{ab} - 1.
\end{equation}
Thus, in the equivariant case, $\ham$ is invariant under the best--matching canonical transformation. This also implies that it is first class with respect to the best--matching condition $\pia \approx 0$, a key difference from the non--equivariant case. As we will see, if $g$ is non--equivariant, the best--matching condition will only close with the Hamiltonian constraint under special circumstances.

Under the canonical transformation generated by $F$, the first class system of constraints $(\ham\approx 0, \pia\approx 0)$ transforms to
\begin{align}
    \ham &\approx 0 & \lin &= \pia - p_a \ta a b q^b \approx 0.
\end{align}
These are exactly the constraints obtained from the Legendre transform of the original best--matching action. At this point, the canonical transformation has done nothing to the original theory. To implement Mach's first principle, we must perform a best--matching variation with respect to $\wa$ by imposing the best--matching condition $\pia \approx 0$. Because of the equivariance property of $g$, $\wa$ does not appear anywhere in the Hamiltonian of the transformed theory. Thus, $\pia \approx 0$ is first class with all constraints. Since $\wa$ doesn't appear in the theory, we can integrate out $\wa$ and $\pia$ by imposing $\pia = 0$ strongly. This reduces the phase space back to the original one, leaving the constraints
\begin{align}
 \ham &\approx 0 & p_a \ta a b q^b &\approx 0.
\end{align}
Since the second constraint is the vanishing of the generalized momentum associated to the symmetry $\cG$, we have reproduced the results obtained from the Lagrangian approach.

\section{Gauge--independent observables} \label{sec:bid_observables}

The simple form of \eq{eom_int} and (\ref{eq:newton}) suggests there might be something fundamental about the best--matched coordinates $\bar{q}^a = G^a_b q^b$. In fact, as can be easily checked, they commute with the primary, first class vector constraints $\lin$. The $\bar{q}$'s are then invariant under the gauge transformations \eq{gauge_trans} generated by $\lin$. They do \emph{not}, however, commute with the quadratic scalar constraint. For this reason, they are non--perennial \emph{observables} in the language of Kucha\v r \cite{kuchar:time_int_qu_gr,Kuchar:can_qu_gra}, who argues that such quantities are \emph{the} physically meaningful observables of reparameterization invariant theories. Barbour and Foster take this argument further in \cite{barbour_foster:dirac_thm}, showing how Dirac's theorem fails for finite--dimensional reparameterization invariant theories. The reason for ignoring the non--commutativity of the observables with the scalar constraint is that the scalar constraint generates physically \emph{distinguishable} configurations. This is in contrast to the linear vector constraints, which generate physically \emph{indistinguishable} states. In this work, we will use Kucha\v r's language to describe these observables and see that, in all cases where the constraints can be solved explicitly, the $\bar{q}^a$ are manifestly relational observables.

The best--matched coordinates, $\bar{q}^a$, have a nice geometric interpretation. Using Hamilton's first equation for $\dw$, the $\bar q$'s can be written in terms of the shift as
\equa{\label{eq:holonomy}
    \bar{q}^a = \myexp{\wa\ta{a}{b}}q^b = \mathcal{P}\myexp{\int \Na\ta{a}{b}\, d\lambda} q^b,
}
where $\mathcal P$ implies path--ordered integration. Thus, the best--matched coordinates are obtained by subtracting the action of the open--path holonomy of the lapse (thought of as the pullback of the connection over $\acs$ onto the classical path) on the $q^b$'s. This subtracts all vertical motion of the $q$'s along the fiber bundle.

\subsection{Special cases}

The significance of the $\bar{q}$'s is more clearly seen by solving the constraints for specific symmetry groups. First, consider non--relativistic particle models invariant under translations. The $q$'s represent particle positions in 3--dimensional space. The $a$ indices can be split into a vector index, $i$, ranging from 1 to 3, and a particle index, $I$, ranging from 1 to the total number of particles in the system. Then, $a = iI$. For $\eta_{ab}$ we use the diagonal mass matrix attributing a mass $m_I$ to each particle. With the generators of translations (see Table~(\ref{tbl:sim group gen})), (\ref{eq:shift}) takes the form
\equa{
    \vec{N} = \frac{\sum_I m_I \vec{q}_I}{\sum_I m_I} \equiv \dot{\vec{q}}_{\text{cm}}.
}
The shift is the velocity of the center of mass $\vec{q}_{\text{cm}}$. Aside from an irrelevant integration constant, which can be taken to be zero, the auxiliary fields $\phi^\alpha$ represent the position of the center of mass. Inserting this result into \eq{holonomy}, the corrected coordinates are
\equa{
    \bar{q}^a = q^a - q^a_{\text{cm}}.
}
They represent the difference between the particles' positions and the center of mass of the system. This is clearly a relational observable. Furthermore, the non--physical quantity is the position of the center of mass since the theory is independent of its motion.

We can also treat models invariant under dilatations.\footnote{See \cite{barbour:scale_inv_particles} for more details on these models.} In this case, (\ref{eq:shift}) is easily invertible since there is only a single shift function, which we will call $s$. Using the same index conventions as before and the generators of dilatations from Table~(\ref{tbl:sim group gen}) we find
\equa{
    s = \diby{}{\lambda} \lf( -\frac{1}{2}\ln I \rt),
}
where $I = \sum_I m_I (\vec{q}_I)^2$ is the moment of inertia of the system. Aside from an overall integration constant, which can be set to zero, the auxiliary field is $-1/2$ times the log of the moment of inertia. Using \eq{holonomy}, the corrected coordinates are the original coordinates normalized by the square root of the moment of inertia
\equa{
    \bar{q}^a = \frac{q^a}{\sqrt{I}}.
}
Because $I$ contains two factors of $q$, $\bar{q}$ will be invariant under rescalings of the coordinates. Thus, the corrected coordinates are independent of an absolute scale.

The quantity $\tau$, which plays the role of the Newtonian time, can now be computed. It is a function of the lapse and the potential. Since the lapse is an explicit function of the corrected coordinates, it will be observable. Using the definition $\dot\tau \equiv -\frac{N}{V}$ and (\ref{eq:lapse}), $\tau$ is simply
\equa{
    \tau = \int d\lambda\, \sqrt{-\frac{T(\bar{q})}{V(\bar{q})}},
}
where $T = \frac{1}{2} m_{ab} \dot{\bar{q}}^a\dot{\bar{q}}^b$ is the relational kinetic energy of the system. $\tau$ is independent of $\lambda$ and observable within the system. Thus, once the constraints have been solved for, the equations of motion (\ref{eq:newton}) are in a particularly convenient gauge--independent form. This definition of $\tau$ corresponds to Barbour and Bertotti's \emph{ephemeris time} \cite{barbour:nature, barbourbertotti:mach}.

\section{Background dependence and independence}\label{sec:BI_BD}

The presence of a symmetry of the configurations of $\acs$ allows for a distinction between two types of theories:\footnote{These definitions should apply equally to equivariant and non--equivariant theories.}
\begin{itemize}
  \item those that attribute physical significance to the exact location of the configuration variables along the fiber generated by the symmetry. We will call these theories \emph{Background Dependent (BD)}.
  \item those that \emph{do not} attribute any physical significance to the exact location of the configuration variables along the fiber. These theories will be called \emph{Background Independent (BI)}.
\end{itemize}
Based on these definitions, it would seem odd even to consider BD theories as they distinguish between members of an equivalence class. These theories are useful nevertheless whenever there is an \emph{emergent} background that breaks the symmetry in question at an effective level. This is the situation that arises in Newton's bucket experiment. Although one \emph{could} treat this system in a relational BI way, this would be cumbersome because the exact motions of the fixed stars would have to be known in order to solve the angular momentum constraints. Instead, it is simplest to treat this, as Newton did, as a system with an absolute background given by the rest frame of the fixed stars. In this way, we see that BD theories shouldn't be fundamental but are, nevertheless, useful to describe the effective physics of a subsystem of the universe.

Best matching provides a framework for making our definitions of BD and BI more precise. Whenever there are symmetries in the configurations  it is possible to introduce auxiliary fields $\wa$ whose role is nothing more than to parametrize the symmetry. Indeed, making the $\phi$'s dynamical can be very useful since, as we have seen, the full power of Dirac's formalism \cite{dirac:lectures} can be used to study the dynamical effects of the symmetry. In addition, introducing the $\wa$ fields gives us the freedom to distinguish between BD and BI theories as follows:
\begin{itemize}
  \item \emph{BD theories} are those that vary the $\wa$ fields in the standard way using \emph{fixed} endpoints. This requires the specification of appropriate initial and final data, which is considered to be physically meaningful.\footnote{So as not to add redundancy to the boundary conditions, we can set $\wa(\lambda_{\text{in}}) = \wa(\lambda_{\text{fin}}) = 0$ without loss of generality. The boundary conditions on the $q$'s will then contain all the information about the absolute position of the $q$'s along the fiber.}
  \item \emph{BI theories} are those that vary the $\wa$ fields using a best--matching variation.
\end{itemize}

Given these definitions, we can understand the physical difference between BD and BI theories by considering the form of the vector constraints (\ref{eq:lin}). In the BD case, the $\pia$'s are constants of motion. These constants are determined uniquely by the initial conditions on the $q$'s. However, in the BI theory, the constants of motion are irrelevant and are seen as unphysical. The initial data on $\mathcal R$ cannot affect their value. As a result, the BD theory requires more inputs in order to give a well defined evolution. The difference is given exactly by the dimension of the symmetry group. This is precisely in accordance with Poincar\' e's principle.

\section{Time and Hamilton's principle}

As we have seen in Section~(\ref{sec:equiv S}), Parametrized Hamilton's Principle (PHP) is an alternative to a square root action principle for determining the dynamics of a system. It is still a geodesic principle on configuration space but the square root is disposed of in place of a mathematically simpler action. The cost of having this simpler action is the introduction of an auxiliary field whose role is to restore the reparameterization invariance. PHP has the advantage over square root actions in that it singles out a preferred parametrization of the geodesics through a choice of normalization of the scalar constraint. For a standard normalization, the preferred parameter is just Barbour and Bertotti's ephemeris time. In geometrodynamics, this quantity will be related to the proper time of a freely falling observer. In the case of the particle models, PHP is just the action principle for the parameterized particle treated, for example, in \cite{lanczos:mechanics}.

Because time is now dynamical, we can use the definitions of BI and BD from \scn{BI_BD} to distinguish between theories that have a background time and those that are timeless. As we would expect, Newton's theory, which contains explicitly an absolute time, can be obtained from PHP with a background time. Alternatively, a timeless theory is obtained from PHP by keeping the time background independent.

\subsection{Action and Hamiltonian}

To simplify the discussion, we will ignore for the moment the spatial symmetries. This will avoid having to deal with the linear constraints. Comparison to the equations of previous sections can either be made by setting the shift, $\Na$, equal to zero or by unbarring quantities. It can be verified that neglecting the spatial symmetries does not affect the discussions regarding time  \cite{sg:mach_time}.

PHP is defined by the action
\equa{\label{eq:ham_action}
	S_H = \int_{q_{\text{in}}}^{q_{\text{fin}}} d\lambda\, \frac{1}{2}\lf[ \frac{1}{\dot\tau} \gamma_{ab} \dot{q}^a \dot{q}^b + \dot\tau \Omega \rt].
}
The Lagrangian has the form of Hamilton's principle $T - V$ (recall that $\Omega = -2V$) but the absolute time $\tau$ has been promoted to a dynamical variable by parameterizing it with the auxiliary variable $\lambda$. This explains the name: Parametrized Hamilton's Principle. The particular normalization used takes advantage of the conformal split of the metric and singles out BB's ephemeris time as a preferred parameter for the geodesics.\footnote{Alternatively, one could split the action as $S_H = \int d\lambda\, \lf[ \frac{1}{\dot\tau} \Omega\gamma_{ab} \dot{q}^a \dot{q}^b - \dot\tau \rt]$ without changing the equations of motion. This action would single out an affine parameter for the geodesics. It corresponds to dividing the scalar constraint by $\Omega$.} For simplicity, we restrict ourselves to metrics of the form: $\gamma_{ab} = \eta_{ab}$.

We can perform a Legendre transform to find the Hamiltonian of the system. Defining the momenta
\begin{align}
	p_a &= \frac{\delta S_H}{\delta \dot{q}^a} = \frac{1}{\dot\tau} \eta_{ab} \dot q^b, \qand \label{eq:p_a}\\
	p_0 &= \frac{\delta S_H}{\delta \dot\tau} = -\frac{1}{2} \lf[ \frac{\eta_{ab}\dq^a\dq^b}{\dot\tau^2} - \Omega \rt] \label{eq:p_0}
\end{align}
we note that they obey the scalar constraint
\equa{
	\ham_{\text{Ham}} \equiv \frac{1}{2}\lf( \eta^{ab} p_a p_b - \Omega \rt) + p_0 = \frac{\Omega}{2} \ham_\text{Jacobi} + p_0 = 0.
}
The appearance of the $p_0$ term is the only difference, other than the factor $\Omega$, between the timeless geodesic theory and PHP (the factor 2 is purely conventional). The factor $\Omega$ can be absorbed by a field redefinition of the lapse and has no bearing on physical observables. It is a relic of our choice of ephemeris time to parametrize geodesics. Using the definitions \eq{p_a} and \eq{p_0}, we find that the canonical Hamiltonian is identically zero, as it must be for a reparameterization invariant theory. Thus, the total Hamiltonian is
\equa{\label{eq:pnm_ham}
	H_\text{T} = N\ham = N \lf( \frac{1}{2}\eta^{ab} p_a p_b -\frac{1}{2} \Omega + p_0 \rt).
}

\subsection{BI theory}

In PHP, time is promoted to a configuration space variable. The symmetry associated with translating the origin of time is reflected in the invariance of the action under time translations $\tau \ra \tau + a$, where $a$ is a constant. In \cite{sg:mach_time}, it is shown that applying the best--matching procedure to this symmetry is equivalent to treating $\tau$ itself as an auxiliary field. To make the theory background independent with respect to the temporal symmetries, we follow the procedure outlined in \scn{BI_BD} and impose the best--matching condition after evaluating the Poisson brackets. In this case, the best--matching condition takes the form $p_0 = 0$.

We pause for a brief observation. Since the best--matching variation of a cyclic variable is equivalent to the standard variation of a Lagrange multiplier, we can replace $\dot\tau$ with $N$ when doing a background independent formulation of PHP. Then, the action (\ref{eq:ham_action}) bears a striking resemblance to the ADM action. This illustrates why the ADM action is background independent as far as time is concerned. However, the ADM action hides the possibility of introducing a background time (following the procedure given in the next section). Considering this, it might be more enlightening to think of the lapse as a cyclic variable subject to best--matching variation as is done in \cite{barbour:scale_inv_particles} and \cite{Anderson:cyclic_ADM}.

In order to compare this to the standard timeless geodesic theory, it is instructive to work out the classical equations of motion
\begin{align}
 	\dq^a &= \pb{q^a}{H_\text{T}} = N \eta^{ab} p_b, \\
	\dot p_a&= \pb{p_a}{H_\text{T}} = N \partial_a \lf( \frac{\Omega}{2} \rt) = -N \partial_a V  \\
	\dot\tau &= \pb{\tau}{H_\text{T}} = N, \qand \label{eq:ham1_tau}\\
	\dot p_0 &= \pb{p_0}{H_\text{T}} = 0. \label{eq:ham_p0}
\end{align}
(\ref{eq:ham1_tau}) reinforces the fact that $\tau$ is an auxiliary. It is straightforward to show that the above system of equations implies
\equa{
	\frac{\partial^2 q^a}{\partial\tau^2} = -\partial^a V(q).
}
This is Newton's $2^{\text{nd}}$ law. Solving the scalar constraint gives an explicit equation for $\tau$,
\equa{\label{eq:ham_tau}
	\dot\tau = \sqrt{\frac 1 \Omega \eta_{ab} \dq^a \dq^b } = \sqrt{\frac{T}{-V}},
}
using the definitions for $V$ and $T$ given in \scn{jacobi_principle}. This is precisely the expression for the ephemeris time $\tau$ defined in the usual timeless geodesic theory. It should be noted that the best--matching condition implies that the integration constant of (\ref{eq:ham_p0}) is zero. Use was made of this to deduce (\ref{eq:ham_tau}). From this it is clear that the BI theory is classically equivalent to the timeless geodesic theory.

One can take this further and compare the two theories quantum mechanically. Noticing that the canonical action is linear in $\dot\tau$ and $p_0$, we can integrate out $\tau$ without affecting the quantum theory and use the best--matching condition $p_0 = 0$ to reduce the scalar constraint to
\equa{
	\ham = g^{ab} p_a p_b - 1 = 0
}
(after factoring $\Omega$). This is the scalar constraint (\ref{eq:jacobi_constraint}) of the timeless geodesic theory. With $\tau$ now defined by (\ref{eq:ham_tau}), the canonical theories are identical. Thus, their canonical quantizations should also match. For more details on the equivalence of these theories quantum mechanically, see \cite{sg:mach_time}, where the path integrals for these theories are worked out in detail.

\subsection{BD theory}\label{sec:toy BD theory}

For the BD theory, we do not impose the best--matching condition. Integration of (\ref{eq:ham_p0}) implies $p_0 \equiv -E$. The only effect that this has on the classical theory is to alter the formula for $\tau$ to
\equa{
	\dot\tau = \sqrt{\frac{T}{E-V}}.
}
Now an initial condition is imposed on $\tau$ that fixes the value of $E$ and violates Poincar\' e's principle. Thus, $\tau$ is equivalent to a Newtonian absolute time. Note that inserting a background time would have been impossible if we started with the ADM form of PHP.

Strictly speaking, there is a difference between $E$, defined as the negative of the momentum canonically conjugate to time, and $E'$, which is just the constant part of $V = -E' + V'$. Together they form what we would normally think of as the total energy $E_\text{tot} = E + E'$ of the system. $E'$ is freely specifiable and plays the role of a fundamental constant of nature while $E_\text{tot}$ is fixed by the initial conditions on $\tau$. In the classical theory, it is unnecessary to make a distinction between $E_\text{tot}$ and $E$. However, in the quantum theory, this distinction is important because of the possible running of constants of nature like $E'$. In general relativity, the role of $E'$ is played by the cosmological constant. As a result, this distinction may be relevant to the cosmological constant problem \cite{Smolin:unimodular_grav}.

\subsection{A problem of time}

In the classical theory, it seems that there is only a very subtle difference between the BI and BD theories. The difference amounts to the ability to impose boundary conditions on $\tau$ that constrain the total energy. However, the quantum theories are drastically different. Using Dirac's procedure, we promote the scalar constraint to an operator constraint on the wavefunction $\Psi$. In the BD theory, Dirac's procedure applied to the Hamiltonian (\ref{eq:pnm_ham}) gives the time dependent Schr\"odinger equation
\equa{
	\hat\ham \Psi = \lf[ \frac{1}{2}\eta^{ab} \hat p_a \hat p_b + V(\hat q) + \hat p_0 \rt] \Psi  = 0.
}
In a configuration basis, $p_0 = -i\diby{}{\tau}$. Thus, the above is indeed the standard Schr\"odinger equation. However, in the BD theory, the best--matching condition requires $p_0 = 0$ leaving instead the time \emph{independent} Schr\"odinger equation
\equa{
	\hat\ham \Psi = \lf[ \frac{1}{2}\eta^{ab} \hat p_a \hat p_b + V(\hat q) - E' \rt] \Psi  = 0,
}
where we have explicitly removed the constant part of the potential. While it is easy to define an inner product in the BD theory under which evolution will be unitary this is not the case in the BI theory. This makes it difficult to define a Hilbert space for the BI theory (at least at the level of the entire universe). The difficulties associated with this can be called a \emph{problem of time} similar to what happens in quantum geometrodynamics.\footnote{In geometrodynamics, there are \emph{additional} complications associated with foliation invariance or many-fingered time. As far as I know, these have no analogues in the finite--dimensional models.} It is interesting to note that, in finite--dimensional models, one can eliminate this problem of time by artificially introducing a background time. In Section~(\ref{sec:unimolduar}), we study the effects of applying the same procedure to geometrodynamics and are led to unimodular gravity. Clearly the issue of background independence is of vital importance in the quantum theory. This will have important implications in any quantum theory of gravity.

%
%
\chapter{Non--equivariant best matching}\label{chap:non_equivariant_bm}

In the last chapters, we saw how Mach's principles can be clearly stated and implemented using best matching. We then saw that best matching is naturally seen as a way of defining a connection on a principal fiber bundle constructed by quotienting a redundant configuration space, $\acs$, by some suitably defined gauge group, $\cG$. In this way, relational theories are seen to be best understood as gauge theories on configuration space. One normally motivates the gauge theory approach by starting with a global symmetry of the action and then \emph{gauging} this symmetry by making it local using gauge covariant derivatives. The best--matching procedure does not follow this paradigm. Instead, the gauging process is derived from a natural procedure motivated by Mach's principles. This raises the question: does the deeper understanding of the gauge principle provided by best matching suggest any natural generalizations? Indeed, there is at least one possibility that emerges. This involves best matching an action that is \emph{not} globally invariant under the desired symmetry. This generalization of the gauge principle can lead to the construction of dual theories that have the desired symmetry. In particular, GR is a theory of this type with respect to 3d conformal transformations. Its conformal dual is shape dynamics.

In this chapter, we will illustrate the key properties of non--equivariant best--matched theories and show how they can lead to a dual representation. These features will be illustrated using a toy model that possesses many key features of shape dynamics. This toy model will be used to motivate a general dualization procedure that one can follow starting from a non--equivariant theory. Then, we will show how canonical best matching provides a powerful framework that draws a link between the original theory and its dual through a series of gauge fixing conditions required by the best--matching variation. To begin with, we will make some general geometric observations that will set the foundation for our discussion.

\section{Non--equivariance}

In Chapter~(\ref{chap:equiv bm}), we only considered metrics, $g_{ab}$, on configuration space that are equivariant under the action of the group $\cG$. This implied the equivariance condition
\begin{equation}
    G^c_a g_{cd}(Gq) G^d_b = g_{ab}(q), \label{eq:equiv cond}
\end{equation}
which, in turn, implied that the original action
\begin{equation}
    S = \int d\lambda \sqrt { g_{ab} \dot q^a \dot q^b }
\end{equation}
is invariant under a global $\lambda$ independent gauge transformation $q^a \to G^a_b(\wa) q^b$. This can easily be seen by noting that for $\dot\wa = 0$, only the metric transforms to the LHS of \eq{equiv cond}. Thus, the equivariance condition immediately implies that $S$ is invariant.

One can understand this from geometric considerations on the PFB $\acs$. Locally, \eq{equiv cond} takes the form of the vanishing of the Lie derivative of the metric in the direction of the flow generated by $\cG$. This means that the fibers are the integral curves of Killing vector fields of $g_{ab}$. Now consider a particular curve, $\gamma$, living on a section of $\acs$ defined by the best--matching condition (ie, a curve on the horizontal lift of $\acs$). The value of the best--matched action is defined as the length of $\gamma$ in terms of the metric $g_{ab}$. If one shifts $\gamma$ by a \emph{constant amount} along the fibers then, since these are Killing directions of $g_{ab}$, the length of $\gamma$, and thus the value of $S$, will remain unchanged. Thus, there is a dim($\cG$) parameter family of paths $\gamma$, related by global gauge transformations, that minimize $S$. Nevertheless, these all project down to the \emph{same} path in the base space $\rcs$. In the equivariant approach, it is this projectability that makes it possible to define a unique geodesic principle on the reduced configuration space $\rcs$. In the non--equivariant approach, this argument breaks down. Nevertheless, it is still possible to use a best matching connection to define a geodesic principle on the reduced configuration space.

In the non--equivariant case, what is lost is the invariance of the action under constant translations of $\gamma$ in the vertical directions of $\acs$. However, it is still possible that there exists a particular choice of $\gamma$, corresponding to a preferred section in $\acs$, that minimizes $S$. One can then define a metric on $\rcs$ by taking the value of $g_{ab}$ on this preferred section. This still provides a geodesic principle on $\rcs$ that does not require any specification of initial data along the vertical directions of $\acs$. Thus, the theory will still satisfy Mach's principles as stated by Poincar\'e (i.e., that the freely specifiable initial data of the theory must correspond to a point and a direction on the reduced configuration space). In other words, if there exists a particular section for which the length of $\gamma$ is extremized, then this extremization condition selects a position in the vertical direction of $\acs$ for which no freely specifiable initial data can be given. Thus, the variational procedure itself performs the required quotienting from $\acs$ to $\rcs$.

There are two interesting consequences to this. First, it is possible to define a background independent relational theory with respect to a symmetry that doesn't exist as a global symmetry of the original action. This seems contradictory the spirit of the gauge principle and many definitions of background independence. Second, it provides a way of constructing a dual theory that \emph{does} have the desired global symmetry. This can be accomplished in the following way. Once the preferred section on $\acs$ has been found that minimizes $S$, a metric on $\rcs$ can be defined by taking the value of $g_{ab}$ on this section. It is then possible to equivariantly lift this metric back to $\acs$ to get a truly equivariant metric over all of $\acs$. That new metric can be used to define the dual theory on $\acs$. This is the configuration space picture of what happens in the dualization procedure used to construct shape dynamics. In the coming sections, we will be mainly concerned with the phase space picture that is more powerful mathematically. However, the configuration space picture is still useful because it gives a nice (though somewhat restrictive) intuitive picture for what is going on. 

To conclude this section, we point out what conditions need to be satisfied in order for non--equivariant theories to be consistent. The main thing that can go wrong, from the point of view of the Hamiltonian picture, is that the Hamiltonian constraint is no longer invariant under $\cG$--transformations. The negative effect of this is that the best--matching constraint, imposed by the best--matching variation, will no longer be first class with respect to it. This could lead to an inconsistent system because the constraint algebra is not closed. However, in fortunate situations, the constraints reduce to a first class system in a particular gauge. This is a standard technique for dealing with second class systems, although there is no guarantee that such gauges exist. The fact that this is the case in GR is still a mystery. Whether this is a fortunate accident or whether there is some deeper principle that we have not yet discovered at work behind this mechanism is a question of top priority.

\section{Toy model}

We begin our discussion of non--equivariant best matching by developing, in detail, a toy model that contains many of the key features of shape dynamics. Our strategy will be to first recognize the non--equivariance then outline a step--by--step strategy for finding a consistent theory then constructing the dual theory. We will see that the Hamiltonian constraints will be first class with respect the best--matching condition. To obtain a first class system, we can find a particular gauge for the Hamiltonian constraints where the system is first class. From this, it is possible to construct the dual theory and compare its classical equations of motion to the original. Accordingly, we start by \emph{gauge fixing} some of the symmetries of the original theory then \emph{replace} them with different symmetries. Remarkably, this achieves a \emph{trading} of symmetries. Thus, what we will present is an algorithm for trading one symmetry for another.

The toy model consists of $n$ independent harmonic oscillators in $d$ dimensions. The action is\footnote{Uppercase indices are only summed when explicitly shown. Also, $v^2 \equiv v^a v^b \delta_{ab}$ for vectors and $u^2 \equiv u_a u_b \delta^{ab}$ for covectors.}
\begin{equation}
    S = \int_{t_0}^{t_1} dt\, \sum_{I=1}^{n} \frac 1 2 \left[ \lf(\frac{dq_I}{dt}(t) \rt)^2  - k q^2_I(t) \right].
\end{equation}
The masses have been set to 1, for convenience. We will need to consider negative spring constants $k<0$ to avoid imaginary solutions. Unstable solutions will not worry us here since we are interested mainly in the symmetries of the toy model, which we do not consider a physical theory.

We write the above action in a reparametrization invariant form by introducing the auxiliary fields $N_I$ and the arbitrary label $\lambda$
\begin{equation}
    S = \int^{\lambda_1}_{\lambda_0} d\lambda\, \sum_{I=1}^{n} \left[ \frac{\dot q^2_I(\lambda)}{2N_I}  + N_I \lf( \frac{\absk}{2} q^2_I(\lambda) + \mathcal{E}_I \rt) \right],
\end{equation}
where we have restricted to strictly negative $k$. We choose to parametrize the trajectory of each particle independently and allow the time variables to be varied dynamically. The $N_I$'s represent the $\lambda$ derivatives of the time variables and mimic the local lapse functions of geometrodynamics, whose terminology we will borrow. The total energy of each particle $\mathcal{E}_I$ essentially specifies how much time should elapse for each particle in the interval $\lambda_1 - \lambda_0$ so that this theory is equivalent to $n$ harmonic oscillators with different ``local'' time variables. We have introduced the local reparametrization invariance as a way of studying this feature of geometrodynamics in a finite dimensional toy model. It is not meant to have any direct physical significance.

After performing the Legendre transform $S = \int d\lambda \sum_{I=1}^n \lf( \pi^I_a \dot q^a_I - H_0(q, \pi)\rt)$, where $\pi^I_a = \frac{\delta S}{\delta \dot q^a_I}$, the local reparametrization invariance leads to $n$ first class constraints, $\chi_I$. The total Hamiltonian is
\begin{align}\label{original hamiltonian}
    H_0 &= \sum_{I=1}^n N^I \chi_I, \text{ where} \\
    \chi_I &= \frac 1 2 \lf( \pi^2_I - \absk q^2_I \rt) - \mathcal{E}_I
\end{align}
and the phase space $\Gamma(q,\pi)$ is equipped with the Poisson bracket
\begin{equation}
    \pb{q^a_I}{\pi_{b}^J} = \delta^a_b\delta_{I}^J.
\end{equation}
Our goal, as explained below, will be to exchange all but one of the first class $\chi_I$'s for new first class conformal constraints.

\subsection{Identify the symmetries to best match}

Our objective is to best--match this theory with respect to an analogue of local conformal invariance. However, we must be careful when choosing the precise symmetry to best match. A na\" ive choice would be to best match with respect to the symmetry, $\mathcal S$, defined by identifying the configurations
\begin{equation}
    q_I^a \to e^{\phi_I} q^a_I,
\end{equation}
so that each particle gets its own conformal factor, $\phi_I$. Then, we would be trading \emph{all} the $\chi_I$'s for this new symmetry. This choice, however, will lead to frozen dynamics because one global $\chi$ must be left over to generate dynamical trajectories on phase space\footnote{We draw the reader's attention to \cite{barbour_foster:dirac_thm}, which suggests that such a global $\chi$ does not generate a gauge transformation but a genuine physical
evolution.}. Thus, we need to leave at least one linear combination of $\chi$'s unchanged. Because of this, our first -- and most subjective -- step is to identify a subset of $\mathcal S$ to best match.

A natural choice is the symmetry $\mathcal S / \mathcal V$, where $\mathcal V$ represents all configurations that share the same total moment of inertia. Since the total moment of inertia sets the global scale of the system, $\mathcal S / \mathcal V$ identifies all locally rescaled  $q$'s that share the same global scale. This symmetry requires invariance under the transformation
\begin{equation}\label{eq: q symmetry}
    q_I^a \to e^{\hat\phi_I} q^a_I,
\end{equation}
where $\hat\phi$ obeys the identity
\begin{equation}\label{eq: q identity}
    \mean{\hat\phi} = 0.
\end{equation}
The \emph{mean} operator, is defined as $\mean{\cdot} \equiv \frac 1 n \sum_{I=1}^n \cdot_I$, where $\cdot$ signifies inclusion of the field we would like to take the mean of. The idea is to trade \emph{all but one} of the \emph{local} Hamiltonian constraints for constraints that generate local rescalings. In this particle model, ``local'' means ``individual'' for each particle. As is straightforward to show, by imposing the constraint \eq{ q identity}, we restrict ourselves to rescalings that do not change the total moment of inertia of the system. This restriction on the rescalings mimics the volume preserving condition in shape dynamics. After performing the symmetry trading, we will still have a single Hamiltonian constraint left over to generate dynamics. This is crucial for obtaining a non--trivial theory.

We can impose the identity (\ref{eq: q identity}) explicitly by writing $\hat\phi$ in terms of the local scale factors $\phi_I$
\begin{equation}\label{eq: hat phi def}
    \hat\phi_I = \phi_I - \mean{\phi}.
\end{equation}
Thus, we parametrize the group $\mathcal S / \mathcal V$ redundantly by $\mathcal S$. The fact that the $\chi_I$'s are \emph{not} invariant under this symmetry implies that we have a non--equivariant best--matching theory with respect to this symmetry.

\subsection{Perform canonical best matching}\label{sec:toy can bm}

The next step is to best match the symmetry $\mathcal S / \mathcal V$ following the procedure outlined in Section~(\ref{sec:can bm}). This is achieved first by trivially extending the phase space $\Gamma(q,\pi) \to \Gamma_\text{e}(q, \pi, \phi, \pi_\phi)$ to include the canonically conjugate variables $\phi_I$ and $\pi_\phi^I$, then by applying the canonical transformation $T$
\begin{align}
    q^a_I &\to T q^a_I \equiv \bar q^a_I = e^{\hat\phi_I} q^a_I \\
    \pi^I_a &\to T \pi^I_a \equiv \bar \pi^I_a = e^{-\hat\phi_I} \pi_a^I \\
    \phi_I & \to T \phi_I \equiv \bar \phi_I = \phi_I \\
    \pi^I_\phi & \to T \pi^I_\phi \equiv \bar \pi^I_\phi = \pi^I_\phi - \lf[ q^a_I \pi^I_a - \mean{q\cdot\pi} \rt]
\end{align}
generated by
\begin{equation}\label{eq: can trans q}
    F = \sum_I \lf( q^a_I \exp(\hat\phi_I) \bar\pi^I_a + \phi_I \bar \pi^I_\phi \rt)
\end{equation}
As before, the non--zero Poisson bracket's introduced in $\Gamma_\text{e}$ are
\begin{equation}
    \pb{\phi_I}{\pi_\phi^J} = \delta^J_I.
\end{equation}
We note that the introduction of $(\phi, \pi^\phi)$ is analogous to the restoration of $U(1)$--gauge symmetry in massive electrodynamics through the introduction of a St\"uckelberg field $\phi$: the original photon field $A_\mu$ is replaced by a $U(1)$--transformed $A_\mu+\partial_\mu \phi$, where $\phi$ is not viewed as a gauge parameter but as a new field with its own gauge transformation property, such that the transformed photon mass term $\frac{m^2}2 (A^\mu+\partial^\mu \phi) (A_\mu+\partial_\mu \phi)$ is gauge-invariant. Because of this close analogy, we will often refer to $\phi$ as a St\"uckelberg field in the remaining text.

The trivial constraint $\pi_\phi^I \approx 0$ in the original extended theory transforms to
\begin{equation}
    T\pi_\phi^I \equiv C^I = \pi_\phi^I - \lf( \pi^I_a q^a_I - \mean{\pi\cdot q} \rt)\approx 0.
\end{equation}
Because $T$ is a canonical transformation, the $C^I$'s are trivially first class with respect to the transformed constraints $T\chi_I$
\begin{equation}
    \chi_I \to T\chi_I = \frac 1 2 \lf( e^{-2\hat\phi} \pi_I^2 - e^{2\hat\phi} \absk q_I^2 \rt).
\end{equation}
In fact, since $\pi_\phi^I \approx 0$ commuted with any functions $f(q,p)$ in the original theory, $C^I\approx 0$ commutes with \emph{any} functions, $f(\bar q,\bar p)$, of the best--matched quantities $\bar q$ and $\bar p$. This should be expected: the canonical transformation (\ref{eq: can trans q}) did not change the theory. But, by extending the phase space, we have added two phase space degrees of freedom per particle. The role of the $C_I$'s is precisely to remove these degrees of freedom.  This can only happen if the $C_I$'s are first class. Consequently, the infinitesimal gauge transformations generated by the $C_I$'s on the extended phase space variables
\begin{align}
    \delta_{\theta_J} q_I &= -\delta^{IJ}\hat\theta_I q_I & \delta_{\theta_J} \pi_I &= \delta^{IJ}\hat\theta_I\pi_I \nonumber\\
   \label{eq: C_I extended trans} \delta_{\theta_J} \phi_I &= \delta^{IJ}\theta_J & \delta_{\theta_J} \pi^I_\phi &= 0,
\end{align}
where $\delta_{\theta_J} \cdot = \pb{\cdot}{ \theta^J C_J}$ (no summation), represents at this stage a completely artificial symmetry.

Before writing down the remaining constraints, we note that one of the $C_I$'s is singled out by our construction. Averaging all the $C_I$'s gives
\begin{equation}\label{eq: pi identity}
    \mean{\pi_\phi} \approx 0.
\end{equation}

However, the action of this constraint is trivial for arbitrary functions $f$ on the extended phase space $\Gamma_\text{e}$. To see this, note that the gauge transformations $\delta_\theta \cdot = \pb{\cdot}{\theta\mean{\pi_\phi}}$ generated by (\ref{eq: pi identity}) are trivial for the phase space variables $q$, $\pi$, $\hat\phi$, and $\pi_\phi$. The only transformation that is not automatically zero is $\delta_\theta \hat\phi_I = \theta \pb{\hat\phi_I}{\mean{\pi_\phi}}$, which can easily be seen to vanish. Since the $\phi$'s only enter the theory through the $\hat\phi$'s, we see that the constraints (\ref{eq: pi identity}) act trivially on all quantities in our theory. This is a reflection of the fact that there is a redundancy in the $\hat\phi$ variables. Eq~(\ref{eq: pi identity}) is then an identity and not an additional constraint. This reflects the redundancy in the original $\hat\phi$ variables expressed in terms of $\phi$'s.

It will be convenient to separate this identity from the remaining constraints by introducing the numbers $\alpha^I_i$, where $i = 1\dots (n-1)$, such that the set $\lf\{\mean{\pi_\phi}, C_i \rt\}$, where $C_i = \sum_I \alpha^I_i C_I$, forms a linearly independent basis for the $C_I$'s.\footnote{Note that the ability to explicitly construct this basis is a feature of the finite dimensional model. This possibility sets the toy model apart from the situation in geometrodynamics, which is more subtle.} We are then safe to drop the constraint $\mean{\pi_\phi} = 0$ from the theory. The remaining constraints are
\begin{align}
    T\chi_I &\approx 0 , \text{   and} & C_i \approx 0.
\end{align}
where $I = 1,\dots,n$ and $i = 1,\dots,n-1$.

\subsection{Impose and propagate best--matching constraints}\label{sec: Mach conditions}

In order to complete the best--matching procedure, we impose the \emph{best--matching} constraints
\begin{equation}
    \pi_\phi^I \approx 0.
\end{equation}
One of the best--matching constraints, $\mean{\pi_\phi} \approx 0$, is first class but trivially satisfied. It can be dropped. The remaining $\pi_\phi^I$'s can be represented in a particular basis as $\pi_\phi^i = \alpha^i_I \pi_\phi^I$ and are second class with respect to the $T\chi$'s. To see this, consider the smearing functions $f^I$. Then, the Poisson bracket
\begin{equation}
    \sum_{I,J=1}^{n} \pb{f^IT\chi_I}{ \alpha^j_J\pi^J_\phi} = \sum_{J=1}^n \lf[ \mean{\alpha^j_J f^J \lf(  \pi^2_J + \absk q^2_J \rt)} - \alpha^j_J f^J \lf( \pi^2_J + \absk q^2_J \rt) \rt]
\end{equation}
is non--zero for general $f^I$ (in the above, the mean is taken with respect to the $j$ components). Nevertheless, the constraints $\pi_\phi^i$ can be propagated by the Hamiltonian by using the preferred smearing $f^I = N^I_0$ that solves the $n-1$ equations
\begin{equation} \label{eq: lf particle}
    \sum_{J=1}^n \alpha^j_J N_0^J \lf( \absk q^2_J + \pi^2_J \rt) = \mean{\alpha^j_J N_0^J \lf( \absk q^2_J + \pi^2_J \rt)}.
\end{equation}
These lapse fixing equations lead to consistent equations of motion.

We can now understand the reason for imposing the identity $\mean{\hat\phi} = 0$. Were $\hat\phi$ unrestricted, the $n\times n$ matrix $\pb{T\chi_I}{\pi_\phi^J}$ would have a trivial kernel $N^I_0 = 0$. This uninteresting solution leads to frozen dynamics. Alternatively, the $(n-1)\times n$ matrix $\pb{T\chi_I}{\pi_\phi^i}$ has a one dimensional kernel. Thus, (\ref{eq: lf particle}) has a one parameter family of solutions parameterized by the global lapse. This leaves us, in the end, with a non--trivial time evolution for the system.

The fact that we have a remaining global lapse indicates that we have a remaining linear combination of $T\chi_I$'s that is first class. We can split the $T\chi_I$'s into a single first class constraint
\begin{equation}
    T\chi_\text{f.c.} = \sum_I N_0^I T\chi_I
\end{equation}
and the second class constraints
\begin{equation}
    T\chi_i = \sum_I \beta_i^I T\chi_I,
\end{equation}
where the $\beta_i^I$ are numbers chosen so that the above set of constraints is linearly independent. Note that, in the finite dimensional case, it is easy to ensure that this new basis is equivalent to the original set of $\chi_I$'s. In geometrodynamics, the situation is more subtle because we are dealing with continuous degrees of freedom. Using this splitting, we now have the first class constraints
\begin{align}\label{eq 1st class part}
    T\chi_\text{f.c.} &\approx 0, \text{   and} & C_i &\approx 0
\end{align}
as well as the second class constraints
\begin{align}\label{eq 2nd class part}
    T\chi_i &\approx 0, \text{   and} & \pi^i_\phi &\approx 0.
\end{align}

\subsection{Eliminate second class constraints}

To see that this procedure has succeeded in exchanging symmetries, we can eliminate the second class constraints by defining the Dirac bracket
\begin{equation}
    \pb{\cdot}{\cdot\cdot}_\text{Db} = \pb{\cdot}{\cdot\cdot}_\text{Pb} + \sum_{i,j = 1}^{n-1} \pb{\cdot}{\pi_\phi^i} C_i^j \pb{T\chi_j}{\cdot\cdot} - \pb{\cdot}{T\chi_i} C^i_j \pb{\pi_\phi^j}{\cdot\cdot},
\end{equation}
where
\begin{equation}
     C^i_j = \pb{\pi_\phi^i}{T\chi_j}^{-1}.
\end{equation}
The existence and uniqueness of the Dirac bracket depend on the invertibility of $C$, which, in turn, depends on the existence and uniqueness of the lapse fixing equations (\ref{eq: lf particle}). In this toy model, (\ref{eq: lf particle}) are just arbitrary linear algebraic equations. Thus, $C$ will generally have a non--vanishing determinant.

The advantage of using the Dirac bracket is that the second class constraints $\pi_\phi^i$ and $T\chi_i$ become first class and can be applied \emph{strongly} and eliminated from the theory. This procedure is greatly simplified because one of our second class constraints, namely $\pi_\phi^i = 0$, is proportional to a phase space variable. Thus, the method discussed by Dirac in \cite{Dirac:CMC_fixing} can be applied to eliminate the second class constraints. For $\pi_\phi^i$, this involves setting $\pi_\phi^i = 0$ everywhere in the action. For the $T\chi_i$'s, this involves treating $T\chi_i$ as an equation for $\phi$. The solution, $\phi_0(q, \pi)$, of
\begin{equation}\label{eq phi_0 particle}
    \lf. T\chi_i\rt|_{\phi^I = \phi^I_0} =0
\end{equation}
is then inserted back into the Hamiltonian. This leaves us with the dual Hamiltonian
\begin{equation}
    H_\text{dual} = N T\chi_\text{f.c.}(\phi_0) + \sum_{i=1}^{n-1} \Lambda^i \sum_{I=1}^n \alpha^i_I\lf( q^a_I \pi^I_a - \mean{q \cdot \pi} \rt),
\end{equation}
where $N$ and $\Lambda^i$ are Lagrange multipliers. We can write this in a more convenient form by using the redundant constraints $D_I = q^a_I \pi^I_a - \mean{q \cdot \pi}$
\begin{equation} \label{eq dual ham particle}
    H_\text{dual} = N T\chi_\text{f.c.}(\phi_0) + \sum_{I=1}^{n} \Lambda^I D_I
\end{equation}
remembering that one of the $D_I$'s is trivially satisfied.

The remaining Hamiltonian is now dependent only on functions of the original phase space $\Gamma$. Furthermore, the Dirac bracket between the dual Hamiltonian and any functions $f(q,\pi)$ on $\Gamma$ reduces to the standard Poisson bracket. This can be seen by noting that the extra terms in the Dirac bracket $\pb{f(q,\pi)}{H_\text{dual}}_\text{Db}$ only contain a piece proportional to $\pb{f(q,\pi)}{\pi_\phi}$, which is zero, and a piece proportional to, $\pb{\pi_\phi^i}{H_\text{dual}}$, which is weakly zero because $H_\text{dual}$ is first class. Thus, this standard procedure eliminates $\phi$ and $\pi_\phi$ from the theory.

The theory defined by the dual Hamiltonian (\ref{eq dual ham particle}) has the required symmetries: it is invariant under global reparametrizations generated by $\lf. \chi_\text{f.c.}\rt|_{\phi_0}$ and it is invariant under scale transformations that preserve the moment of inertia of the system. This last invariance can be seen by noting that the first class constraints $D_I$ generate exactly the symmetry (\ref{eq: q symmetry}).

Before leaving this section we will give two explicit examples of how one can construct $T\chi_\text{f.c.}(\phi_0)$ explicitly. In the first example, we solve the second class $T\chi_i$'s for $\phi_0$ exactly and then plug the solution into the definition of $T\chi_\text{f.c.}$. This immediately gives the general solution for $n$ particles. Unfortunately, this procedure cannot be implemented explicitly in geometrodynamics because the analogue of $T\chi_i(\phi_0) = 0$ is a differential equation for $\phi_0$. There is, however, a slightly simpler procedure that one can follow for constructing $T\chi_\text{f.c.}(\phi_0)$ provided one can find initial data that satisfy the initial value constraints $\chi_i$ of the original theory. If such initial data can be found, then $\phi_0 = 0$ is a solution for these initial conditions. Because the theory is consistent, the initial value constraints will be propagated by the equations of motion. Thus, $\phi_0 = 0$ for all time. We can then use this special solution and compute $\chi_\text{f.c.}$ using its definition. This alternative method is given in the second example for the $n=3$ case and agrees with the more general approach.

\subsubsection{General $n$}

For simplicity, we take $\mathcal E_I = 0$. This will lead to transparent equations, which we value more in the toy model than physical relevance. In the particle model, it is possible to find $\phi^I_0$ explicitly for an arbitrary number of particles, $n$, by solving $T\chi_I(\phi_0) = 0$. It is easiest to solve for $\hat\phi^I_0$ directly. The solution is\footnote{Negative spring constants $k<0$ are required to give real solutions to this equation.}
\begin{equation}\label{eq phi_0}
    e^{2\hat\phi^I_0} = \sqrt{\frac{\pi^2_I}{\absk q^2_I}}.
\end{equation}
We can choose a simple basis for the second class $\chi_I$'s: $\{ \chi_i| i \neq j\}$, for some arbitrary $j$. The first class Hamiltonian is then
\begin{equation}
    \chi_\text{f.c.} = N^j_0 \chi_j(\hat\phi^I_0).
\end{equation}
Using the identity $\hat\phi^j = - \sum_{I\neq j} \hat\phi^I$, we can rewrite this as
\begin{equation}
    \chi_\text{f.c.} \propto \exp\lf( 2\sum_{I\neq j} \hat\phi^I_0 \rt) \pi^2_j + \exp\lf( -2\sum_{I\neq j} \hat\phi^I_0 \rt) \absk q^2_j.
\end{equation}
Inserting (\ref{eq phi_0}), the first class $\chi$ becomes
\begin{equation}\label{eq chi_fc n particle}
    \chi_\text{f.c.} \propto \prod_{I=1}^n \pi^2_I  - \absk^{n} \prod_{I=1}^n q^2_I.
\end{equation}
It can be readily checked that (\ref{eq chi_fc n particle}) is invariant under the symmetry
\begin{align}
    q_I &\to e^{\hat\phi_I} q_I & \pi_I &\to e^{-\hat\phi_I} \pi_I,
\end{align}
where, $\hat\phi^I = \phi^I - \mean{\phi}$.

In geometrodynamics, solving for $\hat\phi$ explicitly will not be possible because it will require the inversion of a partial differential equation. However, if one is only interested in mapping solutions from one side of the duality to the other, then it is sufficient to find initial data that satisfy the $\chi_I$'s with $\phi = 0$ and use the preferred lapse to construct the global Hamiltonian. This procedure is outlined in the next example.

\subsubsection{Example: $n = 3$}

We analyse the 3 particle case because this involves subtleties that do not arise in the 2 particle case but must be dealt with in the general $n$ particle case.

If we choose the basis $\{\pi^i_\phi\} = \{\pi^1_\phi, \pi^2_\phi\}$, then the lapse fixing equations (\ref{eq: lf particle}) with $\phi = 0$ have the unique solution
\begin{align}
    N^1_0 &= N^3_0 \lf( \frac{\absk q^2_3 + \pi^2_3}{\absk q^2_1 + \pi^2_1} \rt) &  N^2_0 &= N^3_0 \lf( \frac{\absk q^2_3 + \pi^2_3}{\absk q^2_2 + \pi^2_2} \rt).
\end{align}
Inserting this solution into $\lf. \chi_\text{f.c.}\rt|_{\phi = 0} =\sum_I N^I_0 \chi_I$, we find
\begin{equation}
    \chi_\text{f.c} \propto \lf( \pi^2_1 \pi^2_2 \pi^2_3 - \absk^3 q^2_1 q^2_2 q^2_3  \rt) - 2\chi_1 \chi_2 \chi_3.
\end{equation}
The consequence of setting $\phi^I = 0$ instead of using it to solve $T\chi_I(\phi_0^I) = 0$ is that we pick up a term that is proportional to $\chi_1$ and $\chi_2$, which are \emph{strongly} zero. For this reason, we need to ensure that we have initial data that solve the initial value constraints ($\chi_1 = 0$ and $\chi_2 = 0$ in this case) for $\phi = 0$ in order to arrive at the correct global Hamiltonian. Assuming that such data have been found, the last term is zero and $\chi_\text{f.c.}$ reduces to
\begin{equation}\label{eq chi_fc 3 particle}
    \chi_\text{f.c} \propto \lf( \pi^2_1 \pi^2_2 \pi^2_3 - \absk^3 q^2_1 q^2_2 q^2_3 \rt).
\end{equation}
This agrees with our general result from last section.

\subsection{Construct explicit dictionary}

We can now compare our starting Hamiltonian with that of the dual theory:
\begin{align}
    H_0 &= \sum_I N^I \chi_I \\
    H_\text{dual} &= N\lf. \chi_\text{f.c.}\rt|_{\phi_0} + \sum_I \Lambda^I \lf( \pi^I_a q^a_I - \mean{ \pi_I \cdot q_I} \rt) \notag \\
                  &= N\lf[ \prod_{I=1}^n \pi^2_I  - \absk^{n} \prod_{I=1}^n q^2_I \rt] + \sum_I \Lambda^I \lf( \pi^I_a q^a_I - \mean{ \pi_I \cdot q_I} \rt).
\end{align}
The first theory is locally reparametrization invariant while the second has local scale invariance and is only globally reparametrization invariant. Note the highly non--local nature of the global Hamiltonian of the dual theory.

The equations of motion of the original theory are
\begin{align}
    \dot q^a_I &= N^I \pi^I_b \delta^{ab} & \dot \pi^I_a &= N^I \absk q^b_I \delta_{ab}.
\end{align}
Those of the dual theory are
\begin{align}
    \dot q^a_I &= \lf[2N \prod_{J\neq I} \pi^2_J \rt] \pi^I_b \delta^{ab} + \frac{n-1}{n}\Lambda_I q^a_I \\
    \dot \pi^I_a &= \lf[2N \prod_{J\neq I} \absk q^2_J \rt] \absk q^b_I \delta_{ab} + \frac{1-n}{n}\Lambda_I \pi^I_a.
\end{align}
We can now read off the choice of Lagrange multipliers for which the equations of motion are equivalent
\begin{align}\label{eq N dic}
    N^I &= 2N \prod_{J\neq I} \pi^2_J \approx 2N \prod_{J\neq I} \absk q^2_J \\
    \Lambda^I &= 0.
\end{align}
In the second equality of (\ref{eq N dic}), the $\chi_I$'s have been used.

\section{General symmetry trading algorithm}\label{sec:trading algorithm}

Based on the above description of the procedure, we can sketch out a general algorithm for trading the first class symmetry $\chi$ for $D$:
\begin{enumerate}
    \item Extend the phase space of the theory to include the Stueckelberg field $\phi$ and its conjugate momentum $\pi_\phi$ then artificially introduce the symmetry to be gained in the exchange by performing a canonical transformation of the schematic form:
    \begin{equation}
	F(\phi) = q \exp(\phi) \bar \pi + \phi\bar \pi_\phi.
    \end{equation}
	The trivial constraint $\pi_\phi \approx 0$ transforms to
    \begin{equation}
	    T \pi_\phi \equiv C = D - \pi_\phi\approx 0.
    \end{equation}
    \item Perform a best--matching variation by imposing the conditions
	\begin{equation}
	    \pi_\phi \approx 0.
	\end{equation}
    If they are first class, then the procedure just ``gauges'' a global symmetry. If they are second class with respect to the transformed first class constraints, $T\chi$, of the original theory, then they can be exploited for the exchange if the Lagrange multipliers, $N^0$, can be fixed uniquely such that
    \begin{equation}
	\pb{T\chi(N^0)}{\pi_\phi} = 0.
    \end{equation}
    When this is possible, the $\pi_\phi$'s can be treated as special gauge fixing conditions for $\chi$. It is possible, as in our case, that there will be a part of $T\chi$ that is still first class with respect to $\pi_\phi$. This should be separated from the purely second class part.
    \item Define the Dirac bracket to eliminate the second class constraints. This can be done provided the operator $\pb{T\chi}{\pi_\phi}$ has an inverse (which also means that $N^0$ exists and is unique). The second class constraints can be solved by setting $\pi_\phi = 0$ everywhere in the Hamiltonian and by setting $\phi = \phi_0$ such that $T\chi(\phi_0) = 0$. The second class condition and the implicit function theorem guarantee that this can be done. When this is done, the Dirac bracket reduces to the Poisson bracket on the remaining phase space variables. The $\pi_\phi$ terms drop out of the $C$'s and the $\chi$'s have been traded for $D$'s.
    \item To check consistency, construct the explicit dictionary by reading off the gauges in both theories that lead to equivalent equations of motion.
\end{enumerate}

Guided by this basic algorithm, we will now construct a formal geometric picture to illustrate why this algorithm works. 

\subsection{Geometric picture}

Let us now examine the mathematical structure behind the trading of gauge symmetries that we encountered by constructing the dual theory. We start out with a gauge theory $(\Gamma,\{.,.\},H,\{\chi_i\}_{i\in\mathcal I})$, where $\Gamma$ is the phase space of the theory supporting the Poisson-bracket $\{.,.\}$, the Hamiltonian $H$, and the constraints $\chi_i$. We demand that the constraints are first class, i.e. for all $i,j\in \mathcal I$ there exist phase space functions $f_{ij}^k$ as well as $u_i^k$ s.t.
\begin{equation}\label{equ:first-class-constraints}
 \begin{array}{rcl}
   \{\chi_i,\chi_j\}&=&\sum_{k\in\mathcal I}f^k_{ij} \chi_k\\
   \{H,\chi_i\}&=&\sum_{k\in\mathcal I} u^k_i \chi_k.
 \end{array}
\end{equation}
The constraints define a subspace\footnote{Although we assert ``space'' we abstain from topologizing $\Gamma$ or any of its subsets in this thesis.} $\mathcal C = \{x\in \Gamma: \chi_i(x)=0\,\, \forall i\in\mathcal I\}$ and the first line of equation \eqref{equ:first-class-constraints} implies that the derivations $\{\chi_i,.\}$ are tangent to $\mathcal C$, while the second line implies that the Hamilton vector field $\{H,.\}$ is tangent to $\mathcal C$. The $\{\chi_i,.\}$ generate a group $\mathbb G$ of gauge-transformations on $\mathcal C$ that lets us identify $\mathcal C$ through an isomorphism $i:E\to\mathcal C$ with a bundle\footnote{This bundle is, in general, not a fibre bundle, since different points $x$ in $\mathcal C$ generally have different isotropy groups $Iso(x)$.} $E$ over $\mathcal C/\mathbb G$. The fibres of $E$ are gauge orbits of a point $x\in \mathcal C$ and thus isomorphic to $\mathbb G/Iso(x)$. According to Dirac, we identify the fibres with physical states and observe that the total Hamiltonian $H_{tot}=H+\sum_{i\in\mathcal I}\lambda^i \chi_i$ depends on a set of undetermined Lagrange multipliers $\{\lambda_i\}_{i\in\mathcal I}$. Fixing a gauge means to find a section $\sigma$ in $E$.

This is most easily done by imposing a set of gauge-fixing conditions\footnote{In general the index set for $\phi_i$ could be different from $\mathcal I$.} $\{\phi_i\}_{i \in \mathcal I}$ such that the intersection of $\mathcal G=\{x\in\Gamma:\phi_i(x)=0\,\,\forall i\in \mathcal I\}$ with $\mathcal C$ coincides with $i(\sigma)$. To preserve $i(\sigma)$ under time evolution we have to solve $\left.\{H_{tot},\phi_i\}\right|_{\mathcal C}=0\,\,\forall i\in\mathcal I$ for the Lagrange-multipliers $\lambda^j=\lambda^j_o$. The gauge-fixed Hamiltonian is
\begin{equation}H_{gf}:=H+\sum_{i\in\mathcal I}\lambda_o^i \chi_j\end{equation} and the equations of motion are generated by $\{H_{gf},.\}$, while the initial value problem is to find data on $i(\sigma)$.\medskip

Given a gauge theory $(H,\{\chi_i\}_{i \in \mathcal I})$ on a phase space $(\Gamma,\{.,.\})$ we define a {\bf dual} gauge theory as a gauge theory $(H_d,\{\rho_j\}_{j\in \mathcal J})$ on $(\Gamma,\{.,.\})$ if and only if there exists a gauge fixing in the two theories such that the initial value problem and the equations of motion of both theories are identical.

One particular way to construct a dual theory is to use gauge-fixing conditions $\{\phi_i\}_{i\in\mathcal I}$ defining $\mathcal G$ such that for all $i,j\in\mathcal I$ there exist phase space functions $g^k_{ij}$ satisfying the integrability condition
\begin{equation}
 \{\phi_i,\phi_j\}=\sum_{k\in \mathcal I} g^k_{ij} \phi_k.
\end{equation}
The $\{\phi_i,.\}$ thus generate a group $\mathbb H$ of transformations on $\mathcal G$ that lets us identify $\mathcal G$ with a bundle $F$ over $\mathcal G/\mathbb H$ using an isomorphism $j:F\to \mathcal G$. The fibres of $F$ at $x\in \mathcal G$ are $\mathbb H/Iso(x)$. The Hamiltonian $H_d$ of the dual theory has to satisfy
\begin{equation}\label{equ:H-dual-conditions}
 \begin{array}{rcl}
   \left.\{H_d,.\}\right|_{\sigma}&=&\left.\{H_{gf},.\}\right|_{\sigma}\\
   \left.\{H_d,\phi_j\}\right|_{\mathcal G}&=&0.
 \end{array}
\end{equation}
These conditions as well as the integrability condition can be fulfilled by construction using the canonical St\"uckelberg formalism to implement symmetry under an Abelian group $\mathbb H$ with the subsequent elimination of the St\"uckelberg field by substituting it with the solution to the constraint equations in the original theory. The nontrivial conditions for the St\"uckelberg formalism to yield the anticipated trading of gauge symmetries are: 1) that the generators of the group furnish a gauge--fixing for the original gauge symmetry and 2) that the transformed constraint equations admit a solution in terms of the St\"uckelberg field. In the particle model, the first condition is satisfied by the invertibility of (\ref{eq: lf particle}) in terms of $N_0$ while the second condition is guaranteed by the invertibility of $T\chi_i(\phi_0) = 0$ in terms of $\phi_0$.

\section{Non--equivariance and \emph{linking} theories}\label{sec:linking toy}

The symmetry trading algorithm just presented can be viewed in a more powerful way using the concept of a \emph{linking} gauge theory. We will adopt the general idea introduced in \cite{Gomes:linking_paper} to the non--equivariant toy model presented in this chapter.

Consider the totally constrained Hamiltonian gauge theory on extended phase space $\Gamma_\text{e}(q,p, \phi,\pi_\phi)$ with the symplectic structure given in Section~(\ref{sec:toy can bm}) and the first class constraints
\begin{align}
    \chi_I &\approx 0  &  \pi^I_\phi &\approx 0.
\end{align}
Since the $\chi_I$'s do not depend on the $\phi$'s, one can straightforwardly apply the phase space reduction $\Gamma_\text{e} \to \Gamma$ by setting $\pi^I_\phi = 0$ strongly. This shows explicitly the trivial statement that the extended theory is equivalent to the original one. For reasons that will become apparent in a moment, we call the extended theory a \emph{linking theory} when the metric is non--equivariant with respect to the symmetry parametrized by the $\phi$'s.

We can perform the canonical transformation $T$, defined in Section~(\ref{sec:toy can bm}), in the linking theory. This transforms the constraints to
\begin{align}
    T \chi^I &\approx 0  &  C^I \equiv \pi^I_\phi - D^I &\approx 0,
\end{align}
where
\begin{equation}
    D^I = q^a_I p_a^I - \mean{q\cdot p}.
\end{equation}
We split $T \chi_I$ into $T\chi_\text{f.c.}$ and $T\chi_i$ and $C^I$ into $C^i$ and $\mean{C}$ (which is redundant). This gives the constraints
\begin{align}
    \chi_\text{f.c.} &\approx 0  &  \mean{C} &= 0 \\
    T \chi^i &\approx 0  &  C^i &\approx 0,
\end{align}
where $\chi_\text{f.c}$ is first class with respect to the $D^i$'s. (Note that the equals sign in the above equation is not a typo: this constraints is automatically satisfied.)

The reason for calling this a ``linking'' theory is that it provides a \emph{link} between the original theory and the dual theory through a particular choice of gauge fixings. If we choose the gauge $\hat\phi_I = 0$ (which is trivially a valid gauge choice for this theory), then the map $T$ becomes the identity and the constraint $\pi_\phi^i = \chi^i$ must be fixed strongly. Thus, the theory reduces to the original theory on $\Gamma$. Alternatively, the gauge $\pi_\phi^i = 0$ can be fixed by solving $T\chi^i = 0$ strongly for $\phi$. This leaves $\chi_\text{f.c.}\approx 0 $ and $D^i \approx 0$, which is precisely the dual theory. Thus, both the original theory and the dual theory represent different gauge fixings of the linking theory. This is a powerful way of thinking about the dualization procedure. Figure~(\ref{fig:toy linking theory}) shows how the linking theory leads to the original or dual theory through different gauge fixing.
\begin{figure}\label{fig:toy linking theory}
    \begin{center}
	\includegraphics[width=0.9\textwidth]{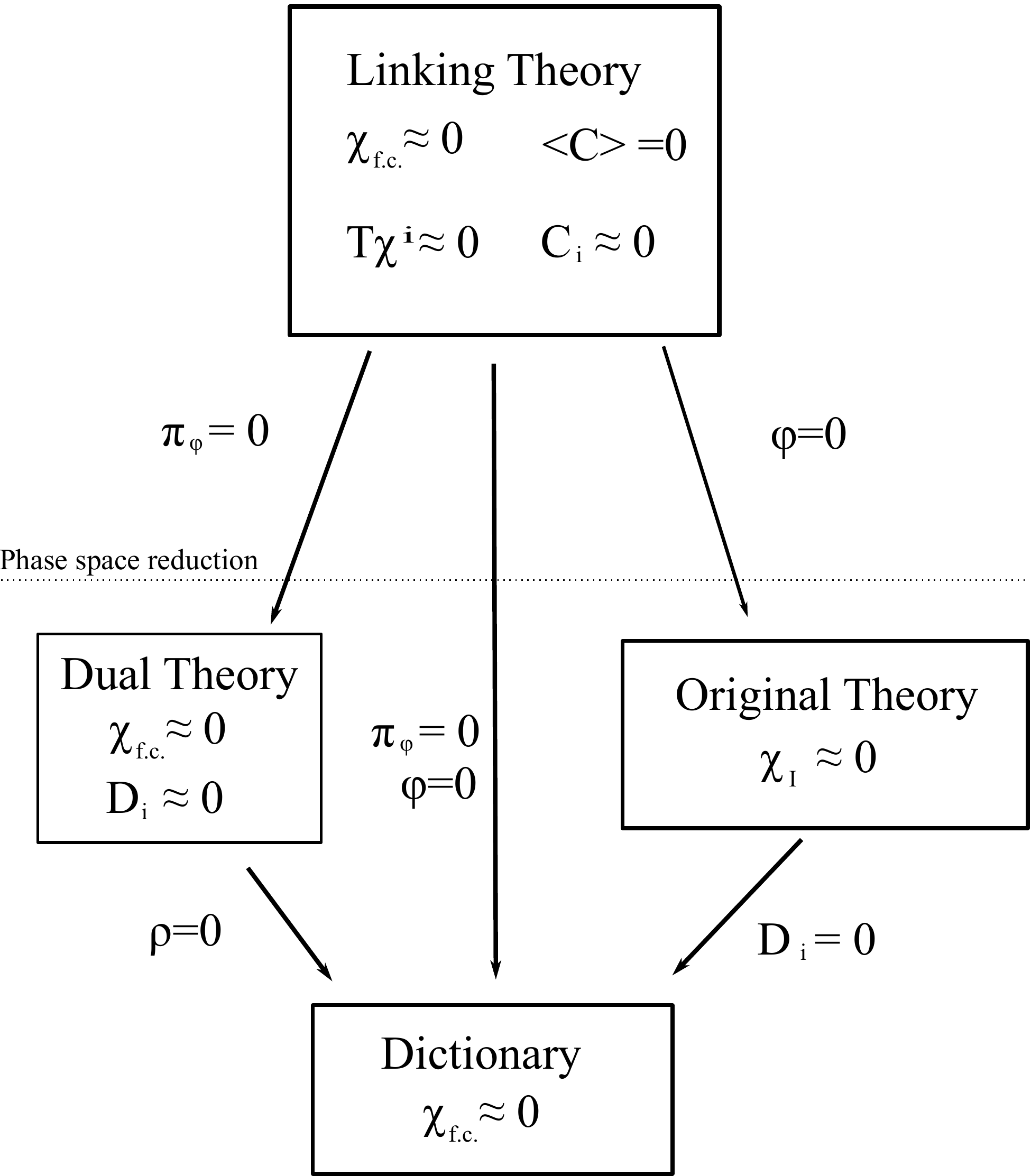}
    \end{center}
    \caption{Different gauge fixings of the \emph{linking theory} distinguish between the original and dual theories.}
\end{figure}

%
%
\chapter{Best matching in geometrodynamics}\label{chap:bm_geo}

In this chapter, we study different relational geometrodynamic theories. By ``geometrodynamic,'' we mean theories that involve a dynamical Euclidean metric on an arbitrary spatial manifold $\Sigma$. We will be mainly dealing with 3--geometry but the generalization to $n$ dimensions and Lorentzian metrics is either trivial or straightforward. By ``relational,'' we mean that the theories satisfy Mach's principles as they have been outlined throughout the text. This means that we will be using best matching to eliminate what we consider to be unphysical information contained in the metric. Accordingly, the first step is to identify what we believe to be the empirically meaningful information contained in the metric. We will argue that this is nothing but the information about the \emph{shape} of local configurations. Mathematically, this corresponds to best matching the 3--metric with respect to 3--dimensional diffeomorphisms and conformal Weyl transformations. The theory we are led to in this fashion is shape dynamics.

After identifying the configuration space to work with and the symmetry group to best match, the only remaining ambiguity is the choice of the metric on $\acs$. For shape dynamics, this choice is highly non--trivial. We do not yet have a simple understanding of the choice of metric that leads to shape dynamics. Instead, we motivate our choice by picking a metric, equivariant with respect to the conformal diffeomorphisms, such that a geodesic principle in terms of this metric reproduces the predictions of GR. It is remarkable that such a metric exists but understanding this choice better could be the key to unlocking the relationship between Mach's principles and the causal structure of spacetime.

The reason for this, as we will see, is that GR, in ADM form, can be derived from a version of best matching with respect to the 3--diffeomorphisms. This version of best matching is \emph{not} a true geodesic principle but a sort of local version of one. However, if we best match this ``local'' geodesic theory with respect to the conformal transformations, we get a non--equivariant theory whose dual is shape dynamics, which is a \emph{genuine} geodesic principle on conformal superspace. To achieve equivalence with GR, however, we \emph{must} abandon locality. This is a serious price to pay for a geodesic principle but the gain is a set of simple linear constraints and a clear conceptual picture motivated by Mach's principles.

To motivate the transition to shape dynamics we will start by discussing different ways of constructing relational geometrodynamic theories. First, we will sketch a na\"ive choice that maintains a certain degree of locality \emph{and} a genuine geodesic principle. Among theories of this kind is Ho\v rava's projectable lapse theory at high energy \cite{Horava:lif_point}. As a next step, we will abandon a genuine geodesic principle in favor a more restrictive one that is manifestly local. Using this, we will be able to derive GR in ADM form following \cite{barbourbertotti:mach}. We will then use the framework to briefly address the problem of time by using our definition of background independence to insert a background time into GR. The theory we obtain is equivalent to unimodular gravity. Finally, we will insist on a genuine geodesic principle by applying our symmetry trading algorithm to the ADM form of GR. This will lead to shape dynamics.

\section{Identification of the symmetry group}

As is always the case, the first and most subjective step in best matching is the identification of the configuration variables and the symmetry group. We will make a choice that is strongly inspired by Mach's ideas. Consider a complicated system of particles existing in the universe. We claim that only the \emph{local} shapes produced by the distribution of these particles is measurable and that these local shapes are determined purely from the conformal geometry.

The idea is simple: if we can go locally to a frame that is approximately flat then we should be able to best--match the shapes of any particles that live in this locally flat neighborhood. Thus, all we must do is shift our thinking about best matching from a global shifting procedure to a local one. But how should we think of the $\phi$ fields in this case? For the translations, we are looking for a smooth way to shift the locally flat references frames (or the frame fields) by a certain amount in an arbitrary direction. This is simply an arbitrary diffeomorphism of the metric. For rotations, this is the requirement that the frames fields be rotated by arbitrary amounts. However, the metric is invariant under rotations of the frame fields, so this has no effect on the metric itself. In other words, applying the best matching procedure to the Euclidean group \emph{locally} is equivalent to shifting the spatial metric by an arbitrary diffeomorphism. This is a way of stating that there is coordinate ambiguity when comparing a system of particles from one moment to the next. Thus, the local shapes should depend on the geometry and not the coordinate information contained in the metric. More specifically, this translates into the requirement that the theory should be best matched under the diffeomorphism group whose local algebra is generated by the Lie derivative of the metric and parameterized by the vector fields $\xi$ such that
\begin{equation}
    \delta g_{ab} = \mathcal L_\xi g_{ab}.
\end{equation}

Lastly, the local shapes should not depend on the local scale. This is because all measurements are local comparisons. If one uses a particular ruler to measure the length of an object, the result of the measurement is unchanged if the local scale is tripled in size. This is obvious because both the ruler and the object will be rescaled. But, this principle also holds independently for different local neighborhoods. If you carry the object and the ruler to a different point in space where everything is only half the size, then still nothing changes in the outcome of the measurement. This requirement translates to an invariance of the metric under local Weyl transformations of the form
\begin{equation}
    g_{ab}(x) \to e^{4\phi(x)} g_{ab}(x),
\end{equation}
where the factor of 4 is conventional in 3 dimensions.\footnote{In $d$ dimensions, the conventional factor is $\frac 4 {d-2}$. This factor ensures that the curvature scalar transforms in a simple way under this transformation.}

Putting these results together, we are looking for a theory that depends only on the conformal geometry because it is best matched with respect to the conformal diffeomorphisms. We can now identify the fiber bundle to be used in our procedure. The configuration space of all 3--metrics is called \emph{Riem}, the quotient space of Riem with the 3--diffeomorphisms is \emph{superspace}, and the quotient of superspace with the conformal transformation is \emph{conformal superspace}. Thus, we are looking for a best--matching theory where $\acs = \text{Riem}$, $\rcs = $conformal superspace, and $\cG$ is the group of conformal diffeomorphisms. Note that Riem is not a \emph{principal} fiber bundle over conformal superspace because of the presence of metrics with global isometries. One must mod out by these isometry groups to get a genuine PFB.

\section{Equivariant geodesic theories}\label{sec:global_sqrt}

In this section, we consider the simplest options for forming genuine geodesic theories where Riem is treated as a fiber bundle over conformal superspace. Unfortunately, these na\"ive choices do not lead to GR and likely suffer from instability problems. Nevertheless, it is useful to start with these simple models to see why they cannot work.

We want a genuine equivariant geodesic principle on Riem. For this, we need to specify a metric on Riem. Such a metric should be a functional of the spatial metric, $g$, and should feed on two symmetric Rank(2) tensors, $u$ and $v$. For an up to date account of how to define metrics on Riem, see \cite{Giulini:superspace}. We will only consider those metrics $\mathcal G$ that split into an ultra--local piece
\equa{\label{eq:gen_dewitt}
	G[u, v, g] \equiv \int_\Sigma d^nx \sqrt{g}\, G^{abcd}(x)u_{ab}(x) v_{ab}(x) \equiv \int_\Sigma d^nx \sqrt{g}\, (g^{ac} g^{bd} - \alpha g^{ab}g^{cd})u_{ab} v_{cd},
}
and a conformal piece $\mathcal V[g, \partial g, \hdots] = \int d^nx\sqrt{g}\, V$ such that $\mathcal{G}[u,v, g,\partial g,\hdots] = \mathcal V[g, \partial g, \hdots]\cdot G[u, v, g]$. Note that $G^{abcd}$ is the most general ultra--local Rank(4) tensor that can be formed from the metric. It represents a one parameter family of supermetrics labeled by $\alpha\in \mathbb R$. For $\alpha = 1$, we recover the usual DeWitt supermetric. The supermetric $G^{abcd}$ plays a role similar to the flat metric $\eta_{ab}$ in the toy models. The scalar function $V(g(x),\partial g(x),\hdots)$ is analogous to the conformal factor of the toy models and, for this reason, is often called the potential. However, it differs from the potential of the finite--dimensional models in that it can depend on the \emph{spatial} derivatives of the metric. 

We can perform best matching with respect to the diffeomorphisms by infinitesimally shifting the metric by
\equa{\label{eq:shift2}
	\bar{g}_{ab} = g_{ab} + \mathcal{L}_\xi g_{ab}
}
and by doing a best matching variation with respect to $\xi$. Equivariance of the action with respect to this symmetry is guaranteed by choosing it to be a scalar. Just like the toy models, the transformation \eq{shift2} is equivalent to introducing the gauge covariant derivative
\equa{
	\mathcal{D}_\xi g_{ab} = \dot{g}_{ab} + \mathcal{L}_{\dot\xi} g_{ab} = \dot{g}_{ab} + \dot\xi_{(a;b)},
}
which replaces all occurrences of $\frac{d}{d\lambda}$ in the action. In the above, semi--colons represent covariant differentiation on the tangent bundle of $\Sigma$ using the metric compatible connection.

Similarly, best matching with respect to the conformal transformations implies the additional term
\begin{equation}
    4\dot\phi g_{ab}
\end{equation}
to the covariant derivative and a best--matching variation with respect to $\phi$. Thus, the full covariant derivative becomes
\begin{equation}
    \mathcal{D}_{\xi,\phi} g_{ab} = \dot{g}_{ab} + \dot\xi_{(a;b)} + 4\dot\phi g_{ab}.
\end{equation}
Equivariance with respect to the conformal transformations requires the potential to be conformally invariant because this covariant derivative makes the kinetic term conformally invariant. In 3 dimensions, the lowest dimensional operator that one can form out of the metric and its derivatives is the Cotton tensor
\begin{equation}
    C_{ab} = \nabla_c \lf( R_{da} - \frac 1 4 R g_{da} \rt) \epsilon^{cde} g_{eb},
\end{equation}
where $\epsilon^{abc}$ is the totally antisymmetric tensor in 3d. Thus, the simplest conformally invariant scalar one can form from the metric and a \emph{finite} number of its derivatives is the square of the Cotton tensor
\begin{equation}
    V = C_{ab} C^{ab} \equiv C^2,
\end{equation}
which has 6 spatial derivatives of the metric. In principle, the potential can be any conformally invariant scalar formed from the metric. However, since this is the lowest dimensional operator compatible with our symmetries, there is an anisotropic scaling of $z = 3$ between space and time. This occurs because the $C^2$ has 6 spatial derivatives compared with the 2 time derivatives in the kinetic term. Power counting with this scaling indicates that the $C^2$ term is the only relevant local operator in 3 dimensions. It would then seem like a natural choice for the potential of the bare theory. This is similar to what happens in Ho\v rava--Lifshitz gravity \cite{Horava:lif_point,Horava:gravity2}.

We can now write down a geodesic principle on Riem. A direct analogy with the toy models gives
\begin{align}
  S_{\text{global}} &= \int d\lambda \sqrt{\mathcal G[\mathcal{D}_\xi g,\mathcal{D}_\xi g, g, \partial g, \hdots]} \label{eq:global sqrt} \\
		    &= \int d\lambda \sqrt{\int_\Sigma d^n x \sqrt{g}\, G^{abcd} \mathcal{D}_{\xi,\phi} g_{ab} \mathcal{D}_{\xi,\phi} g_{cd}} \cdotp\sqrt{\int_\Sigma d^n x' \sqrt{g}\, V(g,\partial g, \hdots)}.\label{eq:global_jacobi}
\end{align}
Clearly, \eq{global_jacobi} is a non--local action as it couples all points in $\Sigma$ at a given instant. However, the potential is still required to be a \emph{local} functional of the metric. This is what singles out the $C^2$ term as the only relevant coupling at high energy. If we allow for non--local functionals of the metric, then this is no longer the case and many other potentials are allowed. Shape dynamics is a theory of this kind. It has an equivariant potential but one that is a non--local functional of the metric. This allows it to be conformally invariant without introducing anisotropic scaling or relying on the $C^2$ term in the UV.

\subsection{Hamiltonian}

The momenta $(\pi^{ab}, \zeta^a, \pi_\phi)$ canonically conjugate to $(g_{ab}, \xi_a, \phi)$ obtained from a Legendre transform of the action (\ref{eq:global_jacobi}) are
\begin{align}
    \pi^{ab} &= \frac{\delta S}{\delta \dot g_{ab}} = \sqrt{\frac{\mathcal V}{\mathcal T}} \sqrt{g} G^{abcd} \dxi g_{cd} \\
    \zeta^{a} &= \frac{\delta S}{\delta \dot \xi_{a}} = -2\nabla_b \lf( \sqrt{\frac{\mathcal V}{\mathcal T}} \sqrt{g}G^{abcd} \dxi g_{cd} \rt) \\
    \pi_\phi &= \frac{\delta S}{\delta \dot \phi} = 4 g_{ab} \sqrt{\frac{\mathcal V}{\mathcal T}} \sqrt{g} G^{abcd} \dxi g_{cd} = 4 \pi,
\end{align}
where $ \mathcal T = G[\dxi g,\dxi g, g] = \int d^nx\,\sqrt{g} G^{abcd} \dxi g_{ab}\, \dxi g_{cd}$ is the kinetic term and $\pi \equiv g_{ab} \pi^{ab}$. This leads to the primary constraints
\begin{align}
    \mathcal H^a(x) &= \zeta^a(x) + 2\nabla_b \pi^{ab}(x) \approx 0 \\
    \mathcal C(x) &= \pi_\phi(x) - \pi(x) \approx 0,
\end{align}
which, combined with the best matching conditions $\zeta^a \approx 0$ and $\pi_\phi \approx 0$, are the standard ADM diffeomorphism constraint and the conformal constraint, respectively. Although these constraints are clearly local, there is another primary constraint that is an integral over all of space. This constraint is the \emph{zero mode} of the standard ADM Hamiltonian constraint
\equa{
    \mathcal H^{(0)} = \int d^nx\, \lf[ \frac{1}{\sqrt g} G_{abcd} \pi^{ab} \pi^{cd} - \sqrt{g} V \rt] \equiv \int d^nx\,\mathcal H.
}
It guarantees that the metric on Riem $\mathcal G[\dxi g,\dxi g, g, \partial g, \hdots]$ is non--negative. The total Hamiltonian is
\equa{
    H_\text{tot} = N(\lambda) \mathcal H^{(0)} + \int d^nx \lf[ N^a(\lambda,x) \mathcal H_a(\lambda, x) + \rho(\lambda,x) C(\lambda, x) \rt],
}
where the \emph{lapse} $N$, \emph{shift} $N^a$, and $\rho$ are Lagrange multipliers.

The lapse function is only $\lambda$ -- and not $x$ -- dependent. It is said to be projectable. Because of this, the theory does not obey the full Dirac--Teitelboim algebra \cite{Teitelboim:DT_algebra} and is invariant only under \emph{foliation preserving} diffeomorphisms and not the full $3+1$ diffeomorphism group. Because of this and because the simplest potential, $V = C^2$, has 6 derivatives of the metric, this na\"ive theory \emph{cannot} be GR. If one chooses the value of $\alpha$ to be the conformal value $\alpha = 1/3$, then the kinetic term is conformally invariant without using covariant derivatives. In this case, the conformal constraint $\pi = 0$ is unnecessary and the theory becomes equivalent to the high energy formulation of Ho\v rava--Lifshitz gravity. However, this theory is believed to suffer from instabilities and may not be well defined. As a result, this simple choice doesn't seem to lead to a sensible theory. Nevertheless, it is interesting to note that Ho\v rava's theory is very naturally motivated by best matching. In the next sections, we will try to obtain sensible theories, first by modifying the best matching procedure to make it more consistent with locality. This will lead us to GR in ADM form. Then, we will abandon non--locality completely and allow for potentials that are non--local functionals of the metric. This will lead us to shape dynamics.


\section{GR from best matching}\label{sec:geo_jacobi}

In this section, we will slightly modify our geodesic principle by taking the square root of \eq{global sqrt} \emph{inside} the integral. Physically, this seems like the more natural choice because the action principle is now local. On the other hand, the mathematical structure is less appealing because we no longer have a proper metric on Riem. Also, we lose a direct analogy with the finite--dimensional models since we can no longer write the action in terms of a quantity that gives the ``distance'' between two infinitesimally separated geometries. Furthermore, as we will see, using a local square root produces a local scalar constraint that restricts one degree of freedom at every point. In order for the new theory to be consistent, the local scalar constraint must be first class with respect to the remaining constraints. This puts a severe restriction on the possible forms of the potential (this issue is studied in detail in \cite{anderson:rel_wo_rel_vec,barbour_el_al:rel_wo_rel}). With the right choice of potential, this extra gauge freedom manifests itself as foliation invariance and leads to many technical and conceptual issues, particularly in the quantization.\footnote{For a review of the difficulties associated with foliation invariance and other issues associated to time, see \cite{kuchar:time_int_qu_gr,Isham:pot_review}.} Despite these complications, this modification leads to a sensible classical theory: GR in ADM form. It is, thus, important to understand how this comes about and the role that the local square root plays in producing a consistent foliation invariant theory.

To obtain GR, it will not be necessary to best match with respect to the conformal constraints. In this approach, the conformal constraints are replaced by the local Hamiltonian constraint. Bringing the square root inside the spatial integration of \eq{global sqrt} and removing the $\phi$ fields gives
\equa{\label{eq:local_geo}
  S_{\text{local}} = \int d\lambda\, d^nx \, \sqrt{g\,G^{abcd} \mathcal{D}_\xi g_{ab} \mathcal{D}_\xi g_{cd}\cdotp V(g,\partial g)}.
    }
The quantity $VG^{abcd}$ is the infinitesimal ``distance'' between two \emph{points} of two infinitesimally separated geometries. It is a kind of \emph{pointwise} metric on Riem. Thus, there is no clean geodesic principle on the reduced configuration space.

The action \eq{local_geo} has been analyzed in detail in \cite{barbour_el_al:physical_dof,barbourbertotti:mach,barbour_el_al:scale_inv_gravity}. For the special choices $\alpha = 1$ and
\equa{\label{eq:geo potential}
  V(g,\nabla g, \nabla^2 g) = 2\Lambda - R(g,\nabla g, \nabla^2 g),
}
where $R$ is the scalar curvature of $\Sigma$ and $\Lambda$ is a constant, the constraint algebra is known to close. With these choices, \eq{local_geo} is the Baierlein--Sharp--Wheeler (BSW) action of GR with cosmological constant \cite{bsw:bsw_action} whose Hamiltonian is equivalent to that of ADM \cite{adm:adm_review}. We have, thus, recovered GR in ADM form.\footnote{Equivalence is achieved because the best matching variation of $\dot\xi$ is equivalent to the usual variation of the shift vector $N^i$. See, for example, \cite{Anderson:cyclic_ADM}.}


\section{ADM and best matching} \label{sec:unimodular}

While it is useful to have a derivation of GR in BSW form from best matching, it would be convenient to dispose of the awkward square root altogether. This can be done by using the alternative action principle discussed in Section~(\ref{sec:equiv S}).\footnote{For a demonstration of this following a Routhian reduction see \cite{Anderson:cyclic_ADM}.} Since the action principle \eq{php general action} can be understood as a best matching of the reparametrization invariance, it provides a natural framework for introducing a notion of background time in GR by lifting the best--matching condition on $\tau$. Interestingly, this procedure leads directly to unimodular gravity.

To implement an action of the form \eq{php general action} in geometrodynamics, we use the kinetic term and potential outlined in \scn{geo_jacobi}. Using a local action principle and introducing the auxiliary field $\tau^0(\lambda, x)$, the analogue of (\ref{eq:php general action}) is
\equa{\label{eq:geo_ham}
	S_H = \int d\lambda\, d^n x\, \sqrt{g} \frac{1}{2} \lf[ \frac{1}{\dot\tau^0} G^{abcd} \mathcal{D}_\xi g_{ab} \mathcal{D}_\xi g_{cd} - \dot\tau^0 (2\Lambda' - R) \rt],
}
where we have used a prime to distinguish $\Lambda'$ from another $\Lambda$ that we will consider later (this is completely analogous to $E$ versus $E'$ encountered in the particle models of Section~(\ref{sec:toy BD theory})). It can be verified that using a local function, $\tau^0$, of $x$ is equivalent to taking a local square root instead of the usual geodesic principle. Similarly, a global $\tau^0$ is equivalent to a global square root. 

The ADM theory can be obtained by doing a short canonical analysis of the action \eq{geo_ham}. The momenta are:
\begin{align}
	\pi^{ab} &= \diby{L}{\dot{g}_{ab}} = \frac{\sqrt{g}}{\dot\tau^0} G^{abcd}(\dot{g}_{cd} + \dot\xi_{(c,d)}), \\
	\zeta^a &= \diby{L}{\dot\xi_a} = -\nabla_b \lf( \frac{\sqrt{g}}{\dot\tau^0} G^{(ab)cd}(\dot{g}_{cd} + \dot\xi_{(c,d)}) \rt), \qand \\
	p_0 &= \diby{L}{\dot\tau^0} = -\frac{\sqrt{g}}{2}\lf( \frac{1}{(\dot\tau^0)^2} G^{abcd} \mathcal{D}_\xi g_{ab} \mathcal{D}_\xi g_{cd} + (2\Lambda' -R) \rt),
\end{align}
where $p_0$ is the momentum density conjugate to $\tau_0$. The scalar constraint is
\equa{\label{eq:ham_constraint_ham}
	\ham = \frac{1}{\sqrt{g}} G_{abcd}\pi^{ab}\pi^{cd} + \sqrt{g}(2\Lambda'-R) +  2p_0 = \hamadm + 2p_0 = 0,
}
where $\hamadm$ is just the scalar constraint of the ADM theory. There is also a vector constraint associated with $\zeta^a$. It is
\begin{equation}
 	\diff = \nabla_b \pi^{(ab)} + \zeta^a = \diffadm + \zeta^a = 0.
\end{equation}
The constraint $\diffadm$ is ADM's usual vector constraint.

The canonical Hamiltonian is zero as it should be in a reparameterization invariant theory. Thus, the Hamiltonian is
\equa{\label{eq:ham_tot_ham}
	H = N\ham + N_a \diff = H_\text{ADM} + 2Np_0 + N_a \zeta^a.
}
$H_\text{ADM}$ is the ADM Hamiltonian. However, this may not be the full Hamiltonian since we need to check for secondary constraints. To do this, we introduce the fundamental equal--$\lambda$ PB's
\begin{align}
  \pb{g_{ab}(\lambda,x)}{\pi^{cd}(\lambda, y)} &= \delta^c_a\delta^d_b\,\delta(x,y), \\
  \pb{\xi_a((\lambda,x)}{\zeta^b(\lambda, y)} &= \delta^b_a\,\delta(x,y), \qand \\
  \pb{\tau^0(\lambda,x)}{p_0(\lambda,y)} &= \delta(x,y).
\end{align}
Then, the constraint algebra reduces to
\begin{align}
  \pb{g^{-1/2} \ham(x)}{\ham(y)} &= \lf[ (g^{-1/2}\diffadm)(x) + (g^{-1/2}\diffadm)(y) \rt] \delta(x,y)_{;a} \\
  \pb{g^{-1/2} \ham(x)}{\diffadm(y)} &= g^{-1/2} \hamadm(x)^{;a} \delta(x,y) \label{eq:ham_diff}\\
  \pb{g^{-1/2} \diff(x)}{\ham^b(y)} &= (g^{-1/2}\diffadm)(x)\delta(x,y)^{;b} + (g^{-1/2}\hamadm^b)(y)\delta(x,y)^{;a}.
\end{align}
At this point, the discussions for standard and best matching variations diverge.

\subsection{Best matching variation: time--independent theory}

After taking PB's we can apply the best matching conditions for $p_0$ and $\zeta^a$
\begin{align}
  p_0\approx & 0 \\
  \zeta^a\approx & 0.
\end{align}
Then, the vector and scalar constraints imply
\begin{align}
  \hamadm\approx & 0 \\
  \diffadm\approx & 0.
\end{align}
Thus, the constraint algebra is first class and the total Hamiltonian is given by \eq{ham_tot_ham}.

At this point, we can not use the best matching conditions to recover the ADM theory because they are only weak equations. To see that the ADM theory is indeed recovered, we work out the classical equations of motion. The terms in \eq{ham_tot_ham} that are new compared with the ADM theory commute with $g_{ab}$ and $\pi^{ab}$. Thus, they do not affect the equations of motion for $g_{ab}$ or for $\pi^{ab}$ other than replacing the lapse $N$ with $\dot\tau^0$ and the shift $N_a$ with $\dot\xi_a$. Since the remaining equations of motion just identify
\begin{align}
 	\dot\tau^0 &= \pb{\tau^0}{\htot} = 2N, \qand \\
	\dot\xi_a & = \pb{\xi_a}{\htot} = N_a,
\end{align}
the theories are classically equivalent. In other words, it is trivial to integrate out the auxiliary fields $\tau^0$ and $\xi_a$ along with their conjugate momenta. This simply disposes of the best matching conditions and replaces $\tau^0$ and $\xi_a$ with the lapse and shift. It is now easy to see that the quantum theories will also be equivalent since the quantization of the best matching conditions imply that the quantum constraints are identical to those of the ADM theory.

\subsection{Fixed endpoints: unimodular gravity}\label{sec:unimolduar}

In this section, we consider the effect of fixing the endpoints of $\tau^0$. According to the definition of background dependence from \scn{BI_BD}, this will introduce a background time. We will, however, not fix a background for the diffeomorphism invariance. Thus, we still have the best matching condition $\zeta^a \approx 0$ for the variation of $\xi_a$.

The constraint algebra is no longer first class after lifting the best matching condition $p_0\approx 0$ because the scalar constraints no longer close with the vector constraints. From (\ref{eq:ham_diff}) and \eq{ham_constraint_ham},
\equa{
  \pb{g^{-1/2} \ham(x)}{\diffadm(y)} = -(g^{-1/2} p_0)^{;a} \delta(x,y),
}
which implies the secondary constraint
\equa{\label{eq:cosmo_constraint}
  -\nabla_a \Lambda = 0,
}
where $\Lambda = -g^{-1/2} p_0$ is the undensitized momentum conjugate to $\tau^0$. The constraint algebra is now first class. Using the Lagrange multipliers $\tau^a$, the total Hamiltonian is
\equa{
  \htot = H_\text{ADM} + 2Np_0 + N_a \zeta^a - \tau^a \nabla_a \Lambda.
}

The secondary constraint \eq{cosmo_constraint} assures that $\Lambda$ is a spatial constant. Given the equations of motion $\dot\tau^0 = N$ and $\dot\Lambda = 0$, one might expect that the $\dot\tau^0\Lambda$ term in the action is analogous to adding a cosmological constant term to the potential. Indeed this is what happens. Since the action is linear in $\zeta^a$, we can integrate out $\zeta^a$ by inserting the equation of motion $\dot\xi_a = N_a$ and the best matching condition $\zeta^a = 0$. This leads to
\begin{multline}
	S_{\text{uni}} = \int d\lambda\, d^nx \, \lf[ \dot g_{ab} \pi^{ab} + \dot\tau p_0 + \sqrt{g} \tau^a\nabla_a \Lambda \rt. \\
	\lf. -2 N_a \nabla_b \pi^{ab} - N\lf( \frac{1}{\sqrt{g}} G_{abcd}\pi^{ab}\pi^{cd} - \sqrt{g} (R -  2\Lambda_\text{tot}) \rt) \rt],
\end{multline}
which is identical to the action of unimodular gravity considered by Henneaux and Teitleboim \cite{Henneaux_Teit:unimodular_grav}. Unimodular gravity was originally proposed as a possible solution to the problem of time and was developed extensively in \cite{Unruh:unimodular_grav,Unruh_Wald:unimodular,brown:gr_time}.

Note that $\Lambda_\text{tot} = \Lambda + \Lambda'$. It is the observable value of the cosmological constant. In this context, it will depend on the boundary conditions imposed on the cosmological time
\equa{
  T = \int_{\lambda_\text{in}}^{\lambda_\text{fin}} d\lambda\int_\Sigma d^nx\, \sqrt{g}\, \dot\tau^0.
}
In \cite{Smolin:unimodular_grav}, it is shown that the fact that $\Lambda_\text{tot}$ is an integration constant protects its value against renormalization arguments that predict large values of $\Lambda'$. This provides a possible solution to the cosmological constant problem.

These results show that unimodular gravity is obtained by inserting a background time into general relativity using the definition of background dependence given in Section~(\ref{sec:BI_BD}). The quantization of this theory is known to lead to a time \emph{dependent} Wheeler--DeWitt equation \cite{brown:gr_time}. This supports the claim that we have inserted a genuine background time. Although there are some hints that unimodular gravity contains unitary cosmological solutions (see \cite{Sorkin:forks_unimodular,sorkin:unmodular_cosmology}), it is clear that unimodular gravity will not be able to solve all problems of time in quantum gravity. As was pointed out by Kucha\v r in \cite{Kuchar:unimodular_grav_critique}, the background time in unimodular gravity is global whereas foliation invariance in general relativity presents several additional challenges. These complications are introduced by the local square root and, therefore, would not have analogues in the finite--dimensional models and the projectable--lapse theories. 

Interestingly, shape dynamics is a projectable--lapse theory and does not suffer from the same complications associated to foliation invariance. One could then use this principle to produce a time dependent WDW--like equation for quantum shape dynamics. However, simply inserting a background should not be thought of as a genuine solution to the problem of time because background dependent theories violate Mach's principle and should not be thought of as fundamental (unless one has other good reasons for believing in an absolute time). Instead, one should think of background dependent theories as having emerged, under special conditions, out of a fundamental background--independent theory.

%

\section{Shape Dynamics}\label{sec:sd derivation}

In this section, we derive shape dynamics by considering the non--equivariant best--matching theory obtained by best matching the action \eq{local_geo} with the potential \eq{geo potential} with respect to conformal transformations. As we will see, if we restrict to conformal transformations that preserve the volume of $\Sigma$, we can use the symmetry trading algorithm presented in Section~(\ref{sec:trading algorithm}) to trade foliation invariance for volume preserving conformal invariance. The theory we obtain is shape dynamics: it has a projectable lapse and is equivariant with respect to volume preserving conformal transformations.

\subsection{Notation}

Before applying the symmetry trading procedure, we pause to reestablish notation. It will be convenient to define a more compact notation for smearing functions that will require a slightly different nomenclature for the ADM constraints. To avoid ambiguity, we will use this section to clearly state our conventions.

We start with the ADM formulation of general relativity on a compact spatial manifold $\Sigma$ without boundary (and if confusion arises we will assume the topology of $\Sigma$ to be $S^3$). The phase space $\Gamma$ is coordinatized by 3--metrics $g$, represented locally by a symmetric 2--tensor $g:x\mapsto g_{ab}(x)dx^adx^b$, and its conjugate momentum density $\pi$, represented locally by a symmetric 2--cotensor $\pi:x\mapsto \pi^{ab}(x)\partial_a\partial_b$ of density weight 1. Given a symmetric 2--cotensor density, $F$, and a symmetric 2--tensor $f$ we denote the smearing by
\begin{equation}
 F(g):=\int_\Sigma d^3 x F^{ab}(x) g_{ab}(x)\,\,\textrm{ and }\,\,\pi(f):=\int_\Sigma d^3x \pi^{ab}(x) f_{ab}(x).
\end{equation}
We will not explicitly state differentiability conditions for $(g,\pi)$ or details about the Banach space we use to model $\Gamma$, we just assume existence of suitable structures to sustain our construction. The non--vanishing Poisson bracket is
\begin{equation}
   \{F(g),\pi(f)\}=F(f):=\int_\Sigma d^3 x F^{ab}(x) f_{ab}(x),
\end{equation}
and the Hamiltonian is
\begin{equation}
 H(N,\xi)=\int_\Sigma d^3x\left(N(x)S(x)+\xi^a(x)H_a(x)\right),
\end{equation}
where the Lagrange multipliers $N$ and $\xi^a$ denote the lapse and shift respectively. The constraints are
\begin{equation}
 \begin{array}{rcl}
   S(x)&=&\pi^{ab}(x)\frac{G_{abcd}(x)}{\sqrt{|g|(x)}}\pi^{cd}(x)-\sqrt{|g|(x)}R[g](x)\\
   H_a(x)&=&-2g_{ac}(x)D_b\pi^{bc}(x),
 \end{array}
\end{equation}
where $D$ denotes the covariant derivative w.r.t. $g$, $G_{abcd}$ denotes the inverse supermetric and $R[g]$ the curvature scalar. Denoting smearings as $C(f)=\int_\Sigma d^3x C(x)f(x)$ and $\vec C(\vec v):=\int_\Sigma d^3x v^a(x)C_a(x)$, we obtain Dirac's hypersurface--deformation algebra
\begin{equation}
 \begin{array}{rcl}
   \{\vec H(\vec u),\vec H(\vec v)\}&=&\vec H([\vec u,\vec v])\\
   \{\vec H(\vec v),S(f)\}&=& S(v(f))\\
   \{S(f_1),S(f_2)\}&=&\vec H(\vec N(f_1,f_2)),
 \end{array}
\end{equation}
where $[.,.]$ denotes the Lie-bracket of vector fields, so the first line simply states that the $\vec H(x)$ furnish a representation of the Lie-algebra of vector fields on $\Sigma$ and where $N^a(f_1,f_2):x \mapsto g^{ab}(x)\left(f_1(x)f_{2,b}(x)-f_{1,b}(x)f_2(x)\right)$.

\subsection{Identify the symmetries to best match}

We now apply the general symmetry trading procedure outlined in Section~(\ref{sec:trading algorithm}) to derive shape dynamics. We treat in detail the spatially compact case but the asymptotically flat case is developed in \cite{Gomes:linking_paper}. For more details on the mathematical structure of the dualization and a more detailed proof of the dictionary, see \cite{Gomes:linking_paper,gryb:shape_dyn}.

Let us spell out the symmetry to be gained in exchange for foliation invariance. Na\" ively, we would trade $S(x)$ for constraints that generate general conformal transformations of $g$. However, as in the toy model, trading all such symmetries would lead to frozen dynamics. One global constraint must be left over to generate global reparametrization invariance. In analogy to the toy model, we restrict to conformal transformations that do not change the global scale. In geometrodynamics, the analogue of the moment of inertia is the total 3--volume. Thus, the desired symmetry is explicitly constructed in the following way:

Let $\mathcal C$ denote the group of conformal transformations on $\Sigma$ and parametrize its elements by scalars $\phi:\Sigma\to \mathbb R$ acting as
\begin{equation}
 \phi: \left\{
 \begin{array}{rcl}
  g_{ab}(x)&\to&e^{4 \phi(x)}g_{ab}(x)\\
  \pi^{ab}(x)&\to&e^{-4\phi(x)}\pi^{ab}(x).
 \end{array} \right.
\end{equation}
Consider the one--parameter subgroup $\mathcal V$ parametrized by homogeneous $\phi: x \to \alpha$. Notice that $\mathcal V$ is normal, because $\mathcal C$ is Abelian, so we can construct the quotient $\mathcal C/\mathcal V$ by building equivalence classes w.r.t. the relation
\begin{equation}
 \mathcal C \ni \phi \sim \phi^\prime \textrm{ iff }\exists \alpha \in \mathcal V, \textrm{ s.t. }\phi=\phi^\prime+\alpha.
\end{equation}
Given a metric $g$ on $\Sigma$ and $\phi \in \mathcal C$ we can find the unique representative in the equivalence class $[\phi]_\sim \in \mathcal C/\mathcal V$ that leaves $V_g=\int_\Sigma d^3x \sqrt{|g|}(x)$ invariant using the map
\begin{equation}
 \widehat{\,\,.\,\,}_g : \phi \mapsto \phi - \frac 1 6 \ln\langle e^{6\phi}\rangle_g,
\end{equation}
where we define the mean $\langle f \rangle_g:=\frac 1 V_g \int_\Sigma d^3x \sqrt{|g|}(x) f(x)$ for a scalar $f:\Sigma \to \mathbb R$. The map $\widehat{\,\,.\,\,}_g$ allows us to  parametrize $\mathcal C/\mathcal V$ by scalars $\phi$. Note that $\hat\phi_g$ can be written more transparently by observing that it is chosen so that the volume element of the conformally transformed metric is equal to
\begin{equation}
    \sqrt{ \lf| e^{4\hat\phi} g \rt| } = \frac{e^{6\hat\phi}}{\mean{e^{6\hat\phi}}} \sqrt{|g|}.
\end{equation}
Thus, the conformal transformation
\begin{equation}\label{eq conf symmetry geo}
    g_{ab}(x) \to \exp\lf( 4\hat\phi(x) \rt) g_{ab}(x)
\end{equation}
leaves
\begin{equation}
    V_g = \int d^3 x\sqrt{|g|} = \int d^3 x \frac{e^{6\hat\phi}}{\mean{e^{6\hat\phi}}} \sqrt{|g|} = V_g \frac{\mean{e^{6\hat\phi}}}{\mean{e^{6\hat\phi}}}
\end{equation}
invariant (we will often suppress the subscript $g$ in $\hat\phi_g$ for convenience). $\mathcal C/\mathcal V$ is then precisely the symmetry we want to obtain in exchange for foliation invariance. Given the transformation properties of $R[g]$ under (\ref{eq conf symmetry geo}), it is easy to show that the ADM is \emph{not} invariant. We are, thus, dealing with a non--equivariant best matching theory.

\subsection{Perform canonical best matching}

The next step is to enlarge the phase space $\Gamma(g;\pi) \to \Gamma_\text{e}(g,\phi;\pi,\pi_\phi) =\Gamma \times T^*(\mathcal C)$ with the canonical ``product Poisson bracket'' and parametrize conformal transformations by scalar functions $\phi$ and their conjugate momentum densities by scalar densities $\pi_\phi$. To ensure that $\phi$ and $\pi_\phi$ are purely auxiliary, we must impose the trivial constraint
\begin{equation}
    \pi_\phi \approx 0.
\end{equation}
This constraint has a trivial property that we will need in a moment: it commutes with \emph{any} smooth function $f(g;\pi)$ of the original phase space variables.

We can now do a canonical transformation $T:\{ g,\phi;\pi,\pi_\phi\} \to \{ G,\Phi;\Pi,\Pi_\phi\}$ that performs the shifting required for best matching. We define the generating function
\begin{equation}
 F[\phi]:=\int d^3x \lf( g_{ab}(x) \exp\left(4\hat \phi(x)\right)\Pi^{ab}(x) + \phi(x) \Pi_\phi(x) \rt).
\end{equation}
Note that capitalized variables represent transformed variables. Then, the canonical transformation $T$ generated by $F[\phi]$ is:
\begin{equation}\label{eq can trans geo}
 \begin{array}{rcccl}
   g_{ab}(x)&\to& Tg_{ab}(x)&=& \exp\left(4 \hat \phi(x)\right) g_{ab}(x)\\
   \pi^{ab}(x)&\to& T\pi^{ab}(x) &=& e^{-4 \hat \phi(x)} \left(\pi^{ab}(x)-\frac 1 3 \langle\pi\rangle_g \left(1-e^{6\hat \phi(x)}\right)g^{ab}(x)\sqrt{|g|}(x)\right) \\
   \phi(x) &\to& T\phi(x) &=&\phi(x) \\
   \pi_\phi(x) &\to& T\pi_\phi(x) &=& \pi_\phi(x) - 4\lf( \pi(x) - \mean{\pi}\sqrt g(x) \rt),
 \end{array}
\end{equation}
where $\pi = \pi^{ab}g_{ab}$. This is indeed a volume preserving conformal transformation. Under this transformation, the constraint $\pi_\phi \approx 0$ transforms to the new constraint
\begin{equation}
    \pi_\phi \approx 0 \to C\equiv \pi_\phi - 4\lf( \pi(x) - \mean{\pi}\sqrt g(x) \rt) \approx 0.
\end{equation}
Before the canonical transformation, $C$ was first class with respect to any smooth functions $f(g,p)$. Thus, after the canonical transformation, $C$ will continue to be first class with respect to the transformed functions $f(Tg, T\pi)$. In other words, $C$ acts trivially on the image of $T$. In particular, it will continue to commute with transformed scalar and diffeomorphism constraints. The interpretation of $C$ does not change after applying $T$. Its role in the theory is to remove the auxiliary degrees of freedom introduced by $\phi$ and $\pi_\phi$. It must be present to ensure that the extended theory is dynamically equivalent to the original. Note that $C$ could have been derived as a primary constraint directly from the Lagrangian by performing conformal best matching on the BSW action followed by a Legendre transform. This equivalent approach was followed in \cite{barbour_el_al:physical_dof}.

The scalar constraints transform in the following way under the map $T$:
\begin{multline}\label{eq LY 1}
    S(x)\to TS(x)= \frac{e^{-6\hat\phi}}{\sqrt{|g|}} \left[ \pi^{ab} \pi_{ab} - \frac{\pi}{2} - \frac{\langle\pi\rangle^2_g}{6} (1 - e^{6\hat\phi})^2|g|+ \frac{\langle\pi\rangle_g}{3}\pi(1 - e^{6\hat\phi})\sqrt{|g|} \right]
		 \\ - e^{2\hat\phi}\sqrt{|g|} \left[ R[g] - 8\left(D^2\hat\phi + (D\hat\phi)^2 \right) \right]
 \end{multline}
where $v^2 \equiv g_{ab} v^a v^b$ for any vector $v^a$. To avoid technical difficulties arising from the proper treatment of the diffeomorphism constraint, we remove it for the time being and verify after the dualization is completed that it can be consistently reintroduced into the theory.

The quantity
\begin{equation}\label{eq D def}
    \mathcal D \equiv \pi(x)-\sqrt{|g|}(x)\langle \pi\rangle_g
\end{equation}
will be the first class constraint left over in the dual theory. We pause for a moment to note its important properties, which can be verified by straightforward calculations. First, $\mathcal D$ is invariant under the canonical transformation (\ref{eq can trans geo}). Second, it generates infinitesimal volume preserving conformal transformations. This can be seen by noting that
\begin{align}\label{eq local vpct}
    \delta_\theta\, g_{ab}(x) &= \lf( 4\theta(x) - \mean{4\theta} \rt) g_{ab}(x) \notag \\
    \delta_\theta\, \pi^{ab}(x) &= \lf( - 4\theta(x) + \mean{4\theta} \rt) \lf(\pi^{ab}(x) - \frac 1 2 \mean{\pi}_g g^{ab}(x) \sqrt{|g|}(x) \rt),
\end{align}
where $\delta_\theta  = \pb{\cdot}{\mathcal D(4\theta)}$, is the infinitesimal form of (\ref{eq can trans geo}). This is the key property that we require of the dual theory.

We note one final property of $C(x)$. Just as in the toy model, one of the $C(x)$'s is an identity and not an independent constraint. This can be seen by noting that
\begin{equation}
    \int_\Sigma d^3x\, C(x) = \int_\Sigma d^3x\, \pi_\phi(x)
\end{equation}
but the Poisson bracket of $\int d^3x\, \pi_\phi$ with the variables $g_{ij}$, $\pi^{ij}$, $\hat\phi$, and $\pi_\phi$ is identically zero (the only non--trivial calculation is $\pb{\rho(\hat\phi)}{\int_\Sigma d^3x\, \pi_\phi(x)}=0$ for arbitrary smearings $\rho$). This is a consequence of restricting to $\mathcal C/\mathcal V$ instead of just $\mathcal C$. It means that $\int d^3x\,C$ can be removed from the theory without affecting the theory in any way. The result of this is that we will be left with a global first class scalar constraint at the end of the procedure.

\subsection{Impose and propagate best--matching constraints}\label{sec: prop pi_phi}

To make this theory relational with respect to volume preserving conformal transformations, we must perform a best--matching variation. This involves imposing the constraints
\begin{equation}
 \pi_\phi\approx 0.
\end{equation}
One of these constraints: $\int d^3x\, \pi_\phi$, is trivial as noted above. However, singling out this constraint explicitly is more subtle now than it was in the finite dimensional toy model. For simplicity, we will work with this redundant parametrization of the constraints, keeping in mind that it is over--complete by one equation.

Imposing $\pi_\phi(x)\approx 0$ turns all but one of the original scalar constraints into second class constraints, as can be observed from the Poisson bracket
\begin{equation}
 \{TS(N),\pi_\phi(\lambda)\}= \int dx\, \lambda(x) \left[ F_N - \sqrt{|g|}(x) e^{6\hat\phi(x)} \langle F_N \rangle_g \right]
\end{equation}
where we smeared $\pi_\phi$ with a scalar $\lambda$ and $F_N$ is given by
\begin{multline}
    F_N = 8 g_{ab} D^a\left(e^{2\hat\phi}D^bN \right)\sqrt{|g|} - 8N e^{2\hat\phi}\sqrt{|g|} \left[ R[g] - 8\left(D^2\hat\phi + (D\hat\phi)^2 \right) \right]  \\
		 - 2Ne^{6\hat\phi} \sqrt{|g|} \langle \pi \rangle_g^2 - N \left[ 6 TS + 2\pi \langle \pi \rangle_g(C + \pi_\phi) \right].
\end{multline}
The last term in $F_N$ is weakly zero. This is the canonical form of the non--equivariance condition. The consistency of the dynamics requires that we fix the Lagrange multipliers to satisfy the equation
\begin{equation}\label{equ:consistency-condition}
 \begin{array}{rcl}
   \{TS(N_0),\pi_\phi(x)\}&=&0.\\
 \end{array}
\end{equation}
The trivial solution $N_0\equiv 0$ yields ``frozen dynamics.'' Fortunately, there is a unique solution $N_0(\hat\phi,g_{ij},\pi^{ij})$ to the lapse fixing equation
\begin{equation}\label{equ:lapse-fixing}
    N \left[e^{-4\hat\phi} \left[ R[g] - 8\left(D^2\hat\phi + (D\hat\phi)^2 \right) \right] + \frac 1 4 \langle \pi \rangle_g^2 \right] -  e^{-6\hat\phi} g_{ab} D^a\left(e^{2\hat\phi}D^bN \right) = \mean{\mathcal L}_g
\end{equation}
that is equivalent to the first line of (\ref{equ:consistency-condition}). In (\ref{equ:lapse-fixing}), $\mathcal L$ is defined as $e^{6\hat\phi} \sqrt{|g|}$ times the left--hand--side of (\ref{equ:lapse-fixing}).

In terms of the transformed variables $G_{ab} \equiv Tg_{ab}$ and $\Pi^{ab} \equiv T\pi^{ab}$, (\ref{equ:lapse-fixing}) takes the simple form
\begin{equation}\label{eq lfe simple}
    \lf( R[G] + \frac{\mean{\Pi}_G}{4} - D^2  \rt) N = \mean{\lf( R[G] + \frac{\mean{\Pi}_G}{4} - D^2  \rt) N}_G.
\end{equation}
This is the same lapse fixing equation obtained in \cite{barbour_el_al:physical_dof} by a similar argument using the Lagrangian formalism. To understand why (\ref{eq lfe simple}) has a one parameter family of solutions (and, thus, depends only on $\hat\phi$) note that any solution, $N_0$, to
\begin{equation}\label{eq lfe standard}
    \lf( R[G] + \frac{\mean{\Pi}_G}{4} - D^2  \rt) N_0 = - c^2,
\end{equation}
for an arbitrary constant $c$, is also a solution to (\ref{eq lfe simple}). Eq (\ref{eq lfe standard}) is the well--known lapse fixing equation for CMC foliations. It is known to have unique positive solutions\cite{brill:lfe_uniqueness,York:york_method_long,Niall_73}. This fact is vital for our approach to work as it allows us to construct the Hamiltonian of the dual theory.

The one parameter freedom in $N_0$ is exactly what we expect from the redundancy of the $\pi_\phi(x)$'s. We are one global equation short of completely fixing the $N(x)$'s. Thus, we should be left with a single first class linear combination of the $TS(x)$'s. This is precisely the global Hamiltonian
\begin{equation}H_\text{gl}=\int d^3x\, N_0(x) TS(x)=TS(N_0),\end{equation}
where both $N_0(x)$ and $TS(x)$ are functionals of $g,\pi,\phi$. $H_\text{gl}$ is constructed to be first class with respect to $\pi_\phi$. We can now state the important result that \emph{the extended system with the constraint $\pi_\phi\approx 0$ is consistent as merely a (partial) gauge fixing of the original system.}

\subsection{Eliminate second class constraints}

Let us collect and classify the constraints of our theory to recap what we have done. We started with the first class constraints $TS(x)$ and then imposed the conditions $\pi_\phi(x) = 0$. The former are second class with respect to the latter but, since $\int d^3x \pi_\phi$ plays no role in the theory, there is a corresponding constraint $H_\text{gl}$ that is still first class. We can split the first class $H_\text{gl}$ from the remaining second class $TS(x)$'s by defining the variable $\widetilde{TS}(x) \equiv TS(x) - H_\text{gl}$. This leaves us with the first class constraints
\begin{align}
    H_\text{gl} &\approx 0 & C(x) \approx 0
\end{align}
as well as the second class constraints
\begin{align}
    \widetilde{TS}(x) &\approx 0 & \pi_\phi(x) \approx 0
\end{align}
in analogy with equations (\ref{eq 1st class part}) and (\ref{eq 2nd class part}) of the toy model. Note that, for the moment, we are still relaxing the diffeomorphism constraints. In Section~(\ref{sec: prop pi_phi}), we showed that $N_0$ is consistent gauge choice for general relativity. However, we should now construct the Dirac bracket to explore the full structure of the theory.

First, notice that since $\{H_\text{gl},\pi_\phi\}\approx 0$ we know that $\{\widetilde{TS}(x),\pi_\phi(y)\} \approx \{{TS}(x),\pi_\phi(y)\}$. This leads to the weak equality
$$ \int G(x,x')\{\widetilde{TS}(x'),\pi_\phi(y)\}d^3x'\approx\int G(x,x')\{{TS}(x'),\pi_\phi(y)\}d^3x'=\delta(x,y)$$
provided the Green's function, $G$, for the differential operator acting on $N$ in (\ref{eq lfe simple}) exists. This is guaranteed because the existence and uniqueness of solutions to the CMC lapse fixing equation, for suitable initial data and boundary conditions, implies existence of the Green's function for the respective initial data and boundary value problem.\footnote{This is implied by the existence of a Green function for the so called Lichnerowicz Laplacian $D^2 +R$, an operator that can be put into the form of a Hodge Laplacian $d\delta+\delta d$.} The formal existence of this Green's function is all that we will need for the remainder of the paper.

In terms of this Green's function, the Dirac bracket is defined as:
\begin{equation}\label{Dirac geometro}
 \begin{array}{rcr}
   \{f_1,f_2\}_D&:=&\{f_1,f_2\}-\int d^3 x d ^3 y\, \{f_1,\pi_\phi(x)\}G(x,y)\{\widetilde{TS}(y),f_2\}\\
    &&+\int d^3 x d^3 y \, \{f_1,\widetilde{TS}(x)\}G(x,y)\{\pi_\phi(y),f_2\},
  \end{array}
\end{equation}
for arbitrary functions $f_1$ and $f_2$. By construction, the Dirac bracket between $\pi_\phi$ or $\wts$ and any phase space function is strongly equal to zero. We can then eliminate these constraints from the theory by following Dirac's algorithm \cite{Dirac:CMC_fixing}. First, we set $\pi_\phi = 0$ everywhere that it appears; then we find $\hat\phi_0$ such that $\wts(\hat\phi_0) = 0$ and insert $\hat\phi_0$ into all constraints. Note that it is $\hat\phi$, and not $\phi$, that we can explicitly solve for. This is analogous to what happens in the toy model and arises because of the redundancy in the $\pi_\phi(x)$. When this is done, the system will have been reduced to the original phase space $\Gamma$. The Dirac bracket between functions $f$ of $\Gamma$ and the remaining first class constraints is weakly equivalent to the Poisson bracket. This is true, just as in the toy model, because the extra terms in the Dirac bracket are either proportional to $\{f,\pi_\phi(x)\}$, which is zero, or $\{\pi_\phi(y),H_\text{gl}\}$ and $\{\pi_\phi(y),\pi_\phi(x)\}$, which are both weakly zero. Finally, because we are simply inputting the solution of a constraint back into the Hamiltonian, the equations of motion on the contraint surface will remain unchanged. Thus, $T_{\phi_0}$ is still effectively a canonical transformation. For more details, see \cite{Gomes:linking_paper} or the appendix of \cite{gryb:shape_dyn}.

The last step is to verify that $\wts(\hat\phi_0) = 0$ can be solved for $\hat\phi_0$. We can simplify (\ref{eq LY 1}) using the strong equations $\pi_\phi = 0$ and by taking linear combinations of the first class constraints $C(x)$. This leads to the equivalent constraint
\begin{equation}\label{eq LY 2}
    \frac{e^{-6\hat\phi}}{\sqrt{|g|}} \lf[ \sigma^{ab} \sigma_{ab} - \frac{\mean{\pi}^2_g}{6} e^{12\hat\phi}|g| - e^{8\hat\phi} \bar R[g] \sqrt{|g|}\rt] \approx 0,
\end{equation}
where $\sigma^{ab} \equiv \pi^{ab} - \frac 1 3 \mean{\pi}_g g^{ab} \sqrt{|g|}$ and $\bar R[g] = R[g] - 8(D^2 \hat\phi + (D\hat\phi)^2)$.
Eq (\ref{eq LY 2}) is the Lichnerowicz--York (LY) equation used for solving the initial value problem of general relativity. Its existence and uniqueness properties have been extensively studied.\footnote{See \cite{Niall_73} or, for the specific context given here, see \cite{OMurchadha:LY_cspv}.} It is known to have unique solutions when $\sigma^{ab}$ is transverse and traceless. Fortunately, these are exactly the conditions required by the diffeomorphism and conformal constraints respectively. The formal invertibility of this equation is the second vital requirement for our procedure. Without this, we would not be able to prove the existence of the dual theory. However, given that we can solve (\ref{eq LY 2}) for $\hat\phi_0$ for specified boundary and initial data, we arrive at the dual Hamiltonian
\begin{equation}
    H_\text{dual}'= \mathcal N H_\text{gl}[\phi_0] + \int_\Sigma d^3 x \lambda(x) \left(\pi(x)-\langle \pi\rangle\right),
\end{equation}
where $\mathcal N$ is a spatially constant Lagrange multiplier representing the remaining global lapse of the theory. We can now reinsert the diffeomorphism constraint. This gives the final Hamiltonian
\begin{equation}\label{eq dual ham geo}
    H_\text{dual}= \mathcal N H_\text{gl}[\phi_0] + \int_\Sigma d^3 x \left[\lambda(x)\mathcal D(x) +\xi^a(x)T_{\hat\phi_0} H_a(x)\right],
\end{equation}
using the definition (\ref{eq D def}). Note that we must use the transformed $H_a(x)$ evaluated at $\hat\phi_0$ (recall that $\mathcal D$ is invariant under $T$).  As shown above, the usual Poisson bracket over $\Gamma$ can be used to determine the evolution of the system.
 
\subsection{Construct explicit dictionary}

We can verify that the dual Hamiltonian has the required properties. First, we check that all the constraints are first class. The global fixed Hamiltonian was constructed to be first class with respect to the conformal constraints $\mathcal D$. This can be seen by observing that the $TS(x)$'s are first class with respect to the $C(x)$'s and that the $C(x)$'s are equal to the $\mathcal D(x)$'s when $\pi_\phi(x) = 0$. The $H_a$'s are easily seen to be first class with $H_\text{gl}$ because they are first class with respect to the original $S(x)$'s and because $T$ is a canonical transformation. The $\pi_\phi$'s commute with themselves because they are ultra--local canonical variables. Lastly, one can directly verify that
\begin{equation}
    \pb{T_{\hat\phi_0} \vec H(\vec v)}{\mathcal D(f)} = T_{\hat\phi_0}\pb{\vec H(\vec v)}{\mathcal D(f)} = D(\mathcal L_v f) \approx 0,
\end{equation}
where $\vec v$ and $f$ are smearings.

Secondly, the dual theory is indeed invariant under volume preserving conformal transformations. For this, recall that the $\mathcal D$'s generate the infinitesimal form of (\ref{eq can trans geo}) according to (\ref{eq local vpct}). Furthermore, the theory is also invariant under 3D diffeomorphisms generated by the $H_a$'s. The theory is \emph{not}, however, invariant under 4D diffeomorphisms. The diffeomorphism invariance is only within the spatial hypersurfaces and is, thus, foliation preserving. This means, in particular, that the theory is not Lorentz invariant because it is not invariant under boosts.

Finally, there is a gauge in which the equations of motion of the two theories are equivalent. Compare the ADM Hamiltonian to that of the dual theory.
\begin{align}
    H_\text{ADM} &=\int_\Sigma d^3x\left(N(x)S(x)+\xi^a(x)H_a(x)\right) \notag \\
    H_\text{dual} &= \mathcal N H_\text{gl}[\phi_0] + \int_\Sigma d^3 x \left[\lambda(x)\mathcal D(x) +\xi^a(x)T_{\hat\phi_0} H_a(x)\right] \notag \\
             &= \int_\Sigma d^3 x\,\lf( \mathcal N N_0(x,\hat\phi_0) T_{\hat\phi_0} S(x) + \lambda(x)\mathcal D(x) +\xi^a(x)T_{\hat\phi_0} H_a(x)\right).
\end{align}
Because $T$ is a canonical transformation, the equations of motion of both theories take the same form in the gauges
\begin{align}\label{eq dic geo}
    N(x) &= \mathcal N N_0(x,\hat\phi_0) \notag \\
    \lambda(x) &= 0.
\end{align}
To map the solutions from one side of the duality to the other, we simply need to use the explicit function $\hat\phi_0$ and the map $T_{\hat\phi_0}$. This completes the dictionary.

This dictionary is particularly straightforward to use if one can find initial data for the ADM Hamiltonian that satisfies the initial value constraints. In this case $\hat\phi_0 = 0$. Because both theories are first class, this condition will be propagated and the solutions of each theory are \emph{equal} in the gauges (\ref{eq dic geo}). To find suitable initial data, we still must solve the LY equation. However, for the purpose of using the dictionary in the classical theory, we only need to solve this for a single point on phase space.\footnote{To prove that the dual theory actually exists and to study the quantum theory, we still must be able to solve the LY equation over all of phase space.}

Given the existence of the above dictionary, we arrive at the following proposition:
\begin{proposition}
  The theory with total Hamiltonian (\ref{eq dual ham geo}) is a gauge theory of foliation preserving 3--diffeomorphisms and volume preserving 3D conformal transformations. In the gauge $\lambda = 0$, this dynamical system has the same trajectories as general relativity in CMC gauge.
\end{proposition}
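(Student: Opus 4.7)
The plan is to establish the proposition in two stages: first verifying that the constraint set in $H_{\text{dual}}$ closes under Poisson brackets and correctly generates the claimed symmetry groups, and then matching equations of motion to ADM gravity in CMC gauge via the dictionary $N(x) = \mathcal{N} N_0(x,\hat\phi_0)$, $\lambda(x) = 0$ already isolated in the excerpt. The key constraints are $H_{\text{gl}}[\phi_0]$, $\mathcal{D}(x)$, and $T_{\hat\phi_0} H_a(x)$; the asymmetric roles played by the global $H_{\text{gl}}$ and the local $\mathcal{D}$'s and $H_a$'s will be central to the argument.

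First I would check closure of the constraint algebra. The brackets $\{\mathcal{D}(f_1), \mathcal{D}(f_2)\}$ vanish because the conformal constraints are ultra-local in the canonical variables and generate an abelian action (infinitesimally given by \eqref{eq local vpct}, which commutes with itself up to the volume-preserving trace subtraction). The brackets $\{T_{\hat\phi_0}\vec H(\vec u), T_{\hat\phi_0}\vec H(\vec v)\} = T_{\hat\phi_0}\vec H([\vec u,\vec v])$ follow from the fact that $T_{\hat\phi_0}$ is a canonical transformation and $\vec H$ obeys the Lie algebra of vector fields on $\Sigma$. The mixed bracket $\{T_{\hat\phi_0}\vec H(\vec v), \mathcal{D}(f)\} = \mathcal{D}(\mathcal{L}_v f)$, noted in the text, follows from $\mathcal{D}$ being a density of weight one constructed from tensor fields and from invariance of $\mathcal{D}$ under $T$. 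Finally, first-classness of $H_{\text{gl}}[\phi_0]$ with $\mathcal{D}$ is built in by construction: $H_{\text{gl}}$ was defined precisely so that $\{H_{\text{gl}}, \pi_\phi\} \approx 0$, and on the reduced phase space (after setting $\pi_\phi = 0$ strongly and solving $\widetilde{TS}(\hat\phi_0)=0$) the constraint $C$ reduces to $\mathcal{D}$. First-classness with the diffeomorphisms follows because $H_{\text{gl}}$ is a spatial scalar integrated over $\Sigma$.

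Next I would identify the symmetries. The constraint $\mathcal{D}$ generates the infinitesimal volume-preserving conformal transformations \eqref{eq local vpct}, where the subtraction of $\langle\cdot\rangle$ is exactly what keeps $V_g$ invariant. The constraints $T_{\hat\phi_0} H_a$ generate spatial diffeomorphisms; because the lapse is a spatially constant $\mathcal{N}$ there is no Hamiltonian deformation that tilts hypersurfaces, so the residual diffeomorphism invariance is foliation-preserving rather than full four-dimensional. This is the sense in which the claim about ``foliation preserving 3-diffeomorphisms'' is to be read.

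For the dictionary, I would compute Hamilton's equations $\dot g_{ab} = \{g_{ab}, H_{\text{dual}}\}$, $\dot\pi^{ab} = \{\pi^{ab}, H_{\text{dual}}\}$ and compare with the ADM flow $\dot g_{ab} = \{g_{ab}, H_{\text{ADM}}\}$, $\dot\pi^{ab} = \{\pi^{ab}, H_{\text{ADM}}\}$. Setting $\lambda = 0$ kills the $\mathcal{D}$ term; on initial data solving the Lichnerowicz--York equation one has $\hat\phi_0 = 0$, so $T_{\hat\phi_0}$ reduces to the identity and $H_{\text{gl}}[\phi_0] = \int N_0(x) S(x)\, d^3x$ with $N_0$ solving the CMC lapse-fixing equation \eqref{eq lfe standard}. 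Identifying $N(x) = \mathcal{N} N_0(x)$ and $\xi^a$ with the ADM shift then makes the two Hamiltonians coincide pointwise on the intersection surface. Propagation of $\hat\phi_0 = 0$ along the flow follows from the first-classness established above, so the trajectories agree for all $\lambda$. The main obstacle is really analytical rather than algebraic: one must invoke existence and uniqueness for both the CMC lapse-fixing equation and the LY equation to know that $N_0$ and $\hat\phi_0$ exist globally on phase space, and that the Green's function used in \eqref{Dirac geometro} is well-defined; the reductions in the excerpt treat these points formally, and a rigorous version would rely on the results of \cite{Niall_73} and the subsequent elliptic theory for transverse-traceless data.
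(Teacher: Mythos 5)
Your proposal is correct and follows essentially the same route as the paper: verify first--classness of $\{H_\text{gl}[\phi_0],\mathcal D, T_{\hat\phi_0}H_a\}$ (with the mixed bracket $\pb{T_{\hat\phi_0}\vec H(\vec v)}{\mathcal D(f)}=\mathcal D(\mathcal L_v f)$ and first--classness of $H_\text{gl}$ with $\mathcal D$ by construction), read off the generated symmetries, and then match trajectories to ADM in CMC gauge via $N=\mathcal N N_0(x,\hat\phi_0)$, $\lambda=0$, using that $T_{\hat\phi_0}$ is (effectively) canonical and that first--classness propagates the intersection condition, modulo the same existence/uniqueness inputs for the lapse--fixing and Lichnerowicz--York equations.
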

We define this theory to be \emph{shape dynamics}.

\subsection{Linking theory}\label{eq:first link theory}

It is powerful to view shape dynamics and GR as originating from a \emph{linking theory} just as was done for the toy models in Section~(\ref{sec:linking toy}). This approach was first done in \cite{Gomes:linking_paper}. The linking theory is simply the extended theory on the enlarged phased space $\Gamma_\text{e}$ consisting of the constraints
\begin{align}
    S(N) &\approx 0 \\
    H(\xi) &\approx 0 \\
    \pi_\phi(\rho) &\approx 0.
\end{align}
The canonical transformation $T$ leads to
\begin{align}
    T S(N) &\approx 0 \\
    T H(\xi) &\approx 0 \\
    T \pi_\phi(\rho) = C(\rho) &\approx 0.
\end{align}
The gauge fixing condition $\phi = 0$ trivializes the $T$ map and a phase space reduction eliminates the $C$ constraint. This leads immediately to the ADM theory. Alternatively, the gauge fixing condition $\pi_\phi = 0$ and subsequent phase space reduction eliminates $\widetilde {TS} \approx 0$ by setting $\phi = \phi_0$ then sends $C(\rho) \to D(\rho)$. This is shape dynamics. The dictionary is obtained by further picking the gauge fixings $\rho = 0$ in shape dynamics and $N = N_0$ (ie, CMC gauge) in ADM. These gauge fixings are shown in Figure~(\ref{fig:geo linking}).
\begin{figure}\label{fig:geo linking}
    \begin{center}
	\includegraphics[width = 0.9\textwidth]{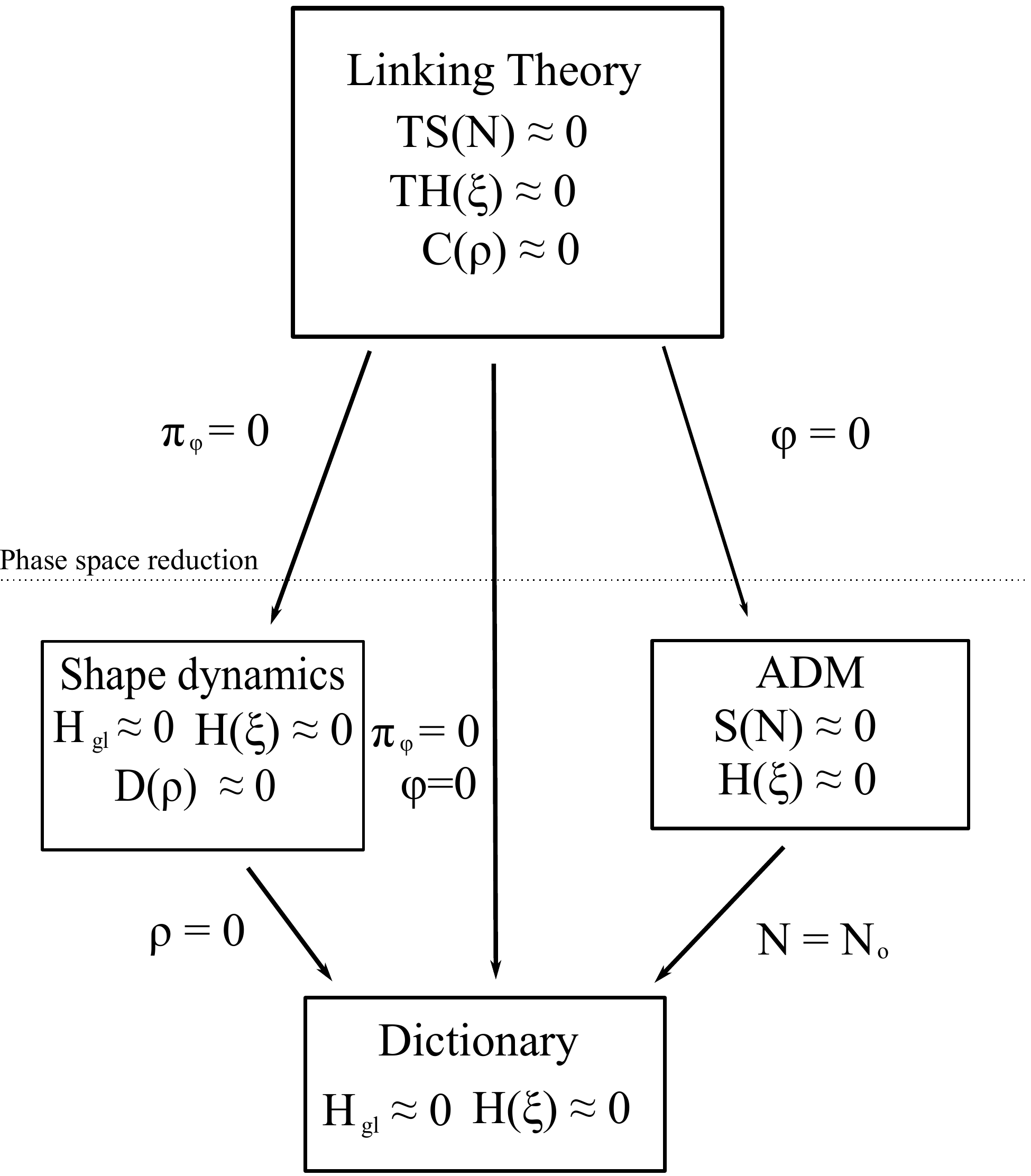}
    \end{center}
    \caption{Shape dynamics and ADM are obtained from different gauge fixings of a \emph{linking} theory.}
\end{figure}

%
%
\chapter{Shape dynamics}\label{chap:shape_dynamics}

In the last chapter we took a back door route to shape dynamics. We motivated its construction by applying a dualization procedure inspired by non--equivariant best matching. The result was a theory, equivariant with respect to volume preserving conformal transformations, with a highly non--local global Hamiltonian. The non--locality allows us to have, at the same time, conformal invariance \emph{and} equivalence to GR. Because of its global lapse, one can think of shape dynamics as a genuine geodesic principle on conformal superspace, albeit one with a complicated configuration space metric. Our choice of metric is made purely to have a dynamics that is equivalent to GR. The purpose of this chapter is to try to understand this choice better. We will try to piece together some of the basic structures of shape dynamics in the hopes of understanding it better as a theory in its own right.

The way that shape dynamics has been presented in terms of a dualization procedure hides some aspects of the simple relationship between shape dynamics and GR. In this chapter, we give a simple picture that illustrates how the shape dynamics Hamiltonian is actually constructed in practice. Because it is obtained through the solution of a non--linear partial differential equation on an arbitrary compact manifold without boundary, $\Sigma$, the global Hamiltonian is not straightforward to handle analytically in full generality. It is, thus, convenient and insightful to compute the Hamiltonian in different perturbative expansions. We will discuss two different approaches to solving for the global Hamiltonian: 1) by Taylor expanding in terms of the inverse of the volume, 2) by expanding in fluctuations about a background. The first approach reveals that shape dynamics has an intriguing behavior at large volume: it becomes a fully conformal theory. The second is useful for studying cosmological solutions of shape dynamics. After computing the Hamiltonian in these two expansions, we will calculate the Hamilton--Jacobi functional in the large volume limit. This large volume expansion contains more information than the usual derivatives expansions that can be performed in standard GR about asymptotically DeSitter (or Anti--DeSitter) backgrounds.

These calculations highlight the main advantage that shape dynamics presents over GR: that the local constraints are linear in the momenta. This has two important consequences: 1) that the constraints themselves can be solved analytically, and 2) that group representations can be formed by exponentiating the local algebra. In other words, it is straightforward to construct quantities that are invariant under volume preserving conformal transformations. Furthermore, there are special gauges, which \emph{do not} correspond to GR in CMC gauge, where certain calculations are drastically simplified. So although it is true that no calculation should, in principle, be more difficult in shape dynamics then in standard GR; in practice, having control of the gauge invariance in shape dynamics can help to organize the calculation in a more efficient way. 

The conformal behavior of the shape dynamics Hamiltonian and the computation of the Hamilton--Jacobi functional at large volume suggests a potentially powerful way of studying gauge/gravity dualities using a holographic renormalization group flow equation in shape dynamics. The immediate advantage presented by an approach based on shape dynamics is that shape dynamics is \emph{nearly} conformally invariant for from the onset. As a result, the action of the canonically quantized conformal constraints on the shape dynamics wavefunction is \emph{nearly} identical to the action of the conformal Ward identities on a CFT living on an arbitrary background. The only thing that breaks the conformal invariance is global Hamiltonian, which, as we will see, is conformal at large volume. This may provide us with an interesting option: to treat Hamiltonian flow in shape dynamics as renormalization group flow in a CFT. If this were possible, it may provide a construction principle for shape dynamics that is independent of general relativity. We will present some hints that this may be possible. A deeper exploration of this exciting option is reserved for future work. Much of the work presented in this chapter is based on \cite{gryb:gravity_cft}.


\section{Basic Construction}\label{sec:sd basic}

In Section~(\ref{sec:sd derivation}), we rigorously constructed shape dynamics using non--equivariant best matching applied to GR in ADM form. We will now present a different derivation that is less rigorous but more intuitive. The aim is to establish a clear picture that can be used to better understand the nature of the shape dynamics Hamiltonian.

Our starting point is to look for a theory that is invariant under volume preserving conformal transformations and 3--dimensional diffeomorphisms and has the same dynamical trajectories and initial value problem as GR in CMC gauge. Thus, we are looking for a theory with a Hamiltonian of the form
\begin{equation}\label{eq:sd tot ham}
   H_\text{SD}(\lambda) = \int_\Sigma d^3 x\lf[ N^a(x,\lambda) H_a + 4\rho(x,\lambda) D \rt] + \mathcal N(\lambda) \hg,
\end{equation}
where, as before, $H_a$ generates 3--diffeos and $D$ generates infinitesimal volume preserving conformal transformations. $\hg$ is some global Hamiltonian, depending only on $\lambda$, that is chosen so that this theory is equivalent to GR. Recall that the volume preserving condition is enforced by using hatted conformal factors $\hat\phi$ that are required to satisfy
\begin{equation}
   \mean{e^{6\hat\phi}} = 1.
\end{equation}
For convenience, we also recall that the metric $g_{ab}$ and its conjugate momentum density $\pi^{ab}$ transform in the following way under volume preserving conformal transformations
\begin{equation}
\begin{array}{l}
   g_{ab} \to e^{4 \hat \phi} g_{ab}, \\
  \pi^{ab} \to e^{-4 \hat \phi} \left[\pi^{ab}-\frac 1 3 \langle\pi\rangle \left(1-e^{6\hat \phi}\right)g^{ab}(x)\sqrt{g}\right].
 \end{array} \label{eq:vpct}
\end{equation}

We must find an expression for $\hg$ such that the flow of $H_\text{SD}$ when $\rho = 0$ is identical to that of the usual ADM Hamiltonian
\begin{equation}
   H_\text{ADM}(\lambda) = \int d^3 x\lf[ N^a(x,\lambda) H_a(x,\lambda) + N(x,\lambda) S(x,\lambda) \rt]
\end{equation}
in CMC gauge. To obtain $\hg$, consider the unique linear combination $\tilde S$ of $S$ such that $\det_{x,y} \pb{\tilde S (x)}{D(y)} \neq 0$. $\tilde S$ represents the part of $S$ that is gauge fixed by the CMC condition $\frac {\pi}{\sqrt g}  = \mean{\pi}$ (or $D=0$). To prove that $\tilde S$ exists and is unique it is most convenient to extend the phase space and work in the \emph{linking} theory, as was done in Section~(\ref{sec:sd derivation}). We will not dwell on the details of this proof here. For a more complete analysis, see \cite{Gomes:linking_paper, gomes:phd}. As a consequence of the invertibility of $\pb{\tilde S (x)}{D(y)}$, the configuration space surfaces $\tilde S = 0$ and $D = 0$ have a unique intersection (see Fig (\ref{fig:sd})) that we require to be the common trajectory shared by shape dynamics and GR. To fulfill this requirement, $\hg$ is defined as the part of $S$ on the intersection that is \emph{not} gauge fixed by the $D$'s. $\hg$ can be defined anywhere on the $D=0$ surface by lifting the value of $\lf. S\rt|_{\tilde S = 0, D=0}$ along the gauge orbits of $D$. Since these are the volume preserving conformal transformations \eq{vpct}, $\hg$ on $D$ is
\begin{equation}\label{eq:hg}
   \hg(g,\pi) = \frac 1 {\sqrt g} S (t_\phi g, t_\phi \pi),
\end{equation}
where we have used the $t_\phi$ map as a short hand for the volume preserving conformal transformations, \eq{vpct}. As illustrated in Fig (\ref{fig:sd}), $\phi$ needs to be found such that $t_\phi$ brings you to the surface $\tilde S = 0$, where $S$ is constant. Thus, \eq{hg} needs to be solved for $\phi$ such that the left hand side is a constant. As an immediate consequence of this definition, $\hg$ is invariant under finite volume preserving conformal transformations.
\begin{figure}
\begin{center}
\includegraphics[width=0.45\textwidth]{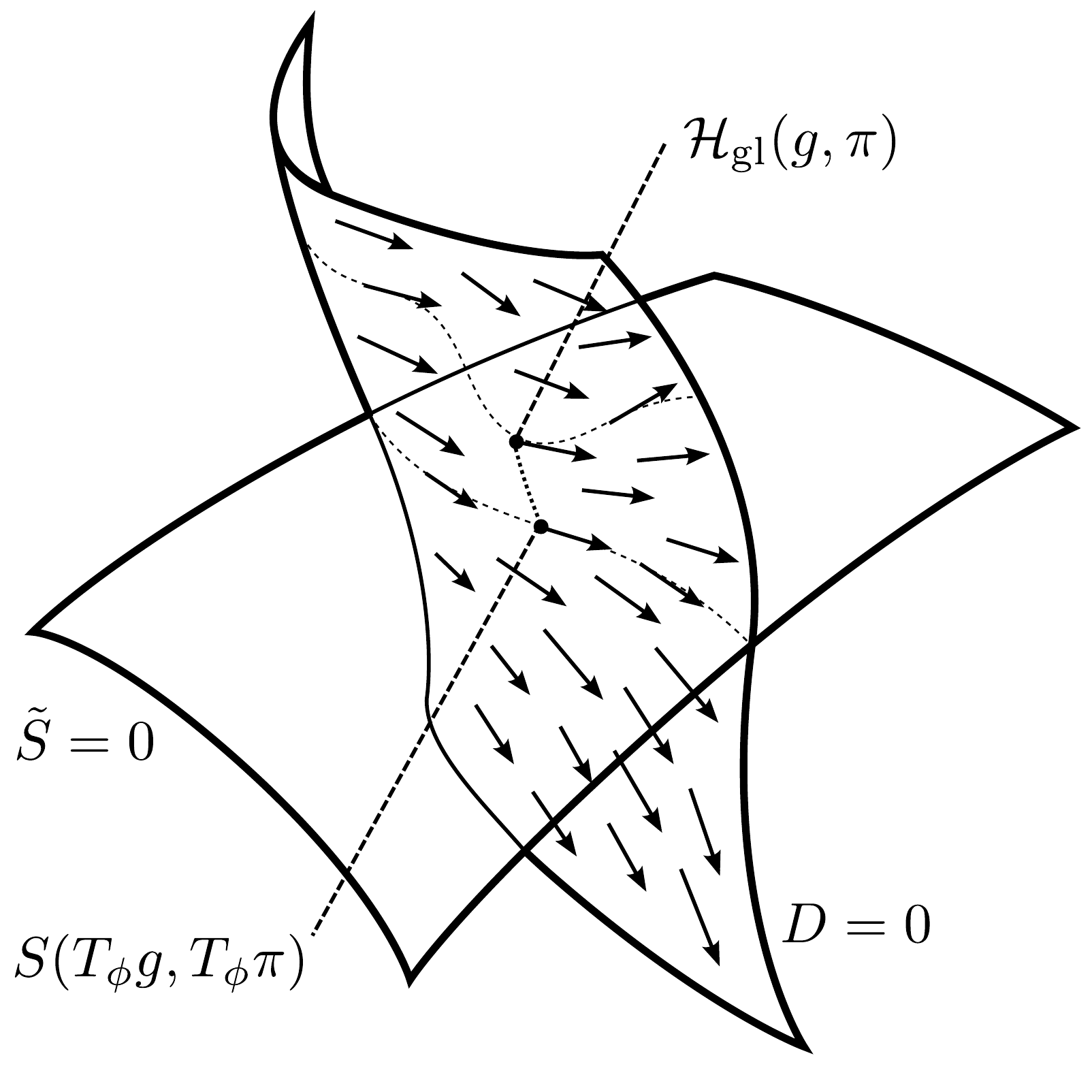}
\caption{The definition of $\hg$. The constraint surface $D\approx 0$ provides a proper gauge fixing of $\tilde S \approx 0$. $\hg$ is defined by the value of $S$ at the gauge fixed surface, represented by the dark dotted line.}\label{fig:sd}
\label{default}
\end{center}
\end{figure}

In summary, the shape dynamics Hamiltonian is given by \eq{sd tot ham}, where $\hg$ and $\hat\phi$ are given by solving the simultaneous equations
\begin{align}
    \hg(g,\pi) &= \frac 1 {\sqrt g} S (t_\phi g, t_\phi \pi) \notag \\
    \mean{e^{6\hat\phi}} &= 1. \label{eq:SD-definition}
\end{align}
This is completely equivalent to final algorithm obtained in Section~(\ref{sec:sd derivation}). However, this derivation presents a much clearer picture of how shape dynamics is constructed from GR. These are two theories living on different intersecting surfaces of the ADM phase space. The intersection of the two theories contains the physical trajectories so that any flow in directions where the surfaces do not intersect is unphysical and can always be projected back down the intersection.

\subsection{Yamabe problem}

We note that the above algorithm is analogous to that used to solve the Yamabe problem \cite{YamabeConjecture} that all compact manifolds in $d\geq 3$ are conformally constant curvature. For this, one needs to show that a constant $\tilde R$ can be found such that
\begin{equation}\label{eq:yamabe}
   \tilde R = R(e^{4 \hat \lambda[g,x)} g)
\end{equation}
for some non--local functional $\hat \lambda[g,x)$ of $g$. The restriction $\mean{e^{6\hat\lambda}} = 1$ selects a unique value of $\tilde R$. This suggests an intriguing connection between the Yamabe problem and shape dynamics. Indeed, $\tilde R$ will play an important role in the solution of the  equation of shape dynamics in the large $V$ expansion.

\subsection{Matter coupling}

The inclusion of matter into the shape dynamics formalism has recently been achieved in \cite{MatterPaper}. The procedure is relatively straightforward and follows from the results of \cite{Isenberg:matter_coupling1,Isenberg:matter_coupling2}. The procedure can be implemented for any matter for which the initial value problem can be solved in GR. We will only consider the pure gravity sector of the correspondence, for which interesting conclusions can already be drawn. The more complete analysis that includes matter is currently underway.

\section{Large volume expansion}

Constructing the shape dynamics Hamiltonian involves solving the system of equations \eq{SD-definition}, which constitute a non--linear partial differential equation under a global integral condition. This task is not straightforward, even when considering simple topologies for $\Sigma$. However, it is possible to find simple expansions that lead to recursion relations with which $\hg$ can be solved. The first expansion we will consider is a Taylor expansion in terms of $1/V^{2/3}$. The $2/3$ power is chosen because this is how the metric changes in terms of the volume under global rescalings. The large volume expansion is useful, not just as a computational tool, but also for providing insight into the nature of $\hg$. As we will see, it may provide a link between shape dynamics and CFT.

\subsection{Canonical transformation}

To perform the large $V$ expansion, it will be convenient to take advantage of the fact that $R$ is conformally constant by shifting
\begin{equation}
  \hat  \phi \to  \hat \phi +  \hat \lambda,
\end{equation}
so that $R \to \tilde R$. Recall that, by \eq{yamabe}, $\hat\lambda$ is the conformal factor used to construct a conformally constant curvature metric, $\tilde g_{ab}$, through $\tilde g_{ab} = e^{4\hat\lambda[g]} g_{ab}$ and $\tilde R$ is the constant value of the curvature when $\tilde g$ has unit volume. Since they are solutions to \eq{yamabe}, both $\tilde R$ and $\hat\lambda$ are non--local functionals of the metric. Note that this freedom drastically reduces the amount of work that needs to be done to perform the large $V$ expansion. This is evidence that knowing how to construct gauge invariant quantities in shape dynamics can help organize certain calculations in a more efficient way.

To expand $\hg$ in powers of $V^{-2/3}$, the explicit $V$ dependence of $\hg$ must be isolated. This can be done using the change of variables $(g_{ab};\pi^{ab}) \to (V,\bar g_{ab}; P, \bar \pi^{ab})$ given by
\begin{align}
   \bar g_{ab} &= \lf(\frac{V}{V_0} \rt)^{-\frac{2}{3}} g_{ab}, \qquad \qquad V = \int d^3x \sqrt{g}, \\
   \bar \pi^{ab} &= \lf(\frac{V}{V_0}\rt)^{\frac{2}{3}}\lf(\pi^{ab} - \frac 1 3 \mean{\pi} g^{ab} \sqrt{g} \rt),  ~ P= \frac 2 3 \mean{\pi},
\end{align}
where $V_0 = \int d^3 x \sqrt{\bar g}$ is a fixed reference volume. The quantity $P$ is the York time in CMC gauge, which is spatial constant by definition. This canonical transformation is motivated by the fact that the York time is canonically conjugate to the volume in CMC gauge. Thus, $\pb{V}{P} = 1$. Our goal is to completely extract the volume dependence from $g_{ab}$ and $\pi^{ab}$. This can be achieved for the metric by simply dividing by the appropriate power of the volume as is done in the construction of $\bar g_{ab}$. It is straightforward to verify that $\pb{P}{\bar g_{ab}} = 0$ so that
\begin{equation}
    \diby{\bar g_{ab}}{V} = 0.
\end{equation}
For the momenta $\pi^{ab}$, we must extract the trace part, which is canonically conjugate to $V$, before dividing by the appropriate power of the volume. This leads to $\bar \pi^{ab}$, which indeed satisfies $\pb{P}{\bar \pi^{ab}} = 0$ so that
\begin{equation}
    \diby{\bar \pi^{ab}}{V} = 0.
\end{equation}
Thus, the barred quantities and $(V,P)$ can be used to extract the volume dependence of $g_{ab}$ and $\pi^{ab}$. However, care must be taken because $\bar g$ and $\bar \pi$ are \emph{not} canonically conjugate so that this is \emph{not} a canonical transformation. As a result, we will have to exercise caution later when solving the Hamilton--Jacobi equation or if one would like to quantize shape dynamics using these variables.

\subsection{Expansion}

If we use these variables and call $\Omega = e^{\hat\phi+ \hat\lambda}$, we can insert (\ref{eq:vpct}) into (\ref{eq:SD-definition}) and, using $D = 0$, see that the equations \eq{SD-definition} transform into
\begin{align}\label{eq:main hg}
&\begin{array}{r}
   \hg = \lf( 2\Lambda - \frac{3}{8}P^2 \rt) + \frac{\lf( 8 \tilde{\nabla}^2 - \tilde R \rt) \Omega}{(V/V_0)^{2/3}\Omega^5}
       - \frac{\bar \pi^{ab} \bar \pi_{ab} }{(V/V_0)^2\Omega^{12}\tilde g}
     \end{array}
   \\
 &  \mean{\Omega^6} = 1, \label{eq:Omega}
\end{align}
where tilded quantities and means are calculated using $\tilde g_{ab}$. We will solve these equations by inserting the expansion ansatz
\begin{equation}
   \hg = \sum_{n=0}^\infty  \left(\frac{V}{V_0} \right)^{-2n/3} \hn{n}, ~~ \Omega^6 = \sum_{n=0}^\infty  \left(\frac{V}{V_0} \right)^{-2n/3}  \wn{n}
\end{equation}
and solving order by order in $1/V^{2/3}$. The reason for expanding $\Omega^6$ instead of $\Omega$ is that the restriction \eq{Omega} is trivially solved by
\begin{equation}
    \mean{\wn{n}} = 0 \label{eq:mean omega}
\end{equation}
for $n \neq 0$ and $\mean{\wn{0}} = 1$.

We can solve for the $\hn{n}$'s by inserting the expansion, taking the mean, and using the fact that $\tilde R$ is constant. We will demonstrate the procedure by working out the first couple of terms. For $n= 0$, we have trivially.
\begin{equation}
    \hn{0} = 2\Lambda - \frac 3 8 P^2.
\end{equation}
For $n=1$, we get
\begin{align}\label{eq:hn1 prelim}
    \hn{1} &= - \frac{1}{\wn{0}^{5/6} V^{2/3}} \left( 8 {\tilde \nabla}^2  - \tilde R \right) \wn{0}^{1/6} = - R \left( \wn{0}^{2/3} \tilde g \right) \notag \\
           &= - R(\wn{0}^{2/3} \tilde g).
\end{align}
If we take the mean of both sides, we obtain
\begin{equation}\label{eq:hn1 inter}
    \hn{1} = - \mean{ R(\wn{0}^{2/3} \tilde g) }.
\end{equation}
Inserting this into \eq{hn1 prelim} gives
\begin{equation}
    R(\wn{0}^{2/3} \tilde g) = \mean{R(\wn{0}^{2/3} \tilde g)}.
\end{equation}
This simply says that $\wn{0}^{2/3} \tilde g_{ab}$ must be the unit volume metric with constant curvature. In other words, $\wn{0}^{2/3} \tilde g_{ab} = \tilde g_{ab}$, or $\wn{0} = 1$. Inserting this result into \eq{hn1 inter}
\begin{equation}
    \hn{1} = - \tilde R.
\end{equation}
Because the solution to the Yamabe problem is \emph{unique}, we know that this is the \emph{only} solution for $\hn{1}$ and $\wn{0}$. 

For $n=2$, we get
\begin{equation}\label{eq:hn2 prelim}
    \hn{2} = -\frac{2}{3} \left( \tilde R + 2 {\tilde \nabla}^2 \right)  \wn{1}.
\end{equation}
Taking the mean and using integration by parts to drop boundary terms (we are on compact without boundary $\Sigma$) we get
\begin{align}
    \hn{2} &= -\frac {2\tilde R} 3 \mean{\wn{1}} \notag \\
	    &= 0.
\end{align}
In the last line, we used the fact that $\mean{\wn 1} = 0$ because of \eq{mean omega}. Inserting this into \eq{hn2 prelim} gives
\begin{equation}
    \lf( \tilde R + 2 \tilde \nabla^2 \rt) \wn 1 = 0.
\end{equation}
This equation has the solution
\begin{equation}
    \wn 1 = 0.
\end{equation}
This solution is unique provided $R$ is not in the discrete spectrum of $2\nabla^2$. We will ignore the measure zero case where $R$ happens to be in this spectrum. For more details on the uniqueness of this procedure, see \cite{gomes:phd}.

The subsequent orders can be worked out in a similar fashion. In general, the solution for $\hn{n}$ can be used to solve recursively for $\wn{n}$. Collecting the first three terms, we get
\begin{eqnarray}\label{eq:hg v exp}
   \hg &=& \lf( 2\Lambda - \frac 3 8 P^2 \rt) - \lf(\frac{V_0}{V}\rt)^{2/3}  \mean{\tilde R} \nonumber
           \\
&&           + \lf(\frac{V_0}{V}\rt)^2 \mean{\frac{\bar \pi^{ab} \bar \pi_{ab}}{\bar g}} + \order{\lf(V/V_0\rt)^{-8/3}}.
\end{eqnarray}
The next order terms are significantly more complex and non--local as they involve the inverse of the operator $R + 2\nabla^2$.

The expression \eq{hg v exp} combined with the conformal constraints allows us to make a classical connection with CFT. Using $P = \frac 2 3 \mean{\pi}$, solving $\hg = 0$ for $\mean{\pi}$, and adding the result to $D\approx 0$, we obtain
\begin{equation}
   \pi / \sqrt{g} = \pm \, c.
\end{equation}
$c$ is a spatial constant and commutes with all conformal constraints. It is, thus, a \emph{central charge} of the conformal algebra. We see that, at the classical level, shape dynamics is equivalent to \emph{two} classical CFTs with specific central charges. In the quantum theory, we will see that it may be possible to treat the terms contributing to $c$ as a 1--loop renormalization of the conformal anomaly. This observation may have some interesting applications in defining shape dynamics directly from a pair of CFTs.


\section{Perturbative shape dynamics}\label{sec:pert sd}

In this section, we introduce a perturbative expansion of fluctuations, $h_{ab}$ and $p^{ab}$, about some background metric, $g_{ab}$, and momenta, $\pi^{ab}$,
\begin{align}
    g_{ab} &\to g_{ab} + \epsilon h_{ab}\\
    \pi^{ab} &\to \pi^{ab} + \epsilon p^{ab}.
\end{align}
Our goal is to solve the system of equations (\ref{eq:SD-definition}) order by order in $\epsilon$ for the expansion parameters, $\hn{n}$ and $\wn{n}$, of $\hd$ and $\Omega$ such that
\begin{align}
    \hd &= \sum_{n=0}^\infty \epsilon^n \hn{n} \\
    \Omega^6 &= \sum_{n=0}^\infty \epsilon^n \wn{n} .
\end{align}

It is straightforward to compute (\ref{eq:SD-definition}) to zeroth order. The result is
\begin{align}
    \mean{\omega_{(0)}} &= 1, \\
    \hn{0} &= \frac{1}{\wn{0}^{2} g} \lf( \pi\cdot\pi - \frac{\mean{\pi}^2}{3} \rt) - \frac{\mean{\pi}^2}{6} + 2\Lambda - \frac{R}{\wn{0}^{4/6}} + 8 \frac{\nabla^2 \wn{0}}{\wn{0}^{5/6}}.\label{eq:LY 0}
\end{align}
Equation \eq{LY 0} has the same form as (\ref{eq:SD-definition}), which is known to have a unique solution. However, \eq{LY 0} is simpler because it is in terms of the background metric only. If we can pick a background for which $g_{ab}$ and $p^{ab}$ are spatial constants, then it is easy to verify that the unique solution is
\begin{align}
    \wn{0} &=1 \\
    \hn{0} &=  \frac{1}{ g} \pi\cdot\pi - \frac{\mean{\pi}^2}{2} + 2\Lambda - R = 0.
\end{align}
As expected, we've recovered the undensitized Hamiltonian constraint of the ADM theory on CMC surfaces in terms of the background fields. Since we're expanding about a solution, $\hn{0}$ is satisfied by assumption. In appendix~(\ref{Appendix-useful formula}), we give $g_{ab}$ and $\pi^{ab}$ for a background deSitter spacetime in CMC gauge.

To deal with the higher orders, we need the following propositions.
\begin{proposition}
    The equation
    \begin{equation}\label{eq:main sd equation}
	\ts = \hd
    \end{equation}
    can be expanded order by order in perturbation theory, for $n>0$, by solving
    \begin{equation}\label{eq:main pert}
	H_{(n)} = \lf( \nabla^2 + m^2_{(n)} \rt) \omega_{(n)} - f_{(n)},
    \end{equation}
    where
    \begin{itemize}
	\item $\nabla^2$ and $m_{(n)}$ depend only on the backgrounds $g_{ab}$ and $\pi^{ab}$.
	\item $f_{(n)}$ depends on $\omega_{(i)}$ only for $i<n$ and all perturbations of the metric and momenta up to $n^\text{th}$ order.
    \end{itemize}
\end{proposition}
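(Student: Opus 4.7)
The plan is to start from the master equation (\ref{eq:main hg}), which we can write schematically as
$$F[\Omega; g+\epsilon h, \pi + \epsilon p] \;=\; \hd,$$
where $F$ is the non--linear functional in $\Omega$ obtained by applying the volume--preserving conformal transformation to the densitised scalar constraint. I would substitute $\Omega^6 = \sum_{n \geq 0} \epsilon^n \wn{n}$, $\hd = \sum_n \epsilon^n \hn{n}$, together with $g \to g + \epsilon h$ and $\pi \to \pi + \epsilon p$, and Taylor expand $F$ about the background $(g, \pi, \wn{0})$. At order $\epsilon^n$ the only term containing the unknown $\wn{n}$ comes from a single functional derivative of $F$ with respect to $\omega$ evaluated at the background, multiplied by $\wn{n}$; every other contribution involves only $\wn{i}$ with $i<n$ and the perturbations $h,p$ to order at most $n$. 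This separates the equation into
$$\hn{n} \;=\; \lf.\frac{\delta F}{\delta \omega}\rt|_{\mathrm{bg}}\, \wn{n} \;+\; f_{(n)},$$
with $f_{(n)}$ collecting everything else.

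Next I would compute $\delta F/\delta\omega|_{\mathrm{bg}}$ explicitly. Using $\wn{0}=1$ on the chosen background (see the zeroth--order analysis in Section~(\ref{sec:pert sd})), the $\omega$--dependence in $F$ organises into three types of terms: the kinetic factor $\propto \omega^{-2}\bar\pi\cdot\bar\pi/\bar g$, the conformally dressed scalar--curvature piece $(8\tilde\nabla^2 - \tilde R)\Omega/\Omega^5$, and $\omega$--independent contributions. Linearisation of the first two yields a differential part $\nabla^2$ (inherited from the conformal Laplacian) together with an algebraic part built from $R$, $\pi\cdot\pi/g$, and $\mean{\pi}^2$. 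Combining, one reads off $\lf.\delta F/\delta\omega\rt|_{\mathrm{bg}} = \nabla^2 + m^2_{(n)}$, where $m^2_{(n)}$ depends only on the backgrounds $g$ and $\pi$, exactly as asserted. The possible $n$--dependence of $m^2_{(n)}$ arises purely from combinatorial coefficients produced by expanding the fractional powers $\omega^{-2}$ and $\omega^{-2/3}$ in the Taylor series, never from $\wn{i}$ themselves.

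The volume--preserving constraint $\mean{e^{6\hat\phi}}=1$ reduces, under the expansion, to $\mean{\wn{n}} = 0$ for $n\geq 1$ (and $\mean{\wn{0}}=1$). This constraint is purely algebraic at each order: it fixes the otherwise--undetermined spatially constant zero mode of $\wn{n}$. Equivalently, applying $\mean{\cdot}$ to (\ref{eq:main pert}) extracts $\hn{n}$ (which, being the global Hamiltonian, is spatially constant) in terms of $\mean{f_{(n)}}$ and $\mean{m^2_{(n)}\wn{n}}$; the residual equation then determines the non--zero--mode part of $\wn{n}$ by inverting $\nabla^2 + m^2_{(n)}$, which is elliptic whenever the Lichnerowicz--York operator of Section~(\ref{sec:sd derivation}) is. Hence both $\hn{n}$ and $\wn{n}$ are uniquely fixed by lower--order data.

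The main obstacle will be the second step: carefully checking that no higher power of $\wn{n}$ contaminates the $n$--th order equation. Because $F$ contains the negative fractional powers $\omega^{-2}$, $\omega^{-2/3}$, $\omega^{-5/6}$ and also $\nabla^2\omega$ sitting inside $\omega^{-5/6}$, the Taylor expansion of $F$ a priori generates arbitrary products $\wn{i_1}\cdots\wn{i_k}$. One must verify that in order to contribute at $\epsilon^n$ with a surviving factor $\wn{n}$, all remaining indices must be $0$, so that by $\wn{0}=1$ the contribution collapses to a single linear insertion of $\wn{n}$. This counting, together with the fact that the only derivative acting on $\omega$ in $F$ is a single $\nabla^2$, is what guarantees the precise form (\ref{eq:main pert}) and underlies the recursive solvability of the perturbative expansion.
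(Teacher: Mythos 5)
Your proposal is correct and rests on the same order--counting argument the paper uses: since $\wn{n}$ necessarily carries a factor $\epsilon^n$, its coefficient at order $\epsilon^n$ must be evaluated on the background alone, and everything else collects into $f_{(n)}$ built from $\wn{i}$ with $i<n$ (any residual linear-in-$\wn{n}$ piece being absorbed into $m^2_{(n)}$, exactly as you note via the combinatorial coefficients). The extra material you include --- the explicit linearisation of the conformally transformed constraint and the zero-mode/solvability discussion --- goes beyond the proposition itself but matches the paper's subsequent treatment, so there is no gap.
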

\begin{proof}
    $\wn{n}$ \emph{must} appears in \eq{main sd equation} with a factor of $\epsilon^n$. Thus, any product of it with the operator $\nabla^2 + m^2_{(n)}$ will be higher order in epsilon unless this operator depends only on the background. Furthermore, $f_{(n)}$ will contain higher orders of $\epsilon$ unless it is composed of $\wn{i}$ for $i<n$. The case $i=n$ can be absorbed by $m_{(n)}$.
\end{proof}
\begin{proposition}
    The condition
    \begin{equation}
	\mean{\Omega^6} = 1
    \end{equation}
    can be expressed order by order in perturbation theory, for $n>0$, as
    \begin{equation}\label{eq:omega pert}
	\mean{\wn{n}} = W_{(n)}
    \end{equation}
    where $W_{(n)}$ depends on $\omega_{(i)}$ only for $i<n$ and all perturbations of the metric and momenta up to $n^\text{th}$ order.
\end{proposition}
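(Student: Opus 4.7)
The plan is to rewrite the constraint $\mean{\Omega^6}=1$ in a form that removes the $\epsilon$--dependent denominator $V_{g+\epsilon h}$, namely
\begin{equation}
    \int d^3 x\, \sqrt{g+\epsilon h}\,(\Omega^6 - 1) = 0,
\end{equation}
and then expand the two factors of the integrand independently in powers of $\epsilon$. For the volume factor I would write
\begin{equation}
    \sqrt{g+\epsilon h} = \sqrt{g}\sum_{k=0}^\infty \epsilon^k\, s_k[h;g],
\end{equation}
with $s_0 = 1$, $s_1 = \tfrac{1}{2}g^{ab}h_{ab}$, and in general $s_k$ a universal polynomial of degree exactly $k$ in the background--contracted components of $h_{ab}$. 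For the bracket I would use $\Omega^6 - 1 = (\wn{0}-1) + \sum_{m\geq 1}\epsilon^m\,\wn{m}$ and then simply collect the coefficient of $\epsilon^n$ in the integrated equation.

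The key algebraic observation that makes the recursion close is that the only contribution containing $\wn{n}$ itself comes from the $k=0$, $m=n$ pair, which is $s_0\wn{n} = \wn{n}$; every other term in the double sum $\sum_{k+m=n} s_k \wn{m}$ pairs some $s_k$ with $k\geq 1$ against some $\wn{m}$ with strictly lower order $m < n$, and the stand--alone $-s_n$ piece inherited from the $-1$ in $\Omega^6-1$ is purely background. After separating the $\wn{n}$ piece and dividing through by the background volume $V_g$ one arrives at
\begin{equation}
    \mean{\wn{n}} \;=\; \mean{s_n\,(1-\wn{0})} \;-\; \sum_{k=1}^{n-1} \mean{s_k\,\wn{n-k}} \;\equiv\; W_{(n)},
\end{equation}
where the means are now taken with respect to the background metric. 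Inspection of this formula immediately yields the two stated dependency claims: $W_{(n)}$ involves only the background quantity $\wn{0}$, the lower--order coefficients $\wn{m}$ with $m<n$, and the $s_k$ with $k\leq n$, which are polynomials of degree $\leq n$ in the metric perturbation $h_{ab}$. The momentum perturbations $p^{ab}$ enter only indirectly, through the $\wn{m}$ determined at previous orders by the companion equation \eq{main pert}.

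To justify the structural claim that $s_k$ contains $h$ exactly to order $k$, I would invoke $\det(\mathbf 1 + \epsilon\, g^{-1}h) = \exp[\operatorname{tr}\ln(\mathbf 1 + \epsilon\, g^{-1}h)]$ and expand, which writes each $s_k$ as a universal polynomial in the traces $\operatorname{tr}\!\bigl((g^{-1}h)^j\bigr)$ with $j\leq k$. The proof in this form is purely algebraic and does not require inverting any differential operator --- which is precisely why it controls only the single spatial constant $\mean{\wn{n}}$ rather than the full field $\wn{n}$ (the latter being determined by Proposition~1). The main subtlety, rather than a genuine obstacle, is simply bookkeeping of the determinant expansion; a minor conceptual point worth flagging is that, in contrast to the large--volume expansion where the integration measure was $\epsilon$--independent and forced $\mean{\wn{n}}=0$ for $n>0$, here the $\epsilon$--dependence of $\sqrt{g+\epsilon h}$ produces nontrivial $W_{(n)}$ whenever the product of $s_k$ with $\wn{n-k}$ has a nonvanishing background mean.
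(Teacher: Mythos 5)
Your proof is correct and rests on the same observation as the paper's: since $\wn{n}$ carries an explicit factor of $\epsilon^n$, the only order-$\epsilon^n$ contribution containing it is its product with the zeroth-order (background) volume element, so it enters only through $\mean{\wn{n}}$ computed with the background metric, while every other term is a product of lower-order $\wn{i}$ with perturbations of the measure. You go further than the paper by carrying out the determinant expansion explicitly and exhibiting a closed formula for $W_{(n)}$, but the underlying order-counting argument is identical.
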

\begin{proof}
    Because $\wn{n}$ necessarily introduces a factor of $\epsilon^n$, $\wn{n}$ can only enter $\mean{\Omega^6}$ through $\mean{\wn{n}}$ calculated with the background metric. The other terms of $\mean{\Omega^6}$ are necessarily products of lower order expansions of the background and of $\Omega$.
\end{proof}

If we can find a background for which $m_{(n)}$ is a spatial constant depending only on time (the deSitter background given in Appendix~(\ref{Appendix-useful formula}) is an example of such a background) then we can solve \eq{omega pert} and \eq{main pert} for $n>0$ explicitly. Taking the mean of \eq{main pert}, we get
\begin{align}
    H_{(n)} &= \mean{\lf( \nabla^2 + m^2_{(n)} \rt) \omega_{(n)} - f_{(n)}}, \\
	    &= m^2_{(n)} \mean{\omega_{(n)}} - \mean{f_{(n)}} \\
	    &= m^2_{(n)} W_{(n)}- \mean{f_{(n)}},
\end{align}
where, in the second line, we used integration by parts and the constancy of $m$ and, in the third line, we used \eq{omega pert}. In general, $f_{(n)}$ will depend on $\omega_{(i)}$ for $i<n$. We can compute these by inserting our solution for $H_{(n)}$ into \eq{main pert}. This leads to
\begin{equation}\label{eq:omega}
    \omega_{(n)} = W_{(n)} + \lf( \nabla^2 + m_{(n)}^2(t) \rt)^{-1} \lf[ f_{(n)} - \mean{f_{(n)}} \rt].
\end{equation}
This equation can be solved by first finding the solution $\tilde{\omega}_{(n)}$ of
\begin{equation}
    \tilde{\omega}_{(n)} = \lf( \nabla^2 + m_{(n)}^2(t) \rt)^{-1} f_{(n)}
\end{equation}
then setting
\begin{equation}
    \omega_{(n)} = W_{(n)} + \tilde{\omega}_{(n)} - \mean{\tilde{\omega}_{(n)}}.
\end{equation}
That this ansatz solves \eq{omega} follows directly from the linear action of the operator $\nabla^2 + m_{(n)}^2(t)$ on the mean. For certain backgrounds, the operator $\nabla^2 + m_{(n)}^2(t)$ can be inverted straightforwardly. For the deSitter background presented in Appendix~(\ref{Appendix-useful formula}), the Laplacian is simply the Laplacian on the sphere. The explicit calculation of $\wn{n}$ arbitrary order is, therefore, straightforward and left for future work (some variations useful for this calculation are given in the Appendix~(\ref{Appendix-useful formula})).


\section{Hamilton--Jacobi (HJ) equation}

Using the $1/V^{2/3}$ expansion, we can solve the HJ equation for shape dynamics with symmetric boundary conditions. This is useful both for solving the classical theory and for later drawing a connection with CFT. Remarkably, gauge invariance will allow us to solve for \emph{all} local constraints in the large $V$ expansion. This provides more information at each order than the usual derivative expansions used to solve the HJ equation in GR \cite{Freidel,Verlinde}. The HJ equation can be obtained from \eq{hg v exp} by making the substitutions
\begin{align}
   P &\to \ddiby{S}{V} & \pi^{ab} &\to \ddiby{S}{ g_{ab}},
\end{align}
where $S= S(g_{ab}, \alpha^{ab})$ is the HJ functional that depends on the metric $g_{ab}$ and parametrically on the separation constants $\alpha^{ab}$, which are symmetric tensor densities of weight 1. We can express $\bar \pi^{ab}$ in terms of $\ddiby{S}{ g_{ab}}$
\begin{equation}
    \bar \pi^{ab} \to V^{2/3} \lf( \ddiby S {g_{ab}} - \frac 1 3 \mean{ \ddiby S {g_{cd}} g_{cd}} g^{ab} \sqrt g \rt)
\end{equation}
then use the chain rule
\begin{equation}
    \ddiby S {g_{ab}(x)} = \ddiby S V \ddiby V {\bar g_{ab}(x)} + \int d^3y\, \ddiby S {\bar g_{ab}(y)} \ddiby {\bar g_{ab}(y)} {g_{ab}(x)}
\end{equation}
to write the result in terms of $\ddiby{S}{V}$ and $\ddiby{S}{\bar g_{ab}}$. Remarkably, the $V$ derivatives drop out of the final expression:
\begin{equation}\label{eq:hj sub}
   \bar \pi^{ab} \to \ddiby{S}{\bar g_{ab}} - \frac 1 3 \mean{ \bar g_{ab} \ddiby{S}{\bar g_{ab}}} \bar g^{ab} \sqrt{\bar g}.
\end{equation}

The strategy will be to expand $S$ in powers of $(V/V_0)^{-2/3}$,
\begin{equation}
   S = \sum_{n=0}^\infty \lf( \frac{V}{V_0}\rt)^{(3-2n)/3} \sn{n}
\end{equation}
then insert this expansion into the HJ equation obtained using the substitution \eq{hj sub}. To obtain a complete integral of the HJ equation, $\sn{0}$ can be taken of the form $\sn{0}=\int d^3 x \, \alpha^{ab}g_{ab}$. The linear constraints determine $\alpha^{ab}$ to be transverse and covariantly constant trace. The leading order HJ equation determines the value of the trace of $\alpha^{ab}$. This restricts the freely specifiable components of $\alpha^{ab}$ precisely to the freely specifiable momentum data in York's approach \cite{York:cotton_tensor}.

When $\alpha^{ab}$ have a vanishing transverse--traceless part, it is possible to solve for the HJ functional exactly using a recursion relation. These conditions are compatible with asymptotic (in time) dS space, which has maximally symmetric CMC slices. The treatment of general $\alpha$'s is currently under investigation. The vanishing transverse--traceless condition implies that
\begin{equation}
   \sn{0} = \pm \sqrt{ \frac {16} 3 \Lambda }\, V_0.
\end{equation}

To obtain the remaining $\sn{n}$'s, we can use the gauge invariance of $\hg$ under the action of the $D$'s to work in a gauge where $R$ is constant. In this gauge, $\tilde R = R$ and the variations of $\tilde R$ can be found using the standard variations of $R$. The $\sn{n}$'s can be found recursively using our solution for $\sn{0}$ and by collecting powers of $(V/V_0)^{-2/3}$. The first terms are
\begin{align}
   \sn{1} &= \mp \sqrt{\frac 3 \Lambda }  \tilde R \, V_0 = \mp \sqrt{\frac 3 \Lambda }  \int d^3x \sqrt {\bar g} \tilde R, \\
   \sn{2} &= \pm \lf(\frac 3 \Lambda\rt)^{3/2} \int d^3 x \sqrt {\bar g} \lf( \frac 3 8 \tilde R^2 - \tilde R^{ab} \tilde R_{ab} \rt).
\end{align}
Note that $\sn{0}$ and $\sn{1}$ are the only terms with positive dimension. The higher order terms can be obtained straightforwardly but become increasingly more involved because of the non--local terms appearing in the V expansion of $\hg$. Gauge invariant solutions can be obtained by restoring the $\lambda[g,x)$ dependence of the tilded variables. This solves the local HJ constraints of shape dynamics in asymptotic dS space. In this calculation, the gauge invariance has allowed us to construct a general solution to \emph{all} the local constraints of the theory in the large $V$ expansion. This would \emph{not} be possible in GR since the local quadratic constraints are considerably more complicated, providing further evidence that the local constraints of shape dynamics are helpful in simplifying certain calculations.

\section{The semiclassical correspondence}

We will now use our solution of the HJ equation to establish a semiclassical correspondence between shape dynamics and CFT. In the semiclassical approximation, the phase of the wavefunctional is given by the solution to the HJ equation. Thus,
\begin{equation}
   \Psi_\text{sd} = \Psi_\text{sd}^+ + \Psi_\text{sd}^- =a_+ e^{\frac i \hbar S_+} + a_- e^{\frac i \hbar S_-},
\end{equation}
where $S_\pm$ represent the two solutions of the HJ equation. Using the gauge invariance of our solutions under volume preserving conformal transformations, it follows, by differentiating with respect to the volume, that the $\Psi_\text{sd}^\pm$ obey the conformal Ward identities
\begin{equation}
   i\hbar g_{ab} \frac 1 {\sqrt g} \ddiby{}{g_{ab}} \Psi_\text{sd}^\pm(g) = \mp A(g) \Psi_\text{sd}^\pm(g),
\end{equation}
where
\begin{eqnarray} \label{eq:conf_ano}
&&   A(g) = \sqrt{ \frac {16} 3 \Lambda } - \sqrt{\frac 1 {3\Lambda} } \mean{\tilde R} ~ \lf( V/V_0\rt)^{-2/3} \\
&&- \frac 1 3 \lf(\frac 3 \Lambda\rt)^{3/2} \mean{\frac 3 8 \tilde R^2 - \tilde R^{ab} \tilde R_{ab}} ~ \lf( V/V_0\rt)^{-4/3}+ \cdots \nonumber
\end{eqnarray}
is a conformal anomaly. The advantage that shape dynamics has over GR is that the local constraints are \emph{linear} in the momenta. The shape dynamics constraints can be quantized unambiguously as vector fields on configuration space leading to linear Ward identities on the CFT side of the correspondence. It follows that the wavefunctional of shape dynamics is invariant under diffeomorphisms and volume preserving conformal transformations. The correspondence thus implies that the CFT partition function is also invariant under diffeomorphisms and volume preserving conformal transformations at all RG times and not just at the conjectured fixed point (ie, the infinite volume limit).

This correspondence suggests an interesting possibility: the potential of developing a construction principle for shape dynamics that does not rely on having GR at our disposal. Such a construction principle might be obtained through the correspondence by trying to implement Barbour's understanding of time \cite{barbour:eot}. The identification of ``volume time'' with ``RG time'' suggests that time is identified with the level of coarse graining of a CFT. Coarse graining is roughly a restriction of the complexity of configuration space. If true, this would imply that time is given by complexity. Thus, this construction principle for the shape dynamics Hamiltonian would provide a realization of Barbour's idea that the flow of time enters a timeless theory through a measure of complexity. He calls this measure the abundance of ``time capsules.''

Our derivation is close in spirit to \cite{Verlinde} and is particularly inspired by Freidel \cite{Freidel}. As mentioned, being able to impose all local constraints is not technically viable in GR, and thus provides an enormous advantage of the shape dynamics approach. Our central charge is a genuine central charge, even away from the fixed point. This is in contrast to GR where the constraints are no longer first class with respect to the conformal constraints away from the fixed point.

Our assumptions for the construction of the large volume shape dynamics Hamiltonian are compact CMC slices and the existence of trajectories that reach the large volume limit. Thus, we have shown that the correspondence from shape dynamics does not need to assume asymptotic (A)dS space, but is a generic large CMC volume gravity/CFT correspondence. To obtain the particularly simple HJ functional, we furthermore assume late time homogeneity. In light of these advantages, we believe that shape dynamics may be the natural framework for further exploring the connection between gravity in the large CMC volume limit and boundary CFT.

%
\chapter{Conclusions / Outlook}

We have shown that it is possible to derive gravity starting from two simple principles:
\begin{enumerate}
    \item that all measurements of length reduce to local comparisons.
    \item that duration should be a measure of the relative change in the universe.
\end{enumerate}
The first principle implies that it is the local \emph{shape} degrees of freedom that are physically relevant for the evolution. This motivates that the transformation
\begin{equation}\label{eq:main symmetry}
    g_{ab}(x) \to e^{4\phi(x)} g_{ab}(x)
\end{equation}
should be a gauge symmetry of a theory of gravity, where $g_{ab}$ is a dynamical spatial metric and $\phi$ is an arbitrary conformal factor. The second principle implies that the dynamics of the theory should be given by a geodesic principle on shape space. We then reviewed a procedure, called \emph{best matching}, that implements these principles simultaneously.

The idea behind best matching, we saw, was to shift the redundant configuration variables used in the theory (in shape dynamics this is the 3--metric) along the symmetry directions until the difference between two different snapshots of the configurations, calculated with some choice of metric, is minimized. This minimum distance gives the value of the metric on shape space. We showed that this procedure was equivalent to treating configuration space as a fibre bundle over shape space and then choosing a particular section on this fibre bundle. Best matching is, thus, a way of implementing Mach's principles by doing gauge theory on configuration space. In phase space, we showed that best matching is equivalent to performing a particular canonical transformation then imposing an extra condition. For certain choices of metric on configuration space, this leads to the standard Gauss constraints encountered in gauge theories. For others, this extra condition provided a gauge fixing of one of the other first class constraints of the theory.

Once the structure of the fibre bundle has been identified, the only ambiguity in the procedure is a choice of metric on configuration space. We showed that there exists a unique choice of metric that leads to a theory, shape dynamics, whose dynamics is equivalent to GR, even with the volume preserving restriction. The relationship between shape dynamics and GR is most easily understood by noting that both theories are different gauge fixings of a larger \emph{linking} theory. In this picture, it is possible to ``trade'' one symmetry for another by lifting to the linking theory and performing the appropriate gauge fixing. However, in order to get non--trivial dynamics in the theory, it was necessary to trade \emph{all but one} of the local Hamiltonian constraints of GR for constraints that generate local conformal transformations. For this reason, there is a global restriction on the transformation \eq{main symmetry} in shape dynamics. This restriction is that the conformal transformation must preserve the 3--volume of the universe when the spacial topology is compact and without boundary.

The conformal symmetry \eq{main symmetry} that is obtained from this procedure is technically and conceptually much simpler than foliation invariance in GR. The cost of this simplification is non--locality in the Hamiltonian. This non--locality originates from the fact that we must perform a phase space reduction in the linking theory to obtain shape dynamics. This phase space reduction involves the inversion of a partial differential equation. The challenge in shape dynamics is, thus, to compute this non--local Hamiltonian explicitly. We provided two expansions where this is possible. The first, was an expansion in large volume. This expansion reveals that shape dynamics is purely conformal at large volume. The solution to the Hamilton--Jacobi equation about the large volume limit suggests an intriguing connection between Hamiltonian flow away from the large volume limit and RG in a boundary CFT. The second expansion was in terms of small perturbations about a background. This expansion may be particularly useful for cosmology.

We have shown that there exists an equivalent formulation of GR that has a different symmetry group and is motivated from simples principles. Obviously, there are many interesting possible ways to extend this work. First, it is important to understand the structure of this new theory. For example, there are many interesting conceptual questions like how to explain the twins paradox and length contraction in a theory with an absolute notion of simultaneity. Alternatively, it would be interesting to study how highly symmetric solutions could be derived directly from shape dynamics without reference the analogous solution in GR in CMC gauge. More interesting, though, would be the potential for doing new calculations in cosmology. Because gauge fixing is straightforward in shape dynamics, certain calculations may be easier than the analogous calculation in GR.

However, the most interesting implications are those for the quantum theory. The non--locality at the classical level should not allow for any predictions different from that of GR. But, at the quantum level, the non--locality might manifest itself differently. This would be an interesting possibility to explore. The local symmetries of shape dynamics are identical to those of Ho\v rava--Lifshitz gravity. However, the non--locality of the dynamics makes it impossible to use the power counting arguments normally used for perturbative renormalizability. Nevertheless, it may still be true that this new theory will change the RG flow of the theory into the UV. It would be interesting to compare the non--perturbative behavior of quantum shape dynamics with that of GR. Finally, the potentially most interesting extension of this work would be to identify a principle for selecting the metric on configuration space. This would have to pick out the precise form of the shape dynamics Hamiltonian without having to refer directly to GR. One possible mechanism would be to take inspiration from the AdS/CFT correspondence and try to equate Hamiltonian flow away from large volume with RG flow in a boundary CFT. If such a principle could be found, it would potentially explain how to quantize a field theory that is fundamentally non--local. It would also provide the missing link necessary for deriving gravity directly from a simple set of well motivated principles.

\appendix

\chapter*{APPENDICES}
\addcontentsline{toc}{chapter}{APPENDICES}

\chapter{Useful Formula}
\label{Appendix-useful formula}

\section{De Sitter background in CMC}

In this section, we give the metric, $g_{ab}$, and its conjugate momentum density, $\pi^{ab}$, for a deSitter spacetime in CMC gauge in arbitrary dimension, $d$. We also calculate useful quantities like the spatial curvature, $R$, and other important phase space functions.

\begin{align}
g_{\mu\nu}(t,r,\theta_1, \theta_2) &= \lf( \begin{array}{c c}
                                -1 & \\ & \alpha^2\ct{2} \Omega_{ab}
                               \end{array} \rt) & \Omega_{ab}(r,\theta_1, \theta_2) = \lf( \begin{array}{c c c} 
                               1 & & \\ & \sin^2 \theta_1 & \\ &  & \sin^2 \theta_1 \sin^2 \theta_2
                               \end{array}
                               \rt)
\end{align}

\begin{equation}
    \begin{array}{c c c}
	\sqrt\Omega = \sin^2 \theta_1 \sin \theta_2 & V_0 = \int d^3 x \sqrt\Omega & V = \int d^3x \sqrt g = \alpha^3 \ct{3} V_0\\ & & \\
	
	g_{ab} = \alpha^2 \ct{2} \Omega_{ab} & g^{ab} = \frac{\Omega^{ab}}{\alpha^2 \ct{2}} & \sqrt g = \alpha^3 \ct{3} \sqrt\Omega \\ & & \\
	\pi^{ab} = -\frac{(d-1)}{\alpha} \tht{} g^{ab} \sqrt g & \pi = - \frac{d(d-1)}{\alpha} \tht{} \sqrt g & \mean{\pi} = -\frac{d(d-1)}{\alpha} \tht{} \\ & & \\
	R = \frac {d(d-1)} {\alpha^2 \ct{2}} & \Lambda = \frac{d(d-1)}{2\alpha^2} & \frac{1}{\sqrt g} \lf[ \pi\cdot\pi - \frac{\pi^2}{2} \rt] = -6 \tht{2} \sqrt g \\ & & \\
	G_{abcd} = g_{ac} g_{bd} - \frac 1 {d-1} g_{ab} g_{cd} & & 
    \end{array}
\end{equation}

\section{Useful Variations}\label{variations}

In this section, we give some useful variations of phase space quantities for calculating $\hn{1}$.

The expansion is:
\begin{itemize}
    \item $  g_{ab} \to g_{ab} + \epsilon h_{ab} $
    \item $ \pi^{ab} \to \pi^{ab} + \epsilon p^{ab} $
    \item $ \hd = \hn{0} + \epsilon\hn{1} + \epsilon^2\hn{2} +\hdots$
    \item $ \Omega^6 = \wn{0} + \epsilon\wn{1} + \epsilon^2\wn{2} +\hdots$
\end{itemize}

In terms of these, we find:
\begin{itemize}
    \item $ \Omega^n = 1 + \epsilon\, n\wn{1} + \epsilon^2 \lf( n \wn{2} + \frac{n!}{2!(n-2)!}\wn{1} \rt) + \ordn{2} $
    \item $ g^{ab} \to g^{ab} - \epsilon h^{ab} + \ordn{2} $
    \item $ \sqrt g \to \sqrt g \lf( 1 + \epsilon h/2 + \ordn{2} \rt) $
    \item $ \pi \to \pi + \epsilon (p + \pi\cdot h) + \ordn{2} $
    \item $ \mean{\pi} \to \mean{\pi} + \epsilon \lf( \mean{\pi\cdot h} - 1/2 \mean{\pi}\mean{h} + \mean{p} \rt) $
    \item $ \sigma^{ab} = \pi^{ab} - \frac 1 3 \mean{\pi} g^{ab} \sqrt g \to \sigma^{ab} + \epsilon \lf[ p^{ab} - \frac 1 3 \lf( \lf( \delta \mean{\pi} - \mean{\pi} \frac h 2 \rt) g^{ab} - \mean{\pi} h^{ab} \rt) \sqrt g \rt] + \ordn{2} $
    \item $ \sigma\cdot\sigma \to \sigma\cdot\sigma + 2\epsilon \lf( \sigma\cdot \delta\sigma + \sigma\cdot g \cdot h \cdot \sigma \rt) + \ordn{2} $
    \item $ \pi\cdot \pi \to \pi\cdot \pi + 2 \epsilon \lf( \pi \cdot p + \pi\cdot g \cdot h \cdot \pi \rt) + \ordn{2} $
    \item $  R \to R + \epsilon \lf( G\cdot h + h^{ab}_{;ab} + \nabla^2 R \rt) + \ordn{2}$
\end{itemize}



\bibliographystyle{utphys}
\cleardoublepage 
\phantomsection  
\renewcommand*{\bibname}{References}

\addcontentsline{toc}{chapter}{\textbf{References}}

\bibliography{mach}


\end{document}